\newtheorem{definition} {Definition}
\newtheorem{lemma} {Lemma}
\newtheorem{theorem} {Theorem}
\newtheorem{corollary}{Corollary}
\newcommand{\G}[2]{\mbox{$\mathcal{G}(#1,#2)$}} % G(n,r)
\newcommand{\g}[2]{\mbox{$\mathcal{L}(#1,#2)$}}
\newcommand{\pr}[1]{\mbox{$\mathcal{P}\left[#1\right]$}}
\newcommand{\E}[1]{\mbox{$\mathcal{E}\left[#1\right]$}}
\newcommand{\B}[3]{\mbox{$\mathcal{B}_{#3}^{#1}(#2)$}}
\newcommand{\Boxes}[3]{\mbox{$Box_{#3}^{#1}(#2)$}}
\newcommand{\ceil}[1]{\mbox{$\left\lceil #1\right\rceil$}}
\newcommand{\floor}[1]{\mbox{$\left\lfloor  #1\right\rfloor$}}
\newcommand{\R}{\mathbb{R}}
\newcommand{\oft}[1]{\mbox{$#1^{(t)}$}}
\newcommand{\void}[1]{\mbox{$\mathcal{V}_{#1}$}}
\newcommand{\Gn}{\mbox{$\mathcal{G}_n$ }}
\newcommand{\Ln}{\mbox{$\mathcal{L}_n$ }}
\newcommand{\Hn}{\mbox{$\mathcal{H}_n$ }}
\begin{document}

\title{Hierarchical Routing over Dynamic Wireless Networks}
\author{
\authorblockN{Dominique Tschopp, Suhas Diggavi, and Matthias Grossglauser}
\authorblockA{\\School of Computer and Communication Sciences\\
Ecole Polytechnique F\'ed\'erale de Lausanne (EPFL)\\
1015 Lausanne, Switzerland\\
Email: Dominique.Tschopp@epfl.ch, Suhas.Diggavi@epfl.ch, Matthias.Grossglauser@epfl.ch
}
}
\maketitle
\begin{abstract}

Wireless network topologies change over time and maintaining routes requires frequent
updates. Updates are costly in terms of consuming throughput
available for data transmission, which is precious in
wireless networks. In this paper, we ask whether there
exist low-overhead schemes that
produce low-stretch routes. This is studied by using the underlying
geometric properties of the connectivity graph in wireless networks.
%For a class of models for mobile wireless network that fulfill some
%mild conditions on the connectivity and on mobility over the time of
%interest, we can design distributed routing algorithm that maintains
%the routes over a changing topology. This scheme needs only node
%identities and therefore integrates location service along with
%routing, therefore accounting for the complete overhead.  We analyze
%the worst-case (conservative) overhead and route-quality (stretch)
%performance of this algorithm for the aforementioned class of wireless
%network connectivity and mobility models.  In particular for these
%models, we show that our algorithm allows constant stretch routing
%with a network wide control traffic overhead of $O(n\log^2 n)$ bits
%per mobility time step (time-scale of topology change) translating to
%$O(\log^2 n)$ overhead per node (with high probability for wireless
%networks with such mobility model). Additionally, we can
%reduce the maximum overhead per node by using a load-balancing
%technique at the cost of a slightly higher average overhead.  We also
%demonstrate through numerics that these worst-case bounds are quite
%conservative in terms of the constants derived theoretically.

\end{abstract}

% A category with the (minimum) three required fields
%\category{H.4}{Information Systems Applications}{Miscellaneous}
%%A category including the fourth, optional field follows...
%\category{D.2.8}{Software Engineering}{Metrics}[complexity measures, performance measures]
%
%\terms{Delphi theory}
%
\begin{keywords}
distributed routing algorithms; wireless networks; geometric random graphs; competitive analysis; mobility
\end{keywords}
% NOT required for Proceedings

\newpage
\section{Introduction}
\label{sec:intro}
A major challenge in the design of wireless ad hoc networks is the
need for distributed routing algorithms that consume a minimal amount
of network resources. This is particularly important in dynamic
networks, where the topology can change over time, and therefore
routing tables must be updated frequently. Such updates incur control
traffic, which consumes bandwidth and power.  It is natural to ask
whether there exist low-overhead schemes for dynamic wireless networks
that could produce and maintain efficient routes. In this paper we
consider dynamically changing connectivity graphs that arise in
wireless networks. Our performance metric for the algorithms is the
average signaling overhead incurred over a long time-scale when the
topology changes continuously. We design a routing algorithm which can
cope with such variations in topology. We maintain efficient routes
from any source to any destination node, for each
instantiation\footnote{We assume inherently that the round-trip time
(RTT) of a packet from source to destination is much smaller than the
time-scale of topology change.}  of the connectivity graph. By
efficient, we mean that we want to guarantee that the route is within
a (small) constant factor, called {\em stretch} of the shortest path
length. In order to route to a destination, we need only the identity
of the destination and not its address {\em i.e., } the control traffic
to maintain the mapping between node identity and address/location is
incorporated into the overhead. Therefore, in the wireless routing
terminology, we have included the ``location service'' in the control
signaling requirement, and therefore hope to characterize the complete
overhead needed to maintain efficient routes.

In order to develop and analyze the routing algorithms we utilize the
underlying geometric properties of the connectivity graphs which arise
in wireless networks. This geometric property is captured by the {\em
doubling dimension} of the connectivity graph.  A graph induces a
metric space by considering the shortest path distance between nodes
as the metric distance.  The doubling dimension of a metric space is
the number of balls of radius $R$ needed to cover a ball of radius
$2R$. For example a Euclidean space has a low doubling dimension as
will be illustrated in Section \ref{sec:mod}. A metric space having a
low (constant independent of the cardinality of the metric space)
doubling dimension is called ``doubling''. We show that several
wireless network graphs (under conditions given in Section
\ref{sec:mod}) are doubling and therefore enable the design and
analysis of hierarchical routing strategies. In particular, it is not
necessary to have uniformly distributed nodes with geometric
connectivity for the doubling property to hold, as illustrated in
Figure \ref{fig:wall} in Section
\ref{sec:mod}.  Therefore, the doubling property has the potential to
enable us to design and analyze algorithms for a general class of
wireless networks.  Moreover, for a large class of mobility models,
the sequence of graphs arising due to topology changes are all
doubling (for specific wireless network models). Since there are only
``local'' connectivity changes due to mobility, there is a smooth
transition between these doubling graphs. We can utilize the locality
of topology changes to develop lazy updates methods to reduce
signaling overhead.

We show that several important wireless network models produce
connectivity graphs that are doubling. In particular, we show that the
geometric random graph with connectivity radius growing as $\sqrt{\log
n}$ with network size $n$; the fully connected regime of the dense or
extended wireless network with signal-to-interference-plus-noise ratio
(SINR) threshold connectivity; some examples of networks with
obstacles and non-homogeneous node distribution.  We define a sequence
of wireless connectivity graphs to be {\em smooth} if each of the
graphs is doubling and the shortest path distance between two nodes in
the graph changes smoothly (defined in Section \ref{sec:mod}). These
for mild regularity conditions on the mobility model.

Our main results in this paper are the following. (i) For smooth
geometric sequence of connectivity graphs, we develop a routing
strategy based on a hierarchical set of beacons with scoped
flooding. We also maintain cluster membership for these beacons in a
lazy manner adapted to the mobility model and doubling dimension. (ii)
We develop a worst-case analysis of the routing algorithm in terms of
total routing overhead and route quality (stretch). We show that we
can maintain constant stretch routes while having an average
network-wide traffic overhead of $O(n\log^2 n)$ bits per mobility time
step. The load-balanced algorithm would require $O(\log^3 n)$ bits per
node, per mobility time. Through numerics we show that the
theoretically obtained worst-case constants are conservative.

\subsection{Related Work}

Routing in wireless networks has been a rich area of enquiry over the
past decade or more. The two main paradigms for routing have been
geographic routing and topology based routing.  
Geographic routing (see for instance \cite{GPSR} and references therein) exploits
the inherent geometry of wireless networks, and bases routing decisions
directly on the Euclidean coordinates of nodes.
Their performance depends on how well the Euclidean coordinate system captures
the actual connectivity graph, and these approaches 
can therefore fail in the presence of node or channel inhomogeneity
(like in Figure \ref{fig:wall} in Section \ref{sec:mod}). Another important, but often overlooked, issue with geo-routing is that 
geographical positions of the nodes need to be stored and continuously updated in a distributed database in the network,
to allow sources of messages to determine the current position of the destination.
This database is called a \emph{location service} (see for instance \cite{GLS}) 
and must be regularly updated so that source nodes can query it. 
Location services typically rely on some a-priori knowledge 
of the geographical boundaries of the network.
This is necessary because these approaches typically establish a correspondence
(for example, through a hash function)
between a node identifier and one or several geographical locations where location
information about that node is maintained. 
An important feature of our work is that we consider the total overhead incurred
by the update and lookup operations of the location service, and the overhead
of the routing algorithm itself.

Topology based routing schemes (see \cite{Perkins97} and \cite{DSR}) 
do not utilize the underlying geometry of wireless
connectivity graphs, but instead compute routes based directly on that graph.
To reduce overhead, most of these schemes only establish routes {\em on demand} through
a route discovery operation, rather than continuously maintaining a route between
every pair of destinations; in this respect, they differ significantly from their
counterparts for the wired Internet (such as OSPF, IS-IS, and RIP).
Recently established routes are cached in order to allow their reuse by future messages.
In distance-vector based approaches (e.g., \cite{Perkins97}, this cached state resides
in the intermediate nodes that are part of a route, whereas in source-routing approaches
(e.g., \cite{DSR}), the cached state resides in the source of a route.
Despite such optimizations, topology-based approaches suffer from the large overhead
of frequent route discovery operations in large and dynamic networks.
This issue was, in fact, the reason why geo-routing approaches have reached prominence.
%It also has significant overhead requirements to maintain routes over a changing
%topology.  
%Routing was perhaps first proposed by Kleinrock in his
%seminal work \cite{Kleinrock77} for static wired networks.
%comment matt: that paper was really about hierarchical routing, not really relevant here i think

Two schemes that utilize the underlying geometry of graphs in {\em
static} wireless networks algorithms are the works presented in
\cite{Mobicom03} and the beacon vector routing (BVR) introduced in
\cite{BVR}. Both these schemes are heuristics which build a virtual
coordinate system over which routing takes place. They were shown to
work well through numerics. However, they utilize an external
addressing scheme to make a correspondence between addresses and
names. In \cite{tdgwINFOCOM}, routing on dynamic networks using a
virtual coordinate system was studied. For large scale dynamic
wireless networks, these heuristics pointed to significant advantages
to using some geometric properties for routing and addressing. These
results motivated the questions studied in this paper.
%comment matt: i don't fully understand this paragraphy, let's discuss

There has been a vast amount of theoretical research on efficient
routing schemes in wired ({\em i.e.,} static) networks (see for example
\cite{Gavoille01}). Most of this work has been focused on the
trade-off of memory (routing table size) and routing stretch. There
are two main variants of such routing schemes (i) {\em labeled} (or
{\em addressed}) routing schemes, where the nodes can be assigned
addresses so as to reflect topological information; (ii) {\em named}
routing, where nodes have arbitrary names, and as part of the routing,
the location (or address) of the destination needs to be obtained
(similar to a location service).  This examines the important question
of how the node addresses need to be published in the network.
Routing in graphs with finite doubling dimension has been of recent
interest (see
\cite{Konjevod2006}, and references therein). In particular
\cite{Talwar2004} showed that one could get constant stretch routing with
small routing table sizes for doubling metric spaces, when we use
labeled routing. This result was improved to make routing table sizes
smaller in \cite{Chan2005}. The problem of named routing over graphs with
small doubling dimension has been studied in \cite{Konjevod2006} and
\cite{Abraham2006}, and references therein. To the best of our knowledge,
there has been no prior work on {\em dynamic} graphs over doubling
metric spaces and on control traffic overhead. It is worth
pointing out that there is no direct correspondence between control
traffic and memory. Bounds on memory do not take into account the
amount of information which needs to be sent around in the network in
order to build routing tables. A good illustration is the computation
of the shortest path between two nodes $u$ and $v$ in a graph. While
it is sufficient for every node on the path between these two nodes to
have one entry for $v$ (of roughly $\log n$ bits \emph{i.e.,} the name
of the next hop), computing that shortest path requires a breadth
first search of the communication graph and leads to a control traffic
overhead of $O(n \log n)$ bits.

% Kleinrock hierarchical routing not for dynamic
% Compact routing for memory not for overhead
% Tree routing with random partitions - Talwar, SODA paper
% Unlabelled routing Kanjevod etc for doubling metrics.
% Dynamic networks Infocom 07 formulation, Virtual co-ordinates (Rao paper), Beacon vector routing.
% Geographic routing - GPRS/Wattenhoffer, bad stretch? 
% Grid location service gives node coordinates/location service, geo-routing pitfalls.
% classical adhoc routing (DSR/AODV) failure in dynamic networks.

% Differences - dynamic, without co-ordinates, no separate location service, addresses/locations
% depend on connectivity not on GPS. Provable bounds for general class.

\section{Models and Definitions}
\label{sec:mod}

A wireless network consists of a set of $n$ nodes spread across a
geographic area in the two-dimensional plane. We model the network
region as the square area $\left[0,\sqrt{n})\right.\times
\left[0,\sqrt{n})\right.$. The $n$ nodes move randomly in this area
and we denote by $x^{(t)}(u)$ the position of node $u$ at time $t$.
The connectivity between two nodes is represented by an edge on the
connectivity graph $\mathcal{G}_n^{(t)}$ if they can communicate
\emph{directly} over the wireless channel.  The connectivity between
two nodes depends on the distance between the two nodes (and could
also depend on the presence of other nodes, see Section
\ref{subsec:SINRmodel}). We consider that when a node $u$ transmits on
the wireless channel, it broadcasts to all its neighbors in the
connectivity graph $\mathcal{G}_n^{(t)}$. Consequently, one
transmission of a packet is sufficient for all direct neighbors to
receive that packet.  To make the notation lighter, we will only add
the dependence on time if it is necessary to avoid confusion. The
distance $d^{(t)}(u,v)$ between nodes $u$ and $v$ is the shortest path
distance between these nodes in $\mathcal{G}_n^{(t)}$. Note that
$d(.,.)$ is a metric on $\mathcal{G}_n^{(t)}$, \emph{i.e.,} the
distance between a node and itself is zero, the distance function is
symmetric and the triangle inequality applies. We will now define a
\emph{ball} of radius $R$ around a node $u$. It is simply the set of
nodes within distance $R$ of $u$. More formally, we can define it more
generally for any metric space as follows:
\begin{definition}
A \textit{Ball} $\B{(t)}{u}{R}$ around node $u$ at time $t$ in a
metric space $\mathcal{X}$ is the set $\left\{v\in
\mathcal{X}|d^{(t)}(u,v)\leq R\right\}$.
\end{definition}
In order to bound the control traffic overhead, we will recursively
subdivide the connectivity graph into balls. It will be crucial for us
to bound the number of balls of radius $R$ necessary to cover a ball
of radius $2R$ around some node $u$. In other words, we want to find
the smallest number of nodes $v_i$ such that all nodes within $2R$ of
$u$ are also within $R$ of some node $v_i$. The notion of doubling
dimension of a metric space captures this idea.
\begin{definition}
The \textit{doubling dimension} of a metric space $\mathcal{X}$ is the
smallest $\alpha$ such that any ball of radius $2R$ can be covered by
at most $\alpha$ balls of radius $R$, for all $R\geq
\min_{(u,v)}d(u,v)$\emph{ i.e.,} $\forall u\in X\ \exists\
S_u\subseteq\mathcal{X}, |S_u|\leq \alpha \mbox{ and }$
\[ 
\B{(t)}{u}{2R}\subseteq\bigcup_{j\in S_u}\B{(t)}{j}{R}
\]
\end{definition}
Moreover, if $\alpha$ is a constant, we have the following definition:
\begin{definition}[Doubling metric space]
We a metric space $\mathcal{X}$ is \emph{doubling} if its doubling dimension is a constant.
\end{definition}
A good way to illustrate and understand the concept of doubling
dimension and doubling metric space is to look at the metric space
defined by a set of points $\mathcal{X}$ in $\R^2$ with the Euclidean
distance. A ball of radius $2R$ around a point $x$ will simply be a
disc of radius $2R$ around this point. To cover this disc, we will
select a set of points such that all the surface is covered by the
corresponding set of discs of radius $R$. Note that the number of
discs required will not depend on R, and consequently this metric
space would be \emph{doubling} (see Figure \ref{fig:deu}).  Further, a
metric space is said to be \emph{doubling} if its doubling dimension
is a constant, independent of the number of nodes $n$.
\begin{figure}[htbp]
\centering
\includegraphics[width=0.6\textwidth,keepaspectratio]{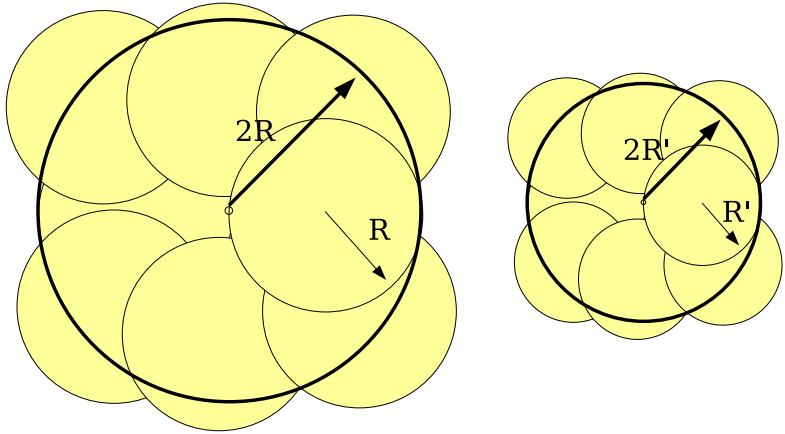}
\caption{The metric space defined by a set of points in $\R^2$ and the
Euclidean distance is doubling. Indeed, we can cover a disc of radius
$2R$ by a constant ($8$ in this case) number of discs of radius $R$,
whatever the value of $R$.}
\label{fig:deu}
\end{figure}   

In Section \ref{subsec:GnR}, we describe the geometric random graph model,
which will be the canonical model we will use to illustrate the ideas
of the paper.  We also give an example of a non-homogeneous network to
which our results can be applied. In Section \ref{subsec:SINRmodel},
we will develop the model where connectivity is determined by the
SINR, and we have uniform transmit power and full connectivity. We
give the requirements for the mobility model to result in a {\em
smooth} sequence of wireless network graphs in Section
\ref{subsec:MobSmooth}. We state the underlying assumptions and
give a table of notations in Section \ref{sub:Ass}.

\subsection{Geometric random graph}
\label{subsec:GnR}

We denote the geometric random graph by $\G{n}{r_n}$ and define its
connectivity as follows.
\begin{definition}
\label{def:GnR}
A random geometric graph $\G{n}{r_n}$ has an unweighted edge between
nodes $u$ and $v$ if and only if $||x(u)-x(v)||<r_n$, where $\{x(u)\}$
are chosen independently and uniformly in
$\left[0,\sqrt{n})\right.\times \left[0,\sqrt{n})\right.$.
\end{definition} 
In this paper we will be interested in fully connected geometric
random graphs, and therefore focus on the case $r_n>\sqrt{\log n}$
\cite{GK98}.  As a natural extension, we can also define a sequence of
random graphs $\mathcal{G}^{(t)}(n,r_n)$ with an unweighted edges
between $u$ and $v$ at time $t$ if
$||x^{(t)}(u)-x^{(t)}(v)||<r_n$. Whether each graph in the sequence
$\mathcal{G}^{(t)}(n,r_n)$ corresponds to a random geometric graph as
in Definition \ref{def:GnR}, depends on the mobility model for the
nodes. We discuss this in more detail in Section
\ref{subsec:MobSmooth}.

In Figure \ref{fig:wall}, we illustrate a non-homogeneous random
network where connectivity is not completely geometric as in
Definition \ref{def:GnR}. An obstacle prevents communication between
neighboring nodes, and therefore illustrates the complexities of
wireless network connectivity. This example is revisited in Section
\ref{sec:npp}, where we show that though this connectivity graph is
more complicated than $\G{n}{r_n}$, it is still doubling, and
therefore the algorithms developed in this paper are applicable. This
illustrates the advantage of our approach to network modeling.

\begin{figure}[htbp]
\centering
\includegraphics[width=0.3\textwidth,keepaspectratio]{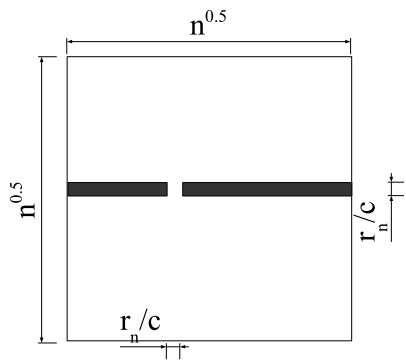}
\caption{$n$ nodes are distributed uniformly at random on a square
area of side $\sqrt{n}$. A wall of width $r_n/c$ is added, which only
has a small hole in the middle. Again, we assume $r_n>\sqrt{\log
n}$. Nodes cannot communicate through the wall. Finally, we remove the
nodes below the wall, which leads to an inhomogeneous node
distribution.}
\label{fig:wall}
\end{figure}   
\subsection{SINR full connectivity}
\label{subsec:SINRmodel}

Since the wireless channel is a shared medium, the transmissions between nodes
interfere with each other. However, the signal strength decays as a function of
the distance traveled, and therefore we can define the SINR for transmission
from node $u$ to $v$ as,
\begin{equation}
\label{eq:SINR}
\mathrm{SINR} = \frac{P_n||x(u)-x(v)||^{-\beta}}{N_0+\sum_{w\neq
u,v}P||x(w)-x(v)||^{-\beta}},
\end{equation}
where $\beta$ is a distance loss (decay) parameter depending on the
propagation environment, $P_n$ is the common transmit power of the nodes
and $N_0$ is the noise power. We can of course easily adapt this to
have power control for the nodes. A transmission is successful if the SINR is above some constant threshold value $\varsigma$. For static nodes, just as in the
case of geometric random graph, we assume that the node locations
$\{x(u)\}$ are chosen independently and uniformly in
$\left[0,\sqrt{n})\right.\times \left[0,\sqrt{n})\right.$. This model for wireless networks has been extensively studied in the literature (see \cite{GK00,Kulkarni2002}). The authors base their analysis of the capacity of wireless networks on a TDMA scheme for the SINR connectivity
model of (\ref{eq:SINR}). We argue here that the structure of the resulting
connectivity graph is identical to that of $\G{n}{r_n}$, for
$r_n>\sqrt{\log n}$. Therefore, the results we prove for $\G{n}{r_n}$,
would also be applicable to such graphs. In practice, it is a non-trivial task to design a distributed scheduling protocol (MAC layer protocol) that mimics the behavior of this TDMA scheduler. However, these MAC layer implementation issues are far beyond the scope of this document (see for instance \cite{sigshah}). We only make the argument here that the connectivity graph resulting from such a TDMA scheme would yield the same behavior as a $\G{n}{r_n}$.  
\par 
We will subdivide the network into small squares of side $s_n=\frac{r_n}{c}$. We need to show that if two nodes $u$ and $v$ are in neighboring small squares (and so have the guarantee that they can communicate under the $\G{n}{r_n}$ model as we will see in the sequel), then there exists a TDMA scheme that allows them to communicate under the SINR connectivity
model of (\ref{eq:SINR}). If this is the case, then we can apply the same proof techniques for both models. We let the maximum transmission power grow in the same way as we did for the $\G{n}{r_n}$ model\footnote{Note that the $\G{n}{r_n}$ model corresponds to the SNIR model without interferences. Indeed, if we remove interferences, two nodes can communicate whenever $\frac{P_n ||x(u)-x(v)||^{-\beta}}{N_0}>\varsigma$ for some threshold value $\varsigma$. Hence, two nodes can communicate whenever $||x(u)-x(v)||<(\frac{P_n}{N_o \varsigma})^{1/\beta}$. In particular, we let $P_n=(N_o\varsigma r_n)^\beta$.} \emph{i.e.} $P_n\leq(N_o\varsigma r_n)^\beta $. Additionally, we want to design a TDMA scheme such that the capacity of all links is at least $O(\frac{1}{\log n})$ $\left[bits/sec\right]$. It can be shown (see \cite{ballsbins}) that every small square contains at most $O(\log n)$ nodes. Hence, we ask that the traffic can flow at constant rate independent of $n$ between neighboring small squares, and that each node is treated equally. Note that this requirement is very similar to the scheme proposed in \cite{GK00} in which one node per small square can transmit at constant rate to any neighboring square\footnote{The throughput achieved by this scheme is $O(\frac{1}{\sqrt{n \log n}})$ $\left[bits/second/node\right]$ when $n$ source destination pairs are chosen uniformly at random.}. 
\begin{theorem}
There exists a TDMA scheme such that all nodes can communicate with any node located in a neighboring small square at a rate of $O(\frac{1}{\log(n)})$ $\left[bits/sec\right]$. Hence, the aggregate traffic can flow between neighboring small squares at a constant rate independent of $n$. 
\end{theorem}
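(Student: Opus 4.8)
The plan is to build a spatial TDMA schedule with a fixed, $n$-independent degree of frequency reuse, and then to bound the aggregate interference at each receiver using the fast path-loss decay ($\beta>2$). First I would color the grid of small squares with a repeating $k\times k$ pattern of $K=k^2$ colors, where $k$ is a constant fixed at the end of the argument; by construction any two squares of the same color are separated by at least $k s_n$ in each coordinate. The schedule cycles through the $K$ colors, and during a color's turn only the squares of that color may hold an active transmitter, so each square is active a $1/K$ fraction of the time. Since each small square contains at most $O(\log n)$ nodes (by the balls-and-bins estimate already invoked), I would split each color's active period into $O(\log n)$ equal sub-slots and assign a distinct node as transmitter in each. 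Every node then receives a transmission opportunity during an $\Omega(1/(K\log n))=\Omega(1/\log n)$ fraction of time, and while active it may broadcast to any node of a neighboring square.

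The core of the proof is to show that during any active sub-slot the intended link clears the threshold $\varsigma$. For the \emph{signal}, a transmitter $u$ and a receiver $v$ in an adjacent square satisfy $\|x(u)-x(v)\|\le C_1 s_n$ for an absolute constant $C_1$ (the diagonal span of two neighboring squares), so the received power is at least $P_n(C_1 s_n)^{-\beta}$. For the \emph{interference}, the simultaneously active co-color transmitters lie on a sublattice of spacing $k s_n$; grouping them into rings about $v$, the $m$-th ring holds $O(m)$ such squares, each at distance at least $m k s_n$ (up to a constant) and hence contributing at most $P_n(m k s_n)^{-\beta}$. Summing over rings,
\[
I(v)\;\le\;P_n (k s_n)^{-\beta}\sum_{m\ge 1}O(m)\,m^{-\beta}\;=\;P_n(k s_n)^{-\beta}\,O(1),
\]
where the series converges precisely because $\beta>2$ (the standard path-loss assumption).

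Combining the bounds gives
\[
\mathrm{SINR}\;\ge\;\frac{P_n(C_1 s_n)^{-\beta}}{N_0+P_n(k s_n)^{-\beta}\,O(1)}.
\]
The useful structural fact is that the interference-to-signal ratio is $I(v)/\bigl(P_n(C_1 s_n)^{-\beta}\bigr)\le (C_1/k)^{\beta}\,O(1)$, which depends only on the reuse constant $k$ and not on $c$, while the signal-to-noise ratio $P_n(C_1 s_n)^{-\beta}/N_0$ depends only on $c$ through $s_n=r_n/c$ and the calibrated power $P_n$. I can therefore decouple the two: first pick $c\ge C_1$ large enough that $C_1 s_n\le r_n$, so that by the footnote's power scaling the signal alone meets (indeed exceeds) the threshold against noise; then pick the constant $k$ large enough that the residual interference is at most the noise level. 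With both choices in force, $\mathrm{SINR}\ge\varsigma$ for every active link, so each link sustains a constant rate $\rho>0$ while active.

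Finally, combining the per-slot rate with the schedule, each node communicates with any neighboring-square node at rate $\rho\cdot\Omega(1/\log n)$, i.e.\ of order $1/\log n$ bits/sec as claimed. Aggregating over the $O(\log n)$ nodes in a square, the total rate exchanged between two neighboring squares is $\Theta(1)$, constant in $n$. I expect the interference summation to be the only delicate point: verifying its convergence (which forces $\beta>2$) and choosing $c$ and $k$ simultaneously so that the signal dominates both the noise and the residual interference. Once the SINR bound is established, the scheduling and the per-node time-share are routine.
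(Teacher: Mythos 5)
Your proposal is correct and follows essentially the same route as the paper's proof: a $k^2$-color spatial reuse schedule, a ring-by-ring bound on co-channel interference whose sum converges because $\beta>2$, a constant lower bound on the received signal from an adjacent square, and time-sharing among the $O(\log n)$ nodes per square to get the per-node rate $\Omega(1/\log n)$. The only difference is cosmetic — the paper fixes $k=4$ and computes the constants explicitly, whereas you keep $k$ and $c$ as free constants chosen at the end to dominate noise and interference separately, which is if anything slightly cleaner.
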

\begin{proof}
We take a coordinate system, and label each square with two integer coordinates. Then we take an integer $k$,
and consider the subset of squares whose two coordinates are a multiple of $k$ (see Figure \ref{fig:TDMA}). By translation, we can construct $k^2$ disjoint equivalent subsets. This allows us
to build the following TDMA scheme: we define $k^2$ time slots, during which all nodes from a
particular subset are allowed to transmit for the same duration of $O(\frac{1}{\log n})$ seconds. Each small square contains at least one and at most $O(\log n)$ nodes w.h.p. (see \cite{ballsbins} and the proof of Theorem \ref{lem:seqcg}). We assume also that at most one node per square transmits at
the same time, and that they all transmit with the same power $P_n$.
\begin{figure}[htbp]
\centering
\includegraphics[width=0.8\columnwidth,keepaspectratio]{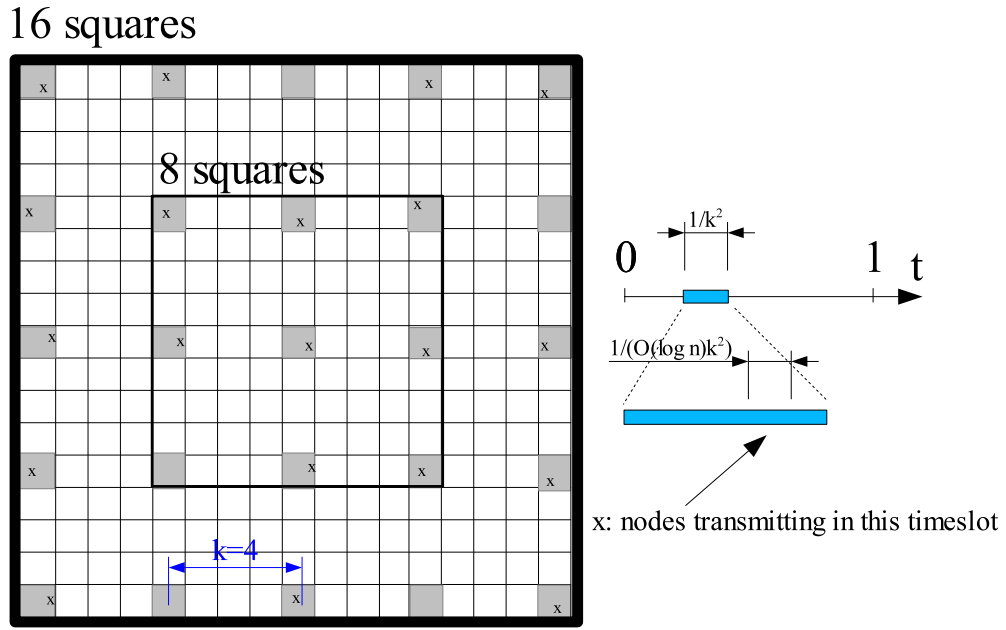}
\caption{Illustration of the TDMA scheduling scheme.}
\label{fig:TDMA}
\end{figure}   
Let us consider one particular square. We suppose that the transmitter in this square transmits towards a
destination located in a square at distance at most $1$. We compute the signal-to-interference ratio
at the receiver. First, we choose the number of time slots $k^2$ as follows: $k = 4$.
To find an upper bound to the interferences, we observe that with this choice, the transmitters in the
$8$ first closest squares are located at a distance at least $3$ (in small squares) from the receiver (see
left-hand side of Figure \ref{fig:TDMA}). This means that the Euclidean distance between the receiver and
the $8$ closest interferers is at least $2 s_n$. The $16$ next closest squares are at distance at least
$7$ (in small squares), and the Euclidean distance between the receiver and the $16$ next interferers
is therefore at least $6s_n$, and so on. The sum of the interferences $I$ can be bounded as follows:
\[
\begin{array}{ll}
I&\leq \sum_{i=1}^{\infty}8i P_n\left[2 s_n (2i-1)\right]^{-\beta}\\
&= P_n \left[2 s_n\right]^{-\beta} \sum_{i=1}^{\infty}8i \left[(2i-1)\right]^{-\beta}\\
&= (N_o\varsigma r_n)^{\beta} \left[2 \frac{r_n}{c}\right]^{-\beta} \sum_{i=1}^{\infty}8i \left[(2i-1)\right]^{-\beta}
\end{array}
\]
This term clearly converges if $\beta>2$. Now we want to bound from below the strength of the signal received from the transmitter. We observe first that the distance between the transmitter and the receiver is at most $\sqrt{2(s_n^2)}\leq 2s_n$. The strength $S$ of the signal at the receiver can thus be bounded by 
\[
\begin{array}{ll}
S&\geq P_n \min\left\{1,2s_n^{-\beta}\right\}\\
&= O(1)
\end{array}
\]
Finally, we obtain the following bound on the SINR: $SINR\geq \frac{S}{N_o + I}$. As the above expression does not depend on $n$, the theorem is proven.
\end{proof}

\subsection{Uniform speed-limited (USL) mobility} 
\label{subsec:MobSmooth}
Nodes are mobile and move according to the uniform speed-limited (USL) model, 
a fairly general mobility model defined next.
The USL model essentially embodies two conditions: (i) the node distribution
at every time step is uniform over the network domain, and (ii) the distance
a node can travel over a time step is bounded. 
We restrict ourselves to the case in which the maximum
speed is not dependent on $n$. In practice, of course, such an
assumption is realistic since the maximum speed of the nodes will not
increase when new nodes join the network.
\begin{definition}
A collection of $n$ nodes satisfy
the uniform speed-limited (USL) mobility model if the following
two conditions are satisfied:
\begin{enumerate}
\item At every time $t$, the distribution of nodes over the network
domain is uniform;
\item For every node $u$ and time $t$, the distance traveled in
the next time step is bounded, i.e., $||x^{(t+1)}(u)-x^{(t)}(u)|| < S$.
\end{enumerate}
%$x^{(t)}(u)$ at time $t$ moves to a position
%$x^{(t+1)}(u)=x^{(t)}(u)+s^{(t)}$ at time $t+1$. 
%The vector $s^{(t)}$
%is drawn from a symmetric probability distribution such
%that $0<||s^{(t)}||<S$, where $S$ is a constant (bounded speed), and
%$\pr{s^{(t)}}=\pr{s'^{(t)}}$ if $||s^{(t)}||=||s'^{(t)}||$. Finally,
%we consider that nodes bounce off the borders of the square area
%(billiard model).
\end{definition}

The USL mobility model is quite general. For example, it includes the
following cases: (i) The nodes perform independent random walks with  bounded one-step
displacement. The random walks can be biased, and the displacement distribution does not
need to be homogeneous over the node population. We have to assume that the nodes operate
in the stationary regime. (ii) The nodes follow the random waypoint model (RWP). The system has to be in the stationary
regime. (iii) The generalized random direction models from \cite{Sharma06}, which interpolate
between the random walk and the random waypoint cases, through a control parameter that
can be viewed as the "locality" of the mobility process. (iv) We can also allow for models where nodes do not move independently. As an illustrative
example, assume we uniformly place nodes on the square; the nodes then move in lockstep
according to any speed-limited mobility process, maintaining their relative positions to each
other. Observe that the uniform distribution is maintained for all time steps (note that we
move on a torus), and that the speed-limited property is true by definition.
\par
We see that the USL class of mobility models is fairly general, and includes many of 
the models that have been proposed in the literature. For simplicity, we consider that time is discrete. In other words, we
look at a snapshot of the network every $\Delta T$ seconds. At every
time step, the connectivity between nodes will be modified. Hence, we
will work with a sequence of connectivity graphs. In order to design a
routing algorithm with a low control traffic overhead, we will need to
understand how fast the graph distances between nodes can evolve over
time. In particular, consider two nodes $u$ and $v$ at distance
$d=d^{t}(u,v)$ at time $t$. We want to bound the multiplicative factor
by which this distance can change in $\kappa$ time steps. Formally, we
define $\kappa(\tau,d)$ as follows:
\begin{definition}
\label{def:kappa}
We say that a communication network is $\kappa(\tau,d)$-smooth if the
shortest path distance between any two nodes $u$ an $v$ at shortest
path distance $d$ cannot change by more than a factor $\kappa(\tau,d)$
in $\tau$ time steps \emph{i.e., }$\forall u,v$, we have:
\[\max\left\{\frac{d^{(t)}(u,v)}{d^{(t+\tau)}(u,v)},\frac{d^{(t+\tau)}(u,v)}{d^{(t)}(u,v)}\right\}\leq\kappa(\tau,d)\] 
\end{definition}
Additionally, we simply say that the network is \emph{$\kappa$-smooth}
if there exists a constant $\nu$ such that $\kappa(\nu
d,d)\leq\kappa(\nu)=\kappa$ independently of $d$. In this case, the
distances grow at the same speed at all scales. In the sequel, we will
bound $\kappa$ and $\nu$ for our model. This USL property holds for a general class of {\em random trip}
mobility models studied in \cite{LeBoudec2005}, where it is shown that
the stationary distribution of such mobility models is uniform and
ergodic. We restate this theorem without proof.
\begin{theorem}(\cite{LeBoudec2005}) The random-trip mobility model
has uniform stationary distribution on $[0,a)\times[0,a)$.
\end{theorem}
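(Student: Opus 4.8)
Since the statement merely restates a result of \cite{LeBoudec2005}, the plan is to reconstruct its proof in the self-contained setting of the domain $[0,a)\times[0,a)$, whose half-open convention I read (as item~(iv) does explicitly) as the torus $\mathbb{T}^2$. The random-trip model is a regenerative Markov process: the full state of a node is its position together with a \emph{trip phase} (current origin, destination, speed, and elapsed time along the current leg), and at a discrete sequence of \emph{trip-transition epochs} the node draws a fresh leg from a fixed selection kernel, moving deterministically in between. The first step is therefore to set this process up rigorously and to argue that the phase-plus-position chain is stationary and ergodic with a unique stationary regime; this is where the regularity hypotheses on the kernel (finite mean trip duration and irreducibility/Harris recurrence of the embedded chain) enter.

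Given that a stationary regime exists, the core computation is the spatial marginal via Palm calculus. Writing $\lambda$ for the intensity of trip-transition epochs and $\mathbb{E}^{0}$ for expectation under the associated Palm measure, the time-stationary position law $\pi$ is recovered from the inversion formula
\[
\pi(A)=\lambda\,\mathbb{E}^{0}\!\left[\int_{0}^{T_{1}}\mathbf{1}\{X_{t}\in A\}\,dt\right],
\]
where $T_{1}$ is the duration of the leg begun at the tagged epoch and $X_{t}$ is the position along it. Thus the stationary density at a point is proportional to the expected sojourn time accumulated there per trip, and proving uniformity reduces to showing this expected per-area sojourn is constant over the domain.

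To close the reduction I would lean on symmetry rather than a direct integral evaluation. On the torus the translation group acts freely and transitively on positions, and the random-trip selection kernel (straight-line geodesic legs between endpoints chosen by a translation-invariant rule, with speeds independent of position) is translation-invariant; hence the right-hand side of the inversion formula is unchanged under translating $A$, which forces $\pi$ to be the uniform (Haar) measure. This also explains why the plain random-waypoint law on a \emph{bounded square} is non-uniform while the torus version is uniform: it is exactly the boundary that breaks translation invariance. Equivalently, one can verify uniformity directly as a fixed point of the one-step position map by a continuity-equation/detailed-balance check showing that the uniform density induces zero net probability flux across the boundary of any region.

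The main obstacle is not this symmetry step but the first one: making the stationary regime well-posed and justifying the Palm inversion. One must confirm that trip durations have finite positive mean (so $0<\lambda<\infty$) and that the embedded chain of trip-transition states admits a unique invariant law, since otherwise ``the stationary distribution'' is not defined. The genuinely delicate point is the behaviour of a leg at the domain boundary (reflection versus wraparound); adopting the torus topology suggested by the half-open domain and the lockstep construction of item~(iv) removes this subtlety entirely and lets the translation-invariance argument finish the proof cleanly.
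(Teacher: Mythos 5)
The paper offers no proof to compare against: it explicitly states that this theorem is restated \emph{without proof} from \cite{LeBoudec2005}, so the only meaningful benchmark is the argument in that reference. Your sketch does follow the methodology of that source --- model the node as a regenerative process over trip-transition epochs, verify Harris recurrence of the embedded phase chain and finiteness of the mean trip duration so that a unique time-stationary regime exists, and then recover the spatial marginal via the Palm inversion formula. That is the right skeleton, and your observation that the whole question reduces to whether the expected per-trip sojourn density is constant is exactly the reduction the reference performs. Where you depart from the source is the final step: you replace the explicit evaluation of the sojourn integral by a translation-invariance (Haar measure) argument on the torus. That shortcut is valid \emph{only} for the wrap-around variants of the model, and you are right to flag this --- indeed it exposes the loosest point of the statement as the paper uses it, since the classical random waypoint on a bounded square (listed by the paper as an instance of USL mobility) has a provably \emph{non-uniform} stationary position law, precisely because the boundary breaks translation invariance. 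So your proposal is a sound plan for the torus reading of $[0,a)\times[0,a)$, but it is a plan rather than a proof: the Harris-recurrence and finite-mean-trip hypotheses are named but not verified, and the translation-invariance of the trip-selection kernel (which fails for straight-line legs drawn in the square rather than geodesics on the torus) is asserted rather than checked for each model the paper wants to cover.
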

\subsection{Assumptions}
\label{sub:Ass}
We consider that a time step $\Delta T$ is much larger than the round
trip time (RTT) through the network \emph{i.e., } the time scale for
mobility is much larger than the time scale for communications. For
clarity and in order to simplify the analysis, we will make the
assumption that nodes can communicate instantaneously through the
network. We also make the assumption that there is a random
permutation $\pi$ on the nodes, and that all nodes in the network know
their rank in the permutation. In Section \ref{sec:impl} we will then
drop these assumptions and consider practical aspects of the
implementation. Finally, we say that a result holds with high
probability (w.h.p.) if it holds with probability at least
$(1-O(\frac{1}{n^{\rho}}))$, for some constant $\rho>0$. In Table
\ref{tab:not}, we summarize the notations used in this paper.
\begin{table}[htb]
\centering
\begin{tabular}{|c|c|}
\hline $x^{(t)}(u)$&Position of node $u$ at time $t$\\ $d^{(t)}(u,v)$&
Shortest path distance from $u$ to $v$ at time $t$\\ $r_n$&Wireless
communication radius\\ \G{n}{r_n}&Random geometric graph\\
\B{(t)}{u}{R}&Ball of radius $R$ around $u$\\
$\kappa(\tau,d)$&$\max\left\{\frac{d^{(t)}(u,v)}{d^{(t+\tau)}(u,v)},\frac{d^{(t+\tau)}(u,v)}{d^{(t)}(u,v)}\right\}\leq\kappa(\tau,d)$\\\hline
\end{tabular}
\caption{Table of notations}
\label{tab:not}
\end{table}

\section{Network Properties}
\label{sec:npp}

In this section, we prove some properties of the network models
presented in Section \ref{sec:mod}, which are necessary to analyze
the performance of our algorithm. We focus our attention on the
geometric random graph $\G{n}{r_n}$, but all the arguments can be
extended to the SINR full connectivity model with TDMA scheduling,
discussed in Section \ref{subsec:SINRmodel}. In particular, for
$\G{n}{r_n}$, we now consider the case in which the communication
radius $r_n$ is such that
$r_n=\sqrt{(1+\epsilon)\log{n}}>\log^{1/2}n$, where $\epsilon>0$.

\par For uniform speed-limited (USL) mobility models discussed in
Section \ref{subsec:MobSmooth}, at each time, the node locations
$\{x^{(t)}(u)\}$ have a distribution that is uniform over
$\left[0,\sqrt{n})\right.\times \left[0,\sqrt{n})\right.$. Therefore,
we now discuss the property of a sequence of geometric random graphs,
$\mathcal{G}^{(t)}(n,r_n)$, under USL mobility model. We subdivide the network area on
which the nodes live into smaller squares of side $\frac{r_n}{c}$,
where $c$ is a constant chosen such that nodes in neighboring squares
are connected (see Fig. \ref{fig:squarel})
and that an integer number
of squares fit into the network area.
\begin{figure}[htbp]
\centering
\includegraphics[height=3cm,keepaspectratio]{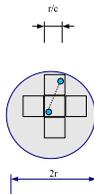}
\caption{Nodes in neighboring squares are connected}
\label{fig:squarel}
\end{figure}   
We arbitrarily set $c=\sqrt{5}$. We number the small squares from $1$
to $m=\frac{n}{(r_n/c)^2}=\frac{nc^2}{(1+\epsilon)\log{n}}$ and denote
by $E_i$ the event that small square $i$ does not contain any node, in a
sequence of length $n^{\rho}$ time steps, for some constant $\rho$. In
the next theorem, we show that when nodes move according to USL
mobility model, all small squares will be populated w.h.p. 
\begin{theorem}
\label{lem:seqcg}
There exists a constant $\rho$ such that if we divide the network into
small square of side $\frac{r_n}{c}$ (with $r_n>\sqrt{\log{n}}$), at every time step in a sequence of length $n^{\rho}$, 
every small square contains at least one node w.h.p. 
\end{theorem}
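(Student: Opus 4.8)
The plan is to bound the probability that a fixed small square is empty at a fixed time step, then apply a union bound over all $m$ squares and all $n^{\rho}$ time steps. Since the USL mobility model guarantees that at every time $t$ the node locations $\{x^{(t)}(u)\}$ are uniformly distributed over $\left[0,\sqrt{n})\right.\times\left[0,\sqrt{n})\right.$, at any single time step the situation is exactly that of throwing $n$ balls (nodes) independently and uniformly into $m$ bins (small squares). I would therefore reduce the claim to a standard balls-and-bins / occupancy estimate, which is what the reference \cite{ballsbins} supplies.

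First I would fix a time step $t$ and a small square $i$. Each small square has area $(r_n/c)^2 = (1+\epsilon)\log n / c^2$, and so the probability that a given node lands in square $i$ is $p = (r_n/c)^2 / n = (1+\epsilon)\log n/(c^2 n)$. Because the $n$ node positions are independent, the event $E_i^{(t)}$ that square $i$ is empty at time $t$ has probability $(1-p)^n \le e^{-pn} = e^{-(1+\epsilon)\log n/c^2} = n^{-(1+\epsilon)/c^2}$. With $c=\sqrt{5}$ this gives $\pr{E_i^{(t)}} \le n^{-(1+\epsilon)/5}$; the key point is simply that the per-square empty probability is polynomially small in $n$, with an exponent I can make as large as I like relative to the union-bound cost by the choice of constants.

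Next I would take a union bound. There are $m = nc^2/((1+\epsilon)\log n) = O(n/\log n)$ squares and $n^{\rho}$ time steps, so the probability that \emph{some} square is empty at \emph{some} time in the sequence is at most $m\,n^{\rho}\,n^{-(1+\epsilon)/c^2} = O\!\left(n^{1+\rho-(1+\epsilon)/c^2}/\log n\right)$. The argument closes provided the exponent $1+\rho-(1+\epsilon)/c^2$ is strictly negative, i.e. provided the emptiness exponent beats the number of trials. The main obstacle — really the only delicate point — is that with the stated constants $c=\sqrt 5$ and modest $\epsilon$ the exponent $(1+\epsilon)/5$ is barely above $1/5$, which does \emph{not} exceed $1+\rho$. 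This signals that the intended radius is larger than the bare $\sqrt{\log n}$: one must use $r_n=\sqrt{(1+\epsilon)\log n}$ with $\epsilon$ (equivalently the area constant) chosen large enough, or argue that because nodes are indistinguishable the effective exponent is governed by the occupancy bound of \cite{ballsbins} rather than the crude $n^{-(1+\epsilon)/c^2}$. The resolution is to absorb the slack by taking the leading constant in the square side, or $\epsilon$, large enough that $(1+\epsilon)/c^2 > 1+\rho$; then the union bound yields a failure probability of $O(n^{-\rho'})$ for some $\rho'>0$, which is exactly the w.h.p. guarantee required.

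Finally I would note that the same estimate simultaneously yields the companion bound used elsewhere (e.g. in the TDMA theorem), namely that every small square contains at most $O(\log n)$ nodes w.h.p.: this is the standard maximum-load bound for $n$ balls in $\Theta(n/\log n)$ bins, again from \cite{ballsbins}, and it holds over the whole length-$n^{\rho}$ sequence by the identical union bound. This establishes that w.h.p. every small square is nonempty (and not overloaded) at every one of the $n^{\rho}$ time steps, completing the proof.
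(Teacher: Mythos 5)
Your argument is the same as the paper's: bound the probability that a fixed square is empty at a fixed time by $(1-1/m)^n\le e^{-n/m}$, then union-bound over the $m$ squares and the $n^{\rho}$ time steps and choose $\rho$ small enough. The ``delicate point'' you isolate is genuine, and the paper's own computation trips over it: for squares of side $r_n/c$ one has $n/m=(1+\epsilon)\log n/c^2$, so the per-square empty probability is $n^{-(1+\epsilon)/c^2}=n^{-(1+\epsilon)/5}$, exactly as you compute; the paper, however, starts from $m=n/r_n^2$ and effectively places the factor $c^2$ in the numerator of the exponent, which lets it bound the empty probability by $n^{-(1+\epsilon)}$ and close the union bound. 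With the correct exponent the union bound gives $O\bigl(n^{1+\rho-(1+\epsilon)/5}/\log n\bigr)$, which does not vanish for small $\epsilon$; indeed the expected number of empty squares of side $r_n/\sqrt{5}$ diverges unless $\epsilon>4$, so the claim as stated with $c=\sqrt{5}$ and arbitrary $\epsilon>0$ cannot be proved this way. Your proposed repair --- enlarging the leading constant in $r_n$ (equivalently taking $\epsilon$ large enough that $(1+\epsilon)/c^2>1+\rho$) --- is the right one and is what the theorem implicitly requires; it leaves the downstream uses of the result (connectivity of adjacent squares, the doubling property, the smoothness bounds) unaffected. The auxiliary $O(\log n)$ occupancy bound you mention at the end is not part of the paper's proof of this theorem, but your union-bound extension of it over the sequence is consistent with how the paper invokes it elsewhere.
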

\begin{proof}
Consider a sequence of length $Z=n^{\rho}$. Denote by $E_i^{(j)}$ the
event the small square $i$ is empty at time $j$. Let
$m=\frac{n}{r_n^2}$. We can compute:
\[
\begin{array}{ll} 
\pr{\bigcup_{j=1}^{Z}\bigcup_{i=1}^{m}E_i^{(j)}}&\leq Z\sum_{i=1}^{m}\pr{E_i^{(j)}}\\
&=Z\sum_{i=1}^{m}(1-\frac{1}{m})^n\\
&\leq Z \sum_{i=1}^{m}e^{-\frac{n}{m}}\\
&=Z\frac{nc^2}{(1+\epsilon)\log{n}}e^{-\frac{nc^2(1+\epsilon)\log{n}}{n}}\\
&\leq Z\frac{nc^2}{(1+\epsilon)\log{n}}\frac{1}{n^{(1+\epsilon)}}\\
&=Z\frac{c^2}{(1+\epsilon)n^{\epsilon}\log{n}}\\
&\leq O(\frac{n^{\rho}}{n^{\epsilon}})\\
&=O(\frac{1}{n^{\epsilon-\rho}})
\end{array}
\]
We can now choose $\rho$ such that $\epsilon-\rho>0$ and the result follows.
\end{proof}
It is immediate that a in single instantiation of the connectivity graph (\emph{i.e., }time step),
every small square is populated w.h.p.
\begin{corollary}
\label{lemma:fs}
With probability at least $(1-O(\frac{1}{n^{\epsilon}}))$, there is no empty small square in a sequence of length $1$.
\end{corollary}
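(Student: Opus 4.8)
The plan is to read this off as the $Z=1$ specialization of Theorem \ref{lem:seqcg}. In that theorem the failure event is a union over both the $m$ small squares and the $Z=n^{\rho}$ time steps, and the multiplicative factor $Z$ is exactly what degrades the final bound to $O(1/n^{\epsilon-\rho})$. For a sequence of length one this factor disappears, so the identical computation should yield the cleaner bound $O(1/n^{\epsilon})$. Concretely, I would let $E_i$ be the event that square $i$ is empty in the single instantiation of $\mathcal{G}_n$, with $i$ ranging over the $m$ squares of side $r_n/c$, and start from the union bound $\pr{\bigcup_{i=1}^{m}E_i}\leq\sum_{i=1}^{m}\pr{E_i}$.

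The next step is to reuse the per-square estimate from the proof of Theorem \ref{lem:seqcg}. Under the USL model the node positions are marginally uniform over the domain, so a fixed square contains no node with probability $(1-\frac{1}{m})^{n}\leq e^{-n/m}$. With $r_n^2=(1+\epsilon)\log n$ this makes each $\pr{E_i}$ at most $n^{-(1+\epsilon)}$, exactly the bound appearing on the penultimate line of that proof. Summing over the $m=O(n/\log n)$ squares then gives
\[
\pr{\bigcup_{i=1}^{m}E_i}\leq m\,n^{-(1+\epsilon)}=O\!\left(\frac{1}{n^{\epsilon}\log n}\right)=O\!\left(\frac{1}{n^{\epsilon}}\right),
\]
and taking the complement yields that no small square is empty with probability at least $1-O(1/n^{\epsilon})$, which is precisely the claim.

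I do not expect a genuine obstacle here, since the corollary is strictly weaker than Theorem \ref{lem:seqcg}; the only point requiring care is bookkeeping. Specifically, I would verify that restricting to a single snapshot really removes the multiplicative $Z=n^{\rho}$ (rather than merely a single additive term), so that the surviving exponent is $\epsilon$ and not $\epsilon-\rho$, and that the constants hidden in $m$ and in the side-length factor $c$ do not erode this exponent. Both are immediate from the displayed calculation in the theorem, so the statement follows directly by setting $Z=1$.
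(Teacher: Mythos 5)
Your proposal is correct and matches the paper's own argument: the paper likewise obtains this corollary by specializing Theorem \ref{lem:seqcg} to a sequence of length $Z=1$, so that the multiplicative factor $n^{\rho}$ disappears and the union bound over the $m=O(n/\log n)$ squares, each empty with probability at most $e^{-n/m}\leq n^{-(1+\epsilon)}$, gives the stated $O(1/n^{\epsilon})$ failure probability. Your explicit check that the $c^2$ and $\log n$ factors cannot erode the exponent is exactly the bookkeeping the paper leaves implicit.
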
 
\iffalse
\begin{proof}
Follows directly from Theorem \ref{lem:seqcg} when we take a sequence
of length $n^{\rho}=1$.
\end{proof}
\fi

We are now ready to show that at every time step in a sequence of $n^{\rho}$ connectivity graphs, the connectivity graph is doubling w.h.p.
Since we have a USL mobility model, any graph
$\mathcal{G}^{(t)}(n,r_n)$ is statistically identical to
$\G{n}{r_n}$.

\begin{theorem}
\label{thm:rlar}
\G{n}{\sqrt{(1+\epsilon)\log{n}}} are doubling w.h.p.
\end{theorem}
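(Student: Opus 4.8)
\section*{Proof proposal}

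The plan is to reduce the doubling property of the shortest-path metric $d(\cdot,\cdot)$ on $\G{n}{r_n}$ to the (elementary) doubling property of the Euclidean plane, by showing that $d$ is equivalent, up to multiplicative and additive constants, to the Euclidean metric rescaled by $1/r_n$. Throughout I would condition on the event that no small square is empty, which holds with probability at least $1-O(1/n^{\epsilon})$ by Corollary \ref{lemma:fs}; given this event the remainder of the argument is deterministic. Recall the squares have side $s_n=r_n/c$ with $c=\sqrt5$, chosen precisely so that any two nodes lying in edge-adjacent squares are at Euclidean distance at most $r_n$ and are hence joined by an edge.

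First I would establish a two-sided bound: there is a constant $a$ (independent of $n$ and $R$) such that for every pair of nodes $u,v$ one has $\|x(u)-x(v)\|/r_n \le d(u,v) \le a\,\|x(u)-x(v)\|/r_n + a$. The lower bound is immediate, since a path realizing $d(u,v)$ consists of $d(u,v)$ hops, each of Euclidean length at most $r_n$, so $r_n\,d(u,v)\ge\|x(u)-x(v)\|$. The upper bound is the crux and is exactly where the populated-square property is needed: because every square contains a node, I can route from the square of $u$ to the square of $v$ along a monotone ``staircase'' of edge-adjacent squares, hopping through one representative node per square. The number of squares crossed is $O(\|x(u)-x(v)\|/s_n)+O(1)=O(\|x(u)-x(v)\|/r_n)+O(1)$, and consecutive representatives are connected by the choice $c=\sqrt5$, which yields the claimed bound.

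With this equivalence in hand, fix $u$ and a radius $R\ge\min_{u,v}d(u,v)=1$ and consider $\B{(t)}{u}{2R}$. By the lower bound every $v\in\B{(t)}{u}{2R}$ satisfies $\|x(u)-x(v)\|\le 2Rr_n$, so the whole ball lies in a Euclidean disc $D$ of radius $2Rr_n$ about $x(u)$. For $R\ge 2a$ I would cover $D$ by Euclidean sub-discs of radius $\delta:=r_n(R-a)/(2a)$: since $R\ge 2a$ gives $\delta\ge r_n R/(4a)$, the ratio $Rr_n/\delta$ is bounded and only $O((2Rr_n/\delta)^2)=O(a^2)$ sub-discs are needed. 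Moreover, by the upper bound any two nodes sharing a sub-disc (Euclidean distance $\le 2\delta$) have graph distance at most $a\cdot 2\delta/r_n+a\le R$, so selecting one node of each nonempty sub-disc as a center makes the radius-$R$ graph balls about these centers cover $\B{(t)}{u}{2R}$, with a constant number of balls.

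Finally the small-radius regime $1\le R< 2a$ must be handled separately, since there $\delta$ degenerates. Here $\B{(t)}{u}{2R}$ lies in a Euclidean disc of radius $O(r_n)$, which meets only $O((r_n/s_n)^2)=O(1)$ squares. Any two nodes of one square are within Euclidean distance $\sqrt2\,s_n<r_n$ and hence adjacent, so a ball of radius $1\le R$ about a single representative of each occupied square covers that square; these $O(1)$ balls cover $\B{(t)}{u}{2R}$. Combining the two regimes bounds the doubling dimension by a constant independent of $n$, which is the assertion. I expect the main obstacle to be the upper half of the metric equivalence, namely turning a small Euclidean separation into a short graph path: this works only because every square is occupied and neighboring squares are connected; the lower bound and the covering counts are then routine planar geometry.
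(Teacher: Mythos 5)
Your proof is correct and follows essentially the same route as the paper's: both condition on every square of side $r_n/\sqrt{5}$ being populated (Corollary~\ref{lemma:fs}) and then sandwich the graph metric between rescalings of a geometric proxy metric, so that the doubling of the plane transfers to \G{n}{r_n}. The only real difference is cosmetic --- the paper covers the grid-graph ball $\B{\mathcal{L}}{u}{4Rc}$ by grid balls of radius $R$ and pushes these back into $\mathcal{G}$, whereas you cover a Euclidean disc by sub-discs, which obliges you to carry the additive constant in the distance comparison and to treat the regime $R<2a$ separately, a detail the paper's grid formulation absorbs implicitly.
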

\begin{proof}
By Lemma \ref{lemma:fs}, all small squares contain at least one node
w.h.p. Consequently, neighboring squares (vertically and horizontally)
have at least one communication link. Denote by \g{m}{r} the grid
having the small squares as vertices, and with edges between vertical
and horizontal neighbors. Consider a ball
$B_{upper}=\B{\mathcal{G}}{u}{2R}$ centered around some node
$u$. Clearly, all nodes in $B_{upper}$ must be contained in a square
which is part of $\B{\mathcal{L}}{u}{4Rc}$ \emph{i.e.,}
$B_{upper}\subseteq \B{\mathcal{L}}{u}{4Rc}$. This follows from the
fact that no node in $B_{upper}$ can be further away from $u$ than
$2Rr$ in Euclidean distance, and that the grid is fully connected
w.h.p. Similarly, one can see that $\B{\mathcal{L}}{u}{R}\subseteq
B_{lower}=\B{\mathcal{G}}{u}{R}$. This is a consequence of the fact
that $\mathcal{L}$ is a subgraph of $\mathcal{G}$, \emph{i.e.,} two
nodes in small squares $R$ hops a part in $\mathcal{L}$ cannot be more
than $R$ hops apart in $\mathcal{G}$ (see Fig. \ref{fig:gridCov}). For an appropriately chosen constant $\alpha$, we have:
\begin{equation}
B_{upper}\subseteq \B{\mathcal{L}}{u}{4Rc} \subseteq
\bigcup_{j=1}^{\alpha} \B{\mathcal{L}}{v_j}{R} \subseteq
\bigcup_{j=1}^{\alpha} \B{\mathcal{G}}{v_j}{R}
\end{equation}
and \G{n}{\sqrt{(1+\epsilon)\log{n}}} is \textit{doubling}.
\begin{figure}[htbp]
\centering
\includegraphics[width=0.7\textwidth,keepaspectratio]{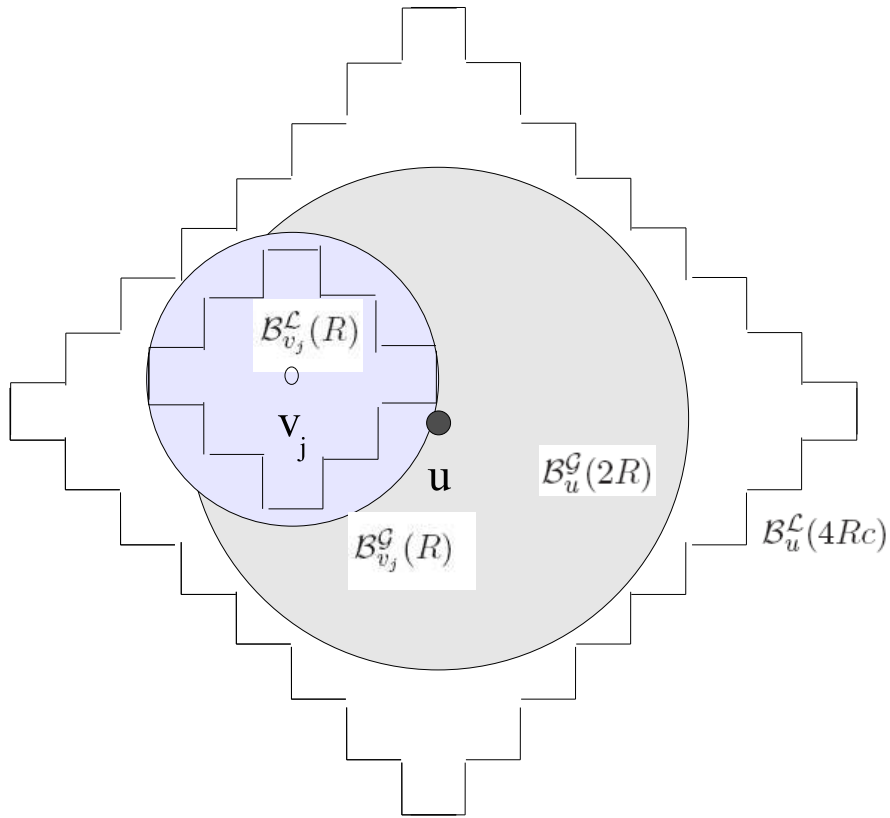}
\caption{Proof of theorem \ref{thm:rlar}}
\label{fig:gridCov}
\end{figure}   
\end{proof}
Note that it is possible to build a deterministic geometric graph for
which this property does not hold (see Appendix \ref{apx:A}). Further,
one can show that \G{n}{r_n} are \emph{not} doubling w.h.p when
$r_n<\sqrt{\log n}$. We prove this result in Appendix \ref{apx:B}. At
this point, we would like to emphasize that even though we
analyze networks in which the nodes are uniformly distributed on a
square area, the doubling property is a much more powerful
tool. Indeed, our results and algorithms depend only on the doubling
constant. Consequently, the algorithms and the bounds can be applied
to any other type of networks or node configuration which lead to a
doubling connectivity graph. For instance, one can consider the
network shown in Figure \ref{fig:wall}, described in Section
\ref{subsec:GnR}. It can easily be shown by using a technique similar
to the one used in Theorem \ref{thm:rlar} that this network is
doubling. While we can seamlessly apply our routing algorithm to such
a network, any classical geographic routing algorithm would fail or
require a high control traffic overhead to get out of dead-ends. This is because nodes would get stuck against the wall when
routing packets from the lower to the upper part of the network. In
turn, this would considerably degrade the performance in terms of
stretch and control traffic overhead with respect to the same network
without a wall. In the next subsection we prove a set of sufficient conditions for a wireless networks to have a constant doubling dimension.

\subsection{Inhomogeneous Topologies}
In the first part of this subsection, we show that under certain conditions, the presence of topological holes (obstacles) in the network does not increase the doubling property, or only by a constant factor. In particular, we are interested in how we can alter the topology of a fully connected and dense network by removing nodes while still preserving the doubling property. In the second part, we will generalize this idea to arbitrary metric spaces. Consider a \G{n}{r_n} with $r_n>\sqrt{\log n}$, such that full connectivity is guaranteed. The network area is divided into squarelets of side $\frac{r_n}{c}$, where $c$ is chosen such that nodes in horizontally and vertically adjacent squarelets are guaranteed to be within communication range. We now arbitrarily select squarelets and remove all nodes they contain. We denote the new graph we obtain by \Gn.  
\begin{figure}[htbp]
	\centering
	\includegraphics[width=0.8\textwidth,keepaspectratio]{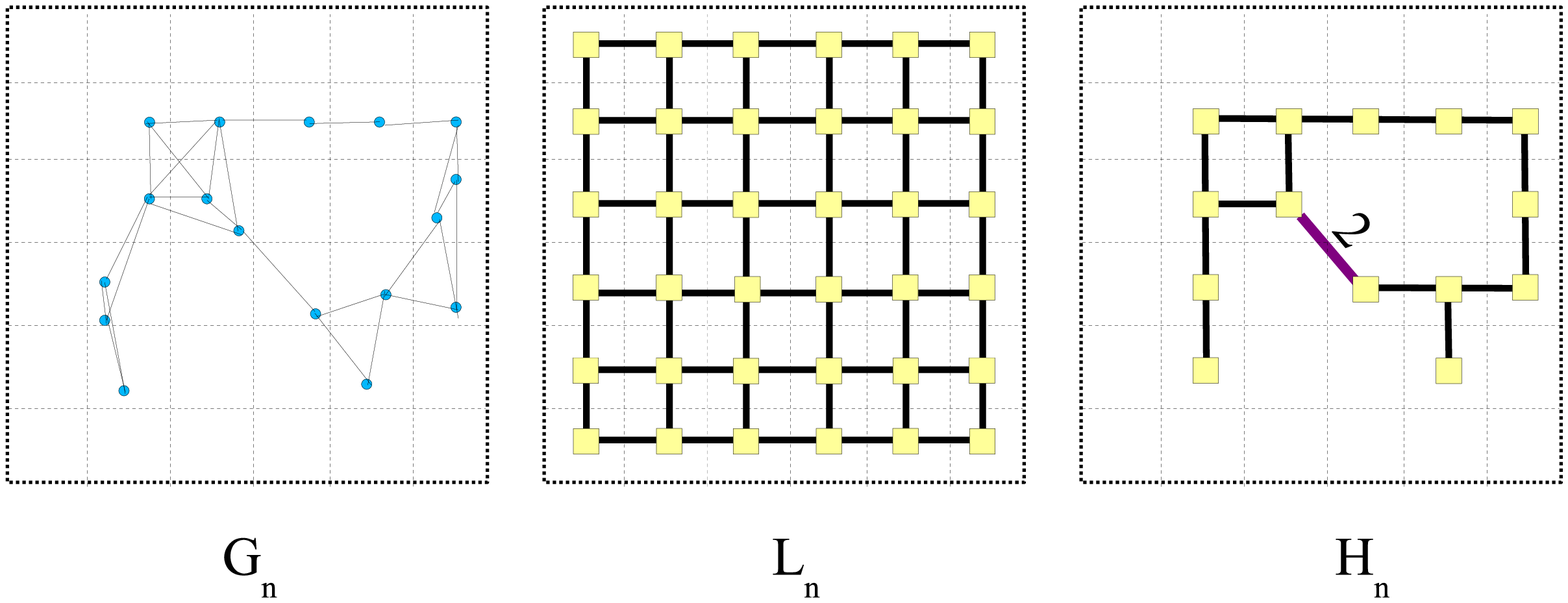}
	\caption{Graphs \Gn, \Ln and \Hn. The network area is divided into squarelets of side $\frac{r_n}{c}$ such that nodes in horizontally and vertically adjacent squarelets are guaranteed to be within communication range.}
	\label{fig:GridHole}
\end{figure} 
We denote by \Ln the full grid where the squareletes are vertices and by \Hn the corresponding grid in \Gn \emph{i.e.,} the thinned out grid obtained by selecting only non-empty squarelets in \Gn. In \Ln, we add an edge between horizontally and vertically adjacent squarelets (see Fig. \ref{fig:GridHole}). In \Hn, we first add a an edge between horizontally and vertically squarelets containing at least one node. Then, for every pair of squarelets containing nodes that can communicate directly, we add an edge of weight corresponding to the distance between those two squarelets in \Ln. We add the edges from the shortest to the longest one, and only if no path of the same length already exists in \Hn. 
We can now define a topological hole as follows: 
\begin{definition}[Topological Hole]
A set of horizontally, vertically and horizontally adjacent empty squarelets in the graph \Hn is called a \emph{hole} if adding a (virtual) vertex in all of the squarlets in that set modifies the distance between at least two vertices in \Hn.  
\end{definition}
Let us denote by \void{k} the $k^{th}$ hole (k=1,2,3,...). We define the perimeter $p(\void{k})$ of \void{k} as 2 times the maximum distance between any two vertices on the border of the hole \emph{i.e,} in squarelets adjacent to the empty squarelets defining the hole. Note that for all $u,v$, we have $d^{\Hn}(s_u,s_v)\geq d^{\Gn}(u,v)$.
\begin{figure}[htbp]
\centering
\includegraphics[width=0.8\columnwidth,keepaspectratio]{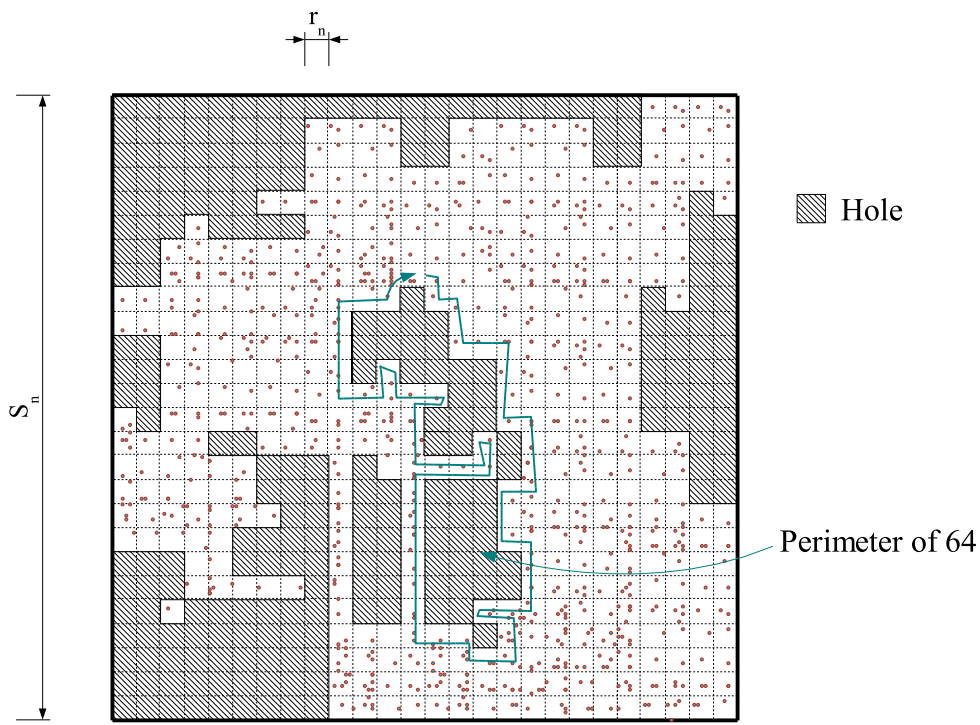}
\caption{A network with holes}
\label{fig:network}
\end{figure}  
\begin{theorem}
\label{thm:holes}
Let $P=\max_k p(\void{k})$. Then, the doubling dimension is upper bounded by $O(P^{2})$.
\end{theorem}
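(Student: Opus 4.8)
The plan is to sandwich the hole-metric $d^{\mathcal{H}_n}$ between the metric of the \emph{full} grid $\mathcal{L}_n$ and a $P$-distorted copy of the same metric, and then to run a packing argument at two scales. The key background fact I will use freely is that $\mathcal{L}_n$ is a two-dimensional lattice, so a ball of radius $\rho'$ is covered by $O\!\left((\rho'/\rho)^2\right)$ balls of radius $\rho$ (take a maximal $\rho$-net and count by area). This is exactly the regime in which $\mathcal{L}_n$ was already seen to be doubling in the spirit of Theorem \ref{thm:rlar}; the whole difficulty is transferring that property from $\mathcal{L}_n$ to $\mathcal{H}_n$ while paying for the detours forced by holes.

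\textbf{Step 1 — two-sided comparison with the grid.} The first and main step is to prove, for all squarelets $w,x$,
\[
d^{\mathcal{L}_n}(w,x)\ \le\ d^{\mathcal{H}_n}(w,x)\ \le\ \Bigl(1+\tfrac{P}{2}\Bigr)\,d^{\mathcal{L}_n}(w,x).
\]
The left inequality is immediate: every edge of $\mathcal{H}_n$ has weight at least the $\mathcal{L}_n$-distance between its endpoints (unit edges join $\mathcal{L}_n$-adjacent squarelets, and each added shortcut edge carries exactly its $\mathcal{L}_n$-length), so any $\mathcal{H}_n$-walk contracts to an $\mathcal{L}_n$-walk that is no longer. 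For the right inequality I would take an $\mathcal{L}_n$-geodesic of length $L=d^{\mathcal{L}_n}(w,x)$ in the full grid and \emph{repair} it into an $\mathcal{H}_n$-walk: wherever the geodesic runs through a maximal block of empty squarelets of some hole $\void{k}$, entering its border at $a$ and leaving it at $b$, I replace that block by a shortest border detour, of length $d^{\mathcal{H}_n}(a,b)\le p(\void{k})/2\le P/2$ directly from the definition of the perimeter. Since a geodesic with $L$ edges meets at most $L$ such blocks, the repaired walk has length at most $L+L\cdot\frac{P}{2}$, which gives the claim.

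\textbf{Step 2 — packing at two scales.} Assume $P\ge 2$ (the smallest possible hole already has perimeter $\ge 4$). If $R\ge P$, set $\rho=R/P\ge 1$; by Step 1, $d^{\mathcal{L}_n}(v,x)\le\rho$ forces $d^{\mathcal{H}_n}(v,x)\le(1+\tfrac{P}{2})\rho=R/P+R/2\le R$, so $\B{\mathcal{L}_n}{v}{\rho}\subseteq\B{\mathcal{H}_n}{v}{R}$, while $\B{\mathcal{H}_n}{u}{2R}\subseteq\B{\mathcal{L}_n}{u}{2R}$ by the left inequality. Covering the lattice ball $\B{\mathcal{L}_n}{u}{2R}$ by $O\!\left((2R/\rho)^2\right)=O\!\left((2P)^2\right)=O(P^2)$ balls $\B{\mathcal{L}_n}{v_j}{\rho}$ and enlarging each to $\B{\mathcal{H}_n}{v_j}{R}$ covers $\B{\mathcal{H}_n}{u}{2R}$ with $O(P^2)$ balls of radius $R$. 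If instead $R<P$, then $\B{\mathcal{H}_n}{u}{2R}\subseteq\B{\mathcal{L}_n}{u}{2R}$ contains only $O((2R)^2)=O(P^2)$ squarelets, each a ball of radius $0\le R$; taking these singletons again covers the ball with $O(P^2)$ balls of radius $R$. Both regimes give $O(P^2)$, so the doubling dimension is $O(P^2)$.

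The hard part is the upper bound in Step 1, and specifically the claim that rerouting past a hole costs only $P/2$. This rests on reading $p(\void{k})$ as twice the largest \emph{go-around} ($\mathcal{H}_n$) distance between two border squarelets, and on checking that the border is connected in $\mathcal{H}_n$ and that the entry and exit squarelets are genuinely non-empty border squarelets. The truly delicate case is when holes touch or nest, so that a detour around one hole runs along the border of another; one must argue that such interactions never inflate a single detour beyond $P/2$, for instance by treating a maximal union of mutually adjacent empty squarelets as one hole and then invoking $P=\max_k p(\void{k})$. Everything else is a routine packing computation.
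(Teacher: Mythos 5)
Your proposal is correct in substance and, at the top level, follows the same route as the paper's proof: both sandwich the hole metric between the full grid \Ln and a $P$-distorted copy of it, then use the two-dimensionality of the grid to count covering balls, arriving at $O(P^2)$. The difference is in the comparison lemma. The paper never states a global distortion bound; it covers $\Boxes{\Ln}{u}{2Rc}$ by $O(c^2\gamma^2)$ sub-boxes of \Ln-diameter $R/\gamma$, argues that \emph{each grid hop} inside a sub-box may be detoured by at most $P$ steps (so two squarelets in a sub-box are at \Hn-distance at most $PR/\gamma$), and sets $\gamma=P$. Your Step 1 replaces this per-hop factor $P$ by the sharper global bound $d^{\Ln}\leq d^{\Hn}\leq (1+\frac{P}{2})d^{\Ln}$ via geodesic repair, charging each maximal empty block a single detour of cost $p(\void{k})/2$; this is cleaner and isolates exactly what the perimeter buys, at the price of the border-connectivity and hole-merging caveats you rightly flag (the paper absorbs these silently into its per-hop accounting). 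Two loose ends to patch: (i) you cover $\B{\Hn}{u}{2R}$, i.e., the squarelet graph, whereas the theorem concerns \Gn; you need the paper's transfer step, namely $\B{\Gn}{u}{2R}\subseteq \Boxes{\Ln}{u}{2Rc}$ together with $d^{\Gn}(v,w)\leq d^{\Hn}(s_v,s_w)$, which costs only the constant $c^2$; (ii) your $\rho$-ball centers may be empty squarelets, so you must recenter at a non-empty squarelet in each (halve $\rho$, still $O(P^2)$) and then at a node it contains, as the paper does explicitly by picking one non-empty squarelet and one node $v_i$ per sub-box.
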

\begin{proof}
Consider a ball $\B{\Gn}{u}{2R}$ centered at $u$ in \Gn. First, observe that $\B{\Gn}{u}{2R}\subseteq \Boxes{\Ln}{u}{2Rc}$, where $\Boxes{\Ln}{u}{2Rc}$ is the box centered at the squarelet containing $u$ in \Ln which contains all nodes at ``maximum norm'' $2Rc$ (\emph{i.e.,} $l_\infty$-norm) from this squarelet. In other words, all nodes within $2R$ hops from $u$ in \Gn must be in a squarelet contained in this box. We will now cover this box with smaller boxes $\Boxes{\Ln}{s_{v_i}}{\max\left\{1,\left\lceil \frac{R}{4\gamma}\right\}\right\rceil}$. We need $\left\lceil \frac{16R^2c^24\gamma^2}{R^2}\right\rceil=64c^2\gamma^2$ such boxes at most. Consider the same small boxes in \Hn. Pick one non-empty squarelet $s_{v_i}$ in each such small boxes. Note that the maximum hop distances between two squarelets in such a small box in \Ln is at most $\frac{R}{\gamma}$. For each of these hops, we might have to make a detour of at most $P$ steps. Consequently, the same two squarelets could be at distance at most $\frac{PR}{\gamma}$ in \Hn. Observe now that for any two nodes $u$ and $w$ contained in squarelets $s_u$ and $s_w$ respectively, we have $d^{\Hn}(s_u,s_w)\geq d^{\Gn}(u,w)$.  For each squarelet $s_{v_i}$, we pick one node $v_i$ contained in this squarelet. Hence, for all nodes $w$ contained in this small box, we have $d^{\Gn}(v_i,w)\leq d^{\Hn}(s_{v_i},s_w)\leq \frac{PR}{\gamma}$. By setting $\gamma=P$, we obtain the claim.   
\end{proof}
We can extend this result to the case where the network can be divided into convex sets. We define a convex set in \Gn with slack as follows:
\begin{definition}
Let $\Psi$ be a set of nodes in \Gn. Let $\Hn^{(\Psi)}$ be the squarelets in \Hn containing at least one node in $\Psi$. We say that the set $\Psi$ is \emph{convex} if $\forall u,v\in \Psi$, $d^{\Hn^{(\Psi)}}(s_u,s_v)=d^{\Ln}(s_u,s_v)$, where $s_u$ and $s_v$ are the squarelets containing $u$ and $v$\emph{ i.e.,} there must be at least one shortest path inside the convex set. We say that the set $\Psi$ is \emph{convex with slack P} if $d^{\Hn^{(\Psi)}}(s_u,s_v)\leq Pd^{\Ln}(s_u,s_v)$.%it were convex if we ``filled'' all topological holes of perimeter smaller or equal to $P$.
\end{definition}
We can now state the following theorem
\begin{theorem}
\label{thm:holes2}
Let $\zeta_1, \zeta_2,...,\zeta_q$ be a partition of the network into $q$ convex sets with slack $P_1,P_2,...,P_q$ respectively. Let $per_1,per_2,...,per_q$ denote the perimeter of the convex sets. The doubling dimension is then upper bounded by $\max_{u,\rho}4\sum_{\zeta_i:\zeta_i\cap\Boxes{\Ln}{s_u}{\rho}\neq \emptyset}(\left\lceil \frac{per_i}{\frac{\rho}{P_i}}\right\rceil)^2$.  
\end{theorem}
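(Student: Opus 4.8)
The plan is to reuse the template of the proof of Theorem~\ref{thm:holes}, but to let the covering sub-boxes shrink at a rate dictated by the \emph{local} slack $P_i$ of each convex set rather than by a single global slack $P$. First I would fix an arbitrary center $u$ and radius $R$ and carry out the same reduction from a ball to a box: no node of $\B{\Gn}{u}{2R}$ can sit in a squarelet more than $2Rc$ squarelets away in the $l_\infty$-norm, so $\B{\Gn}{u}{2R}\subseteq\Boxes{\Ln}{s_u}{\rho}$ with $\rho$ proportional to $Rc$. The entire covering problem is thereby confined to $\Boxes{\Ln}{s_u}{\rho}$, and the only convex sets that can contain a node to be covered are those with $\zeta_i\cap\Boxes{\Ln}{s_u}{\rho}\neq\emptyset$ --- precisely the index set of the sum in the statement.

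Next I would handle each such $\zeta_i$ in isolation. Since $\zeta_i$ is convex, its $\Ln$-diameter is at most $per_i/2$, so its squarelets fit inside an $\Ln$-box of side $O(per_i)$; tiling that box with sub-boxes of $\Ln$-side $\rho/P_i$ needs at most $\ceil{\frac{per_i}{\rho/P_i}}^2$ of them. In each non-empty sub-box I pick one node $v_i\in\zeta_i$ and use $\B{\Gn}{v_i}{R}$ as a covering ball. The reason for scaling the sub-box by $P_i$ is the following chain, which is exactly where the slack hypothesis is consumed: for any $w\in\zeta_i$ lying in the same sub-box, $s_{v_i}$ and $s_w$ are within $\Ln$-distance $\rho/P_i$, and because $\Hn^{(\zeta_i)}$ is an induced subgraph of \Hn and $\zeta_i$ is convex with slack $P_i$, we get $d^{\Gn}(v_i,w)\leq d^{\Hn}(s_{v_i},s_w)\leq d^{\Hn^{(\zeta_i)}}(s_{v_i},s_w)\leq P_i\,d^{\Ln}(s_{v_i},s_w)\leq R$, once the $l_\infty$-radius-to-side constants are chosen appropriately. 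Hence every sub-box of $\zeta_i$ is swallowed by a single radius-$R$ ball of \Gn.

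Summing the per-set counts yields $\sum_{\zeta_i\cap\Boxes{\Ln}{s_u}{\rho}\neq\emptyset}\ceil{\frac{per_i}{\rho/P_i}}^2$ covering balls, and since $\{\zeta_i\}$ is a \emph{partition}, every node of $\B{\Gn}{u}{2R}$ belongs to exactly one $\zeta_i$ and is therefore caught by the ball attached to its own sub-box. The leftover factor $4$ absorbs the overhead from converting $l_\infty$-radii into side lengths, from the ceilings (sub-boxes spilling over the boundary of the box), and from the proportionality between $\rho$ and $Rc$; taking the maximum over the center $u$ and the scale $\rho$ then gives the claimed uniform upper bound on the doubling dimension.

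The step I expect to be the crux is keeping the per-set slack estimate valid across interfaces. The inequality $d^{\Gn}(v_i,w)\leq P_i\,d^{\Ln}(s_{v_i},s_w)$ is only licensed when both endpoints lie in the \emph{same} $\zeta_i$, so I must index each covering ball by the convex set a node belongs to (not by mere geometric proximity) and ensure no estimate relies on a shortest $\Gn$-path that wanders out of $\zeta_i$; this is exactly what measuring distance inside $\Hn^{(\zeta_i)}$ in the convexity-with-slack definition is designed to secure. A secondary item to verify is that convexity genuinely bounds the $\Ln$-extent of $\zeta_i$ by its perimeter (diameter $\leq per_i/2$ for a convex region), so that $\ceil{\frac{per_i}{\rho/P_i}}^2$ sub-boxes really do cover $\zeta_i\cap\Boxes{\Ln}{s_u}{\rho}$, and that the whole argument collapses to Theorem~\ref{thm:holes} when $q=1$.
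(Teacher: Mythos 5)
Your proposal follows essentially the same route as the paper's own proof: reduce the ball $\B{\Gn}{u}{2R}$ to the box $\Boxes{\Ln}{s_u}{\rho}$ with $\rho=2Rc$, cover each intersecting convex set $\zeta_i$ separately with at most $4\lceil per_i/(\rho/P_i)\rceil^2$ sub-boxes (using that a convex region of perimeter $per_i$ fits in a square of that side), and invoke the per-set slack $P_i$ to show one node per sub-box covers it with a radius-$R$ ball. Your explicit chain $d^{\Gn}(v_i,w)\leq d^{\Hn}(s_{v_i},s_w)\leq d^{\Hn^{(\zeta_i)}}(s_{v_i},s_w)\leq P_i\,d^{\Ln}(s_{v_i},s_w)$ just spells out more carefully the step the paper states in one line, so the argument is correct and not materially different.
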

\begin{proof}
In the proof of Theorem \ref{thm:holes}, we have shown that any ball of radius $2R$ around some node $u$ is contained in a box \Boxes{\Ln}{s_u}{\rho}, where $\rho=2Rc$. We can cover each convex set $\zeta_i$ intersecting this box with at most $4(\left\lceil \frac{per_i}{\frac{\rho}{P_i}}\right\rceil)^2$ small boxes of radius $R/4P_i$, as shown in Theorem \ref{thm:holes}\footnote{A convex area of perimeter $q$ can always be included in a square area of side $q$.}. A slack of $P$ implies that by selecting one node in each of the small boxes, all nodes in the convex set are within $R$ hops of this node in \Gn. If the box \Boxes{\Ln}{s_u}{\rho} is partitioned into several convex sets, selecting $4(\left\lceil \frac{per_i}{\frac{\rho}{P_i}}\right\rceil)^2$ nodes in each convex set $\zeta_i$ intersecting this box will in turn guarantee that all nodes are covered. 
\end{proof}
In practice, this result implies that if we are given a decomposition of the network into convex sets, we can bound the overall doubling dimension given the doubling dimension of each set separately. Further, this result implies that networks that consist of a small number of convex areas, which can each contain arbitrarily many small holes, have a low complexity in terms of doubling dimension. We will now relate the ``shapes'' of a topological hole to the doubling dimension. In particular, we will show that one can relate the doubling dimension to the maximum number of connected components in any square subarea. 
\begin{theorem}
\label{thm:holes3}
For any $\gamma\geq 2$, the doubling dimension $\alpha$ is such that 
\[
\alpha\leq 4\gamma^2c^2 \max_{\Boxes{\Ln}{u}{R/\gamma}}\left\{\mbox{number of convex disconnected components with slack $\gamma$ in \Boxes{\Ln}{u}{R/\gamma}}\right\}
\]
\end{theorem}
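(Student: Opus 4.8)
The plan is to reuse the box-covering scheme from Theorem~\ref{thm:holes} and Theorem~\ref{thm:holes2}, but to refine the accounting inside each covering box so as to charge one ball of radius $R$ to \emph{each} convex component that the holes carve out, rather than one ball per box. First I would recall the containment established in the proof of Theorem~\ref{thm:holes}: every ball $\B{\Gn}{u}{2R}$ is contained in the grid box $\Boxes{\Ln}{u}{2Rc}$. Hence it suffices to exhibit a family of radius-$R$ balls in \Gn whose union contains $\Boxes{\Ln}{u}{2Rc}$, and then to count them.

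I would then tile $\Boxes{\Ln}{u}{2Rc}$ by boxes at the scale $\Boxes{\Ln}{\cdot}{R/\gamma}$ appearing in the statement, chosen small enough that the $\mathcal{L}_n$ hop-distance between two squarelets of a common box is controlled by $R/\gamma$; an area count (the squared ratio of side lengths, up to the routine adjustment of these constants) then yields the prefactor $4\gamma^2c^2$ on the number of such boxes. The essential difference from Theorem~\ref{thm:holes} is that, inside a single small box, the removed squarelets may split the populated region into several pieces that are mutually far apart in \Hn (and hence in \Gn) even though they are $\mathcal{L}_n$-close. I would therefore partition the squarelets of each small box into its convex disconnected components with slack $\gamma$, and place one representative node $v_i$ in each.

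The covering property is then checked component by component. Fix a component and its representative $v_i$, and let $w$ be any node whose squarelet lies in that same component. Because the component is convex with slack $\gamma$ and sits in a box whose internal $\mathcal{L}_n$ distances are at most $R/\gamma$, the slack bound gives $d^{\Hn}(s_{v_i},s_w)\le \gamma\, d^{\Ln}(s_{v_i},s_w)\le \gamma\cdot\frac{R}{\gamma}=R$. Combining this with the inequality $d^{\Gn}(v_i,w)\le d^{\Hn}(s_{v_i},s_w)$ noted before Theorem~\ref{thm:holes} yields $d^{\Gn}(v_i,w)\le R$, i.e. $w\in\B{\Gn}{v_i}{R}$. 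Thus a single radius-$R$ ball in \Gn covers each component, and summing over all components of all small boxes, the number of balls needed to cover $\B{\Gn}{u}{2R}$ is at most the number of small boxes, $4\gamma^2c^2$, times the largest number of convex disconnected components occurring in any box $\Boxes{\Ln}{u}{R/\gamma}$. This is exactly the claimed bound on $\alpha$.

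The step I expect to be the main obstacle is this component-wise covering argument: one must verify that slack-$\gamma$ convexity is the correct hypothesis guaranteeing that paths inside a component inflate by at most a factor $\gamma$ when passing from \Ln to \Hn, so that the detours forced by the holes never exceed the hop budget $R$. Pieces for which this inflation fails are precisely the ones that must be split off and counted as separate components, which is why the bound is phrased through the maximal number of such components over all small boxes, rather than through a single center per box as in Theorem~\ref{thm:holes}. Reconciling the covering-radius constant with the stated prefactor $4\gamma^2c^2$ is then routine box-counting bookkeeping that I would not grind through here.
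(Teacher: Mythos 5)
Your proposal is correct and follows essentially the same route as the paper: contain $\B{\Gn}{u}{2R}$ in $\Boxes{\Ln}{s_u}{2Rc}$, tile that box by $4\gamma^2c^2$ boxes at scale $R/\gamma$, and within each small box place one representative per convex disconnected component, using the slack bound $d^{\Hn}(s_{v_i},s_w)\le \gamma\, d^{\Ln}(s_{v_i},s_w)\le R$ together with $d^{\Gn}(v_i,w)\le d^{\Hn}(s_{v_i},s_w)$. In fact you spell out the component-wise covering step more explicitly than the paper's own (very terse) proof does; no gap.
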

\begin{proof}
In the proof of Theorem \ref{thm:holes}, we have shown that any ball of radius $2R$ around some node $u$ is contained in a box \Boxes{\Ln}{s_u}{\rho}, where $\rho=2Rc$. In turn, we showed that by dividing this box into smaller boxes of side $R/\gamma$, and by selecting one node in each box, we could cover the larger ball of radius $2R$. Now, in each small box of side $R/\gamma$, the presence of holes might create several disconnected components. However, we know that inside each such component, we can cover any convex subset with slack $\gamma$ with one nodes. The result follows. 
\end{proof}
This last result gives us a characterization of the alterations we can make to a fully connected \G{n}{r_n} network, while only affecting the doubling dimension by a constant factor. In particular, we can remove nodes as long as we do not create too many convex and disconnected components in any square subarea. Note that we can still remove arbitrarily many nodes as long as we only create small holes. Theorems \ref{thm:holes}, \ref{thm:holes}, \ref{thm:holes} imply that topologies such as the one shown in Fig. \ref{fig:DoublingNet} have a constant doubling dimension. 
\begin{figure}[htbp]
	\centering
	\includegraphics[width=0.8\textwidth,keepaspectratio]{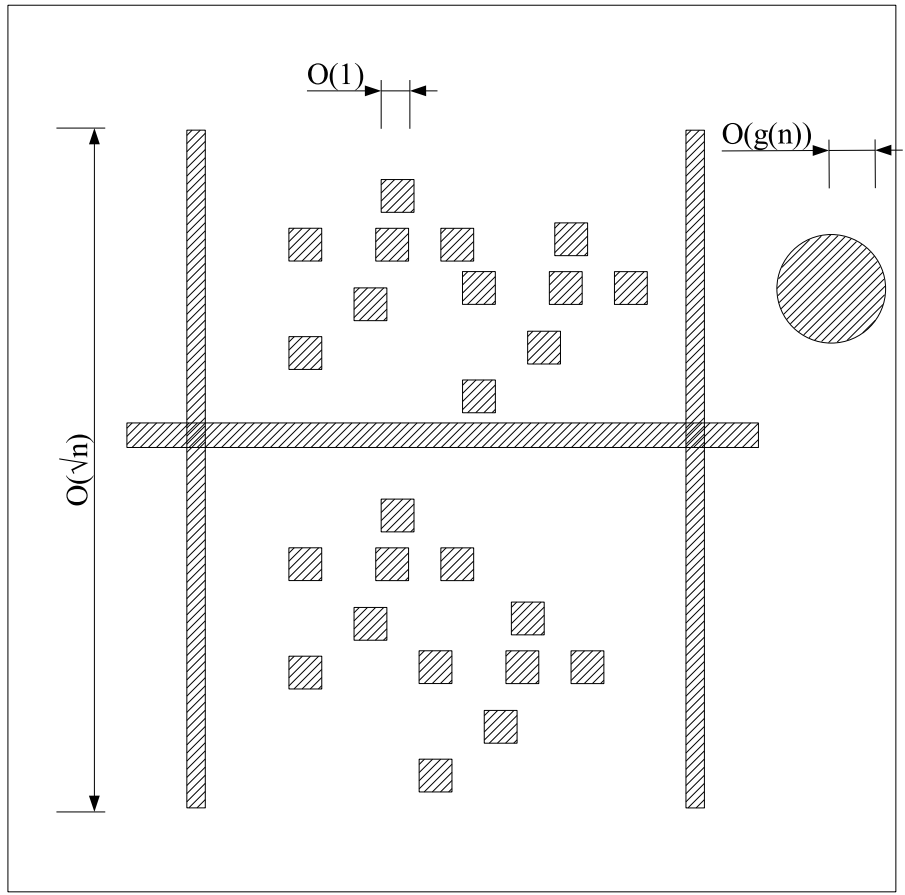}
	\caption{A network with topological holes and a constant doubling dimension. The size of the large holes grow with $n$, but the network can be divided into a constant number of areas, each being convex with slack $O(1)$ \emph{i.e.,} each of the convex areas contains only obstacles with a constant perimeter or that can only increase the distance between nodes by a constant factor. Note that even though the doubling dimension is low, greedy geographic forwarding of packets would fail as packets would get stuck in dead-ends against the holes. Squarelets containing no nodes are hatched.}
	\label{fig:DoublingNet}
\end{figure} 
The results stated above are special cases of the more general result detailed in the sequel. Indeed, we can relate the doubling dimension in a metric space to the doubling dimension in another metric space if we know the distortion of the embedding that maps the points in one metric space to the points in the other metric space. The example above is a special case of that setup where we map the nodes of a graph to points in Euclidean space. Consider two metric spaces $(X,d)$ and $(X',d')$, where $d$ and $d'$ are distance functions which define a metric on the sets of point $X$ and $X'$. We could for instance consider the two metric spaces $(\mathcal{X},||.||)$ and $(\mathcal{H},d(.,.))$ \emph{i.e.,} the points in the plane with the Euclidean distance and the nodes in the graph with the shortest path distance. A \emph{metric embedding} is a bijective function $\phi:\mathcal{X}\rightarrow\mathcal{X'}$ which associates to a point in one metric space a point in another metric space. 
\begin{definition}[Distortion of an Embedding]
\label{def:doe}
A mapping $\phi:X\rightarrow X'$ where $(X,d)$ and $(X,d')$ are metric spaces, is said to have distortion at most $D$, or to be a \emph{D-embedding}, where $D\geq 1$, if there is a $K\in (0,\infty)$ such that $\forall x,y \in X$,
\[
Kd(x,y)\leq d'(\phi(x),\phi(y)) \leq KDd(x,y)
\]
if $X'$ is a normed space, we typically require $K=1$ or $K=\frac{1}{D}$. An embedding has distortion $D$ with slack $\epsilon$ if all but an $\epsilon$ fraction of node pairs have distortion $D$ under $\phi$. Additionally, one can loosen this definition by allowing slack. The slack is said to be \emph{uniform} if each node has distortion at most $D$ to a $1-\epsilon$ fraction of the other nodes. Finally, an embedding with distortion $D$ and slack $\epsilon$ is \emph{coarse} if for every node $u$ the distortion is bounded to a node a distance greater than $r_\epsilon=inf\left\{ r \mbox{s.t } |\B{X}{u}{r}|>\epsilon n\right\}$.  
\end{definition}  
The doubling dimension of a metric space embedded into another metric space can be bounded as follows:
\begin{theorem}[Bounding the Doubling Dimension]
\label{thm:embed}
Consider a metric space $(\mathcal{H},d)$ embedded in another metric space $(\mathcal{E},d')$ by a function $\phi$. Let the doubling dimension of $\mathcal{E}$ be $\beta$. Let the distortion of this embedding be $D$. Then, $\mathcal{H}$ has doubling dimension  $\alpha$ with %$O((\frac{1}{D})^{\log\beta})\leq
$\alpha\leq O((2D)^{\log\beta})$.
\end{theorem}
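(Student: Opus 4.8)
The plan is to transport a covering from $\mathcal{E}$ back to $\mathcal{H}$ through $\phi$, using the two sides of the distortion inequality for opposite purposes: the upper bound to control how far apart images can be, and the lower bound to pull a small $\mathcal{E}$-ball back into a small $\mathcal{H}$-ball. Since the doubling dimension is invariant under rescaling the metric, I would first normalise $K=1$ in Definition \ref{def:doe}, so that $d(x,y)\le d'(\phi(x),\phi(y))\le D\,d(x,y)$ for all $x,y\in\mathcal{H}$.

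Fix $u\in\mathcal{H}$ and a radius $R$; the goal is to cover $\B{\mathcal{H}}{u}{2R}$ by few $\mathcal{H}$-balls of radius $R$. I would first record the containment $\phi(\B{\mathcal{H}}{u}{2R})\subseteq\B{\mathcal{E}}{\phi(u)}{2DR}$, since any $v$ with $d(u,v)\le 2R$ satisfies $d'(\phi(u),\phi(v))\le D\,d(u,v)\le 2DR$. Rather than pulling back an arbitrary cover, I would work with a maximal $R$-separated set $W\subseteq\B{\mathcal{H}}{u}{2R}$, i.e.\ a maximal subset whose points are pairwise at $d$-distance strictly greater than $R$. By maximality every point of $\B{\mathcal{H}}{u}{2R}$ lies within $R$ of some $w\in W$, so the balls $\{\B{\mathcal{H}}{w}{R}\}_{w\in W}$ already cover $\B{\mathcal{H}}{u}{2R}$, and it suffices to bound $\alpha\le|W|$.

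To bound $|W|$ I would push it into $\mathcal{E}$: the lower distortion bound gives $d'(\phi(w),\phi(w'))\ge d(w,w')>R$ for distinct $w,w'\in W$, so $\phi(W)$ is an $R$-separated set contained in $\B{\mathcal{E}}{\phi(u)}{2DR}$. Now I would invoke the doubling dimension $\beta$ of $\mathcal{E}$ as a packing bound: covering $\B{\mathcal{E}}{\phi(u)}{2DR}$ by balls of radius $R/2$ requires, after halving the radius $\lceil\log_2(4D)\rceil$ times, at most $\beta^{\lceil\log_2(4D)\rceil}$ such balls, and each of them contains at most one point of the $R$-separated set $\phi(W)$ (two such points would be at distance at most $R$). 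Hence $|W|\le\beta^{\lceil\log_2(4D)\rceil}$, and using the identity $\beta^{\log(cD)}=(cD)^{\log\beta}$ this is $(4D)^{\log\beta}$, which is $O\big((2D)^{\log\beta}\big)$, giving the claim.

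The step I expect to be the main obstacle is the bookkeeping at the interface between the two spaces. The covering balls delivered by the doubling property of $\mathcal{E}$ are not a priori centred at image points, so a naive direct pullback would force a re-centring step and an extra factor in the radius; routing the whole argument through the maximal separated set sidesteps this, at the cost of the packing radius $R/2$, which is why the base of the exponent comes out as $\Theta(D)$ (namely $4D$) rather than exactly $2D$. The resulting gap between $(4D)^{\log\beta}$ and $(2D)^{\log\beta}$ is a factor $2^{\log\beta}$, which the $O(\cdot)$ in the statement absorbs since the ambient doubling constant $\beta$ is treated as fixed. One must also keep the two distortion inequalities pointed in opposite directions throughout: the upper bound is what makes the image ball small, while the lower bound is what preserves separation, and swapping either one breaks the argument.
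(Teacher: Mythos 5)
Your proposal is correct and proves the stated bound, but it routes the argument differently from the paper. The paper's proof pulls back a cover directly: it covers $\mathcal{B}^{\mathcal{E}}_{2KDR}(\phi(u))$ by $\beta^{\log 2D}$ balls of radius $KR$ and then applies $\phi^{-1}$ to each, writing the covering balls as $\mathcal{B}^{\mathcal{E}}_{KR}(\phi(v))$ --- i.e.\ it implicitly assumes the centers produced by the doubling property of $\mathcal{E}$ are image points, which is exactly the re-centring issue you flag; made rigorous, that route costs one more halving (replace a ball of radius $KR$ about an arbitrary center by a ball of radius $2KR$ about an image point inside it), the same factor-$\beta$ loss you incur. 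Your proof instead takes a maximal $R$-separated set $W$ in $\mathcal{B}^{\mathcal{H}}_{2R}(u)$, uses the lower distortion bound to see that $\phi(W)$ stays $R$-separated inside $\mathcal{B}^{\mathcal{E}}_{2DR}(\phi(u))$, and bounds $|W|$ by a packing argument with balls of radius $R/2$. This is the covering-versus-packing dual of the paper's argument: it buys you a cover that is automatically centred at points of $\mathcal{H}$ (so no re-centring is ever needed) at the price of the extra halving to radius $R/2$, giving $(4D)^{\log\beta}$ rather than $(2D)^{\log\beta}$; since this discrepancy is the constant factor $2^{\log\beta}=\beta$ and $\beta$ is fixed, the $O(\cdot)$ in the statement absorbs it, as you correctly note. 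Your version is the tighter writeup of the two; the only caveat is that if one wanted the bound with $\beta$ treated as a growing parameter rather than a constant, the exponent bookkeeping ($\lceil\log_2 4D\rceil$ versus $\log 2D$) would have to be stated explicitly rather than hidden in the $O(\cdot)$.
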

\begin{proof}
\begin{figure}[htbp]
	\centering
	\includegraphics[width=0.8\textwidth,keepaspectratio]{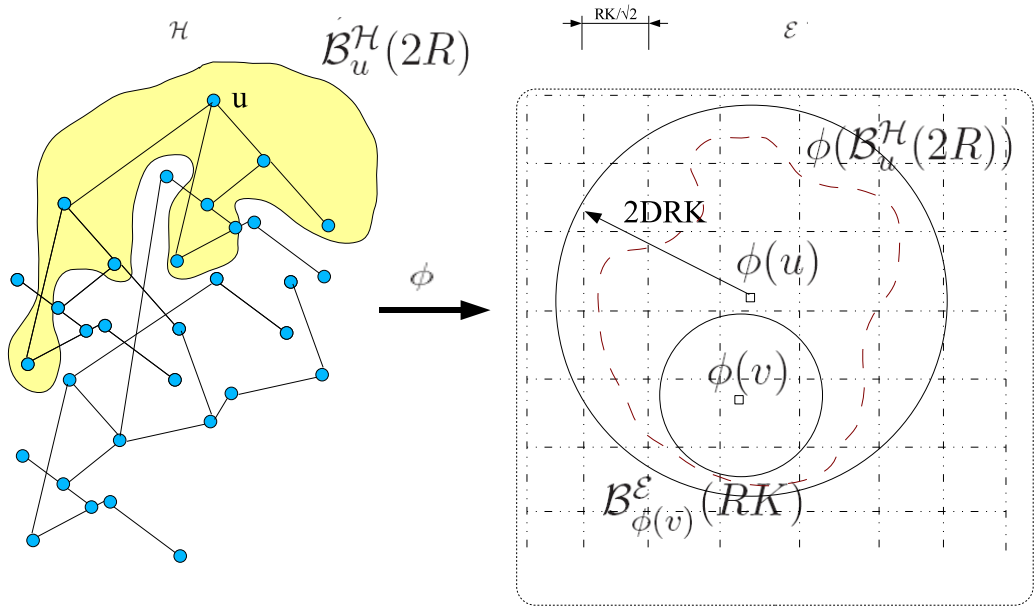}
	\caption{Proof of Theorem \ref{thm:embed}}
	\label{fig:embed}
\end{figure}   
Choose any node $u\in\mathcal{H}$. If the above condition is fulfilled, the images of all nodes in $\B{\mathcal{H}}{u}{2R}$ can be at distance $d'$ at most $2KDR$ from $u$ at $\phi(u)$. Hence, $\phi(\B{\mathcal{H}}{u}{2R})\subset \B{\mathcal{E}}{\phi(u)}{2KDR}$. We will now try to cover $\phi(\B{\mathcal{H}}{u}{2R})$ by as few balls $\B{\mathcal{\mathcal{E}}}{\phi(v)}{RK}$ as possible (see Fig. \ref{fig:embed}, which illustrates this setup in the case when $\mathcal{H}$ is a graph and $\mathcal{E}$ the Euclidean space). To do so, let us cover $\B{\mathcal{E}}{\phi(u)}{2KDR}$ by small balls of radius $KR$ in $\mathcal{E}$. Covering $\B{\mathcal{E}}{\phi(u)}{2KDR}$ will require at most $\beta^{\log 2D}$ balls of radius $KR$ in $\mathcal{E}$, given that $\mathcal{E}$ has doubling dimension $\beta$. We know that $d(u,v)\leq d'(u,v)/K$, by definition \ref{def:doe}. Consequently, $\phi^{-1}(\B{\mathcal{\mathcal{E}}}{\phi(v)}{RK})\subset \B{\mathcal{H}}{v}{R}$. We can conclude that $\B{\mathcal{H}}{u}{2R}\subset \bigcup_{j=1}^{\beta^{\log 2D}}\B{\mathcal{H}}{v_j}{R}$. 
%\par
%The lower bound can be proven in a very similar way. Indeed, one can observe that $\phi^{-1}(\B{\mathcal{\mathcal{E}}}{\phi(u)}{2RK})\subset \B{\mathcal{H}}{u}{2R}$ and $\phi(\B{\mathcal{H}}{v}{R})\subset \B{\mathcal{\mathcal{E}}}{\phi(v)}{RKD}$. By using a similar argument as we did for the upper bound, one can show that $O(\frac{1}{D^2})$ balls $\B{\mathcal{\mathcal{E}}}{\phi(v)}{RKD}$ at least are necessary to cover $\B{\mathcal{\mathcal{E}}}{\phi(u)}{2RK}$.
\end{proof}
The presence of large obstacles in the network does not necessarily imply that the network is not doubling. In particular,
\begin{theorem}
Consider a metric space $\mathcal{E}$ with doubling dimension $\beta$. A metric space $\mathcal{H}$ that can be divided in $k$ sets $S_1,S_2,...,S_k$, such that each set embeds individually with distortion $D_i$ into $\mathcal{E}$ has doubling dimension at most $\sum_{j=1}^{k}\beta^{2\log2D_j}$.
\end{theorem}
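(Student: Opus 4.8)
The plan is to build directly on Theorem \ref{thm:embed} by treating each piece $S_i$ as its own metric space and then gluing the individual coverings together. First I would apply Theorem \ref{thm:embed} to each $S_i$ separately: since $S_i$, equipped with the metric it inherits from $\mathcal{H}$, embeds into $\mathcal{E}$ with distortion $D_i$, and $\mathcal{E}$ has doubling dimension $\beta$, the subspace $S_i$ has doubling dimension $\alpha_i \le \beta^{\log 2D_i}$. This reduces the theorem to a purely combinatorial gluing statement: if $\mathcal{H} = \bigcup_{i=1}^k S_i$ and each subspace $S_i$ has doubling dimension $\alpha_i$, then $\mathcal{H}$ has doubling dimension at most $\sum_{i=1}^k \alpha_i^2$. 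The final bound then follows immediately by substituting $\alpha_i^2 \le \beta^{2\log 2D_i}$.

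To prove the gluing statement, I would fix an arbitrary node $u \in \mathcal{H}$ and radius $R$ and cover the ball $\B{\mathcal{H}}{u}{2R}$ one piece at a time. For each index $i$ for which $\B{\mathcal{H}}{u}{2R}\cap S_i$ is nonempty, I choose a representative $p_i$ in this intersection. By the triangle inequality, every $v \in \B{\mathcal{H}}{u}{2R}\cap S_i$ satisfies $d(p_i,v) \le d(p_i,u)+d(u,v) \le 4R$, so the entire intersection lies inside the subspace ball $\B{S_i}{p_i}{4R}$. The recentering at $p_i \in S_i$, which is forced because $u$ itself need not belong to $S_i$, is exactly what inflates the radius from $2R$ to $4R$.

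Next I would apply the doubling property of $S_i$ twice: the subspace ball of radius $4R$ is covered by $\alpha_i$ subspace balls of radius $2R$, and each of those by $\alpha_i$ subspace balls of radius $R$, for a total of $\alpha_i^2$ subspace balls of radius $R$ centered at points of $S_i$. Since each subspace ball $\B{S_i}{v_j}{R}$ is contained in the ambient ball $\B{\mathcal{H}}{v_j}{R}$, these $\alpha_i^2$ ambient balls cover $\B{\mathcal{H}}{u}{2R}\cap S_i$. Taking the union over all $i$ covers $\B{\mathcal{H}}{u}{2R}$ with at most $\sum_{i=1}^k \alpha_i^2 \le \sum_{j=1}^k \beta^{2\log 2D_j}$ balls of radius $R$, which is the claim.

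The main thing to be careful about, rather than a deep obstacle, is keeping the subspace metric and the ambient metric cleanly separated: the doubling dimension $\alpha_i$ governs coverings of $S_i$ by balls whose centers lie in $S_i$, and the monotonicity fact $\B{S_i}{v_j}{R}\subseteq \B{\mathcal{H}}{v_j}{R}$ is precisely what lets the per-piece coverings assemble into a covering of all of $\mathcal{H}$. I would also note that the squared exponent (the factor $\alpha_i^2$ instead of $\alpha_i$) is not an artifact but a genuine consequence of the recentering step, which forces a radius-$4R$ ball to be resolved down to radius $R$ and hence two applications of the doubling inequality.
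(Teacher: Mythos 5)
Your proof is correct and follows essentially the same route as the paper's: recenter each nonempty piece $\B{\mathcal{H}}{u}{2R}\cap S_i$ at a point $p_i\in S_i$ to get a subspace ball of radius $4R$, apply the doubling bound $\beta^{\log 2D_i}$ from Theorem \ref{thm:embed} twice to go from $4R$ down to $R$, and sum over the pieces. Your write-up is simply more explicit than the paper's about the subspace-versus-ambient-ball distinction and about why the exponent doubles.
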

\begin{proof}
Consider any ball of radius $2R$ in $\mathcal{H}$, such that the nodes in the ball belong to at least two different sets (otherwise the theorem is clearly true). Note that the radius of each of these subsets can be at most $4R$. Consequently, we now that the part of the ball that belongs to $S_i$ can be covered by at most $\beta^{2\log2D_i}$ (by applying Theorem \ref{thm:embed} to cover a ball of radius $4R$ by balls of radius $R$). The theorem follows.
\end{proof}
We can now broaden the class of communication networks that have low doubling dimension. In particular, if we can subdivide the communication graph into a constant number of subsets, such that each one embeds with constant distortion into the Euclidean plane, the whole network is doubling. Consequently, topologies such as the one shown in Figure \ref{fig:DoublingTop} are doubling. In this example, we embed an unweighted graph into the Euclidean plane. Note that the minimal Euclidean distance between nodes should be $\rho r_n$ (for some constant $\rho$), such that $\rho r_n d(u,v)\leq ||x(u)-x(v)||\leq O(1) \rho r_n$. If this equation is true for all pairs of nodes, then the distortion is $O(1)$. There is an issue when the nodes are neighbors in the communication graph, as the above rule implies that the Euclidean distance between such pairs of nodes should then be at least $O(r_n)$. However, we can ignore the distances below $2$ as we will not cover balls of radius 1 (since we have a broadcast medium, the degree of a node does not impact the communication overhead). In such cases, it is obvious that geographic routing would fail, even though the inherent complexity of the network is low. Indeed, packets would get stuck against walls. Remarkably, our routing algorithm is oblivious to the topology and only depends on the doubling dimension. Hence, there is absolutely no need to detect or identify obstacles. The communication overhead will simply depend on the doubling dimension.
\begin{figure}[htbp]
	\centering
	\includegraphics[width=0.8\textwidth,keepaspectratio]{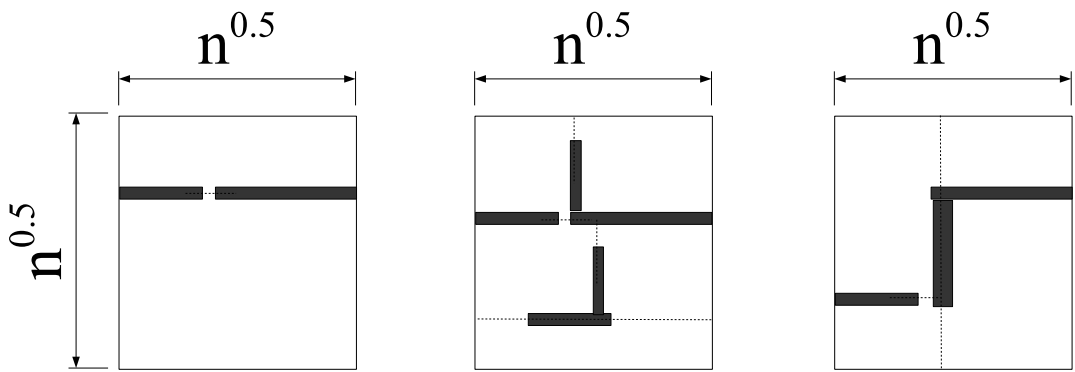}
	\caption{A set of doubling network topologies. The network is dense, and made inhomogeneous by the walls, which do no allow transmissions to go through. Note that the walls stretch when $n$ grows, such that the network wide distortion also grows with $n$. Dashed lines indicate the separation into sets.}
	\label{fig:DoublingTop}
\end{figure}

\subsection{Sequences of Communication Graphs}
In this subsection we study the behavior of a sequence of communication graphs, \emph{without} any obstacles. We show that a sequence of $\mathcal{G}^{(t)}(n,r_n)$ of
length $n^{\rho}$, for some constant $\rho$, with the USL mobility
model is $\kappa$-smooth. As already seen in Theorem \ref{thm:rlar},
such a sequence of graphs is doubling at every time instant.
\begin{theorem}
\label{cor:rwk}
A sequence of $\mathcal{G}^{(t)}(n,r_n)$ of length $\leq n^{\rho}$,
where nodes move according to the USL mobility model with maximum
constant speed $S$ is
\[
max\left\{\frac{r_nd^{(t)}}{\frac{r_nd^{(t)}}{\sqrt{5}\sqrt{2}}-2\sqrt{5}\sqrt{2}\tau
S},\sqrt{5}\sqrt{2}(1+\frac{2\tau
S\sqrt{5}\sqrt{2}}{r_nd^{(t)}})\right\}
\]
smooth w.h.p.
\end{theorem}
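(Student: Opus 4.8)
The plan is to reduce the statement about graph distances at two different times to a statement about Euclidean distances, which evolve slowly under the speed limit, and then convert back. Write $D^{(t)}=\|x^{(t)}(u)-x^{(t)}(v)\|$ for the Euclidean distance and $d^{(t)}=d^{(t)}(u,v)$ for the graph distance. The engine of the proof is a two-sided sandwich relating these two quantities. For the lower bound, every edge of $\G{n}{r_n}$ spans Euclidean length at most $r_n$, so any path of $d^{(t)}$ hops covers at most $d^{(t)}r_n$, giving $d^{(t)}\ge D^{(t)}/r_n$. For the upper bound I would route along the squarelet grid $\mathcal{L}$ of side $s_n=r_n/c$ with $c=\sqrt5$: by Corollary \ref{lemma:fs} (and, for the whole length-$n^\rho$ sequence, Theorem \ref{lem:seqcg}) every squarelet is occupied w.h.p., so horizontally and vertically adjacent squarelets are connected, exactly as in the proof of Theorem \ref{thm:rlar}. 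Walking through the grid uses at most a Manhattan number of hops, and converting the $\ell_1$ grid length back to Euclidean distance (factor $\sqrt2$) with the squarelet side $s_n$ yields $d^{(t)}\le \sqrt2\,c\,D^{(t)}/r_n=\sqrt5\sqrt2\,D^{(t)}/r_n$. Hence $\frac{D^{(t)}}{r_n}\le d^{(t)}\le \frac{\sqrt5\sqrt2\,D^{(t)}}{r_n}$ holds simultaneously for all pairs, on a single w.h.p. event.

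Next I would combine this sandwich with the slow evolution of the Euclidean distance. Condition (2) of the USL model bounds each one-step displacement by $S$, so over $\tau$ steps each of $u,v$ moves by at most $\tau S$ and, by the triangle inequality, $|D^{(t+\tau)}-D^{(t)}|\le 2\tau S$. For the expansion ratio I use $d^{(t+\tau)}\le \sqrt5\sqrt2\,D^{(t+\tau)}/r_n \le \sqrt5\sqrt2\,(D^{(t)}+2\tau S)/r_n$; dividing by $d^{(t)}$ and bounding the remaining $D^{(t)}$ from below by $r_n d^{(t)}/(\sqrt5\sqrt2)$ (the upper-bound half of the sandwich) gives $\frac{d^{(t+\tau)}}{d^{(t)}}\le \sqrt5\sqrt2\big(1+\frac{2\sqrt5\sqrt2\,\tau S}{r_n d^{(t)}}\big)$, which is the second term of the claimed $\kappa$. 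Symmetrically, for the contraction ratio I bound the numerator by $d^{(t)}\le \sqrt5\sqrt2\,D^{(t)}/r_n$ and the denominator by $d^{(t+\tau)}\ge D^{(t+\tau)}/r_n\ge (D^{(t)}-2\tau S)/r_n$, again substituting the extremal value $D^{(t)}\ge r_n d^{(t)}/(\sqrt5\sqrt2)$ (the expression is monotone in $D^{(t)}$, so the smallest admissible $D^{(t)}$ is the worst case); this produces the first term $\frac{r_n d^{(t)}}{\frac{r_n d^{(t)}}{\sqrt5\sqrt2}-2\sqrt5\sqrt2\,\tau S}$. Taking the maximum of the two ratios and invoking Definition \ref{def:kappa} finishes the argument.

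The main obstacle is the upper-bound half of the sandwich, which is where the geometry and the scaling $r_n>\sqrt{\log n}$ really enter: it requires that no squarelet be empty, and crucially that this holds at \emph{every} one of the $n^\rho$ time steps simultaneously, so that the grid $\mathcal{L}$ is a connected subgraph one can route on at all times. That uniform-over-time occupancy is precisely Theorem \ref{lem:seqcg}, and the union bound carried out there is what forces the restriction to sequences of length at most $n^\rho$; a naive per-step argument would not survive the union over the whole sequence. The only remaining care is the bookkeeping of the constant $c=\sqrt5$ together with the $\ell_1$-to-$\ell_2$ factor $\sqrt2$, which is what makes $\sqrt5\sqrt2$ recur in both terms, and checking that the additive rounding terms incurred when passing between the grid and Euclidean distances are dominated over the relevant range of $d^{(t)}$.
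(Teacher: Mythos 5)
Your argument is correct and is essentially the paper's own proof: the same two-sided sandwich $\frac{D^{(t)}}{r_n}\le d^{(t)}\le\frac{\sqrt{5}\sqrt{2}\,D^{(t)}}{r_n}$ obtained from hop-length $\le r_n$ on one side and grid routing through occupied squarelets (with the $\sqrt{5}$ from $c$ and the $\sqrt{2}$ from the $\ell_1$/$\ell_2$ conversion) on the other, combined with $|D^{(t+\tau)}-D^{(t)}|\le 2\tau S$ and the substitution of the extremal $D^{(t)}$, with the w.h.p. qualifier over the whole length-$n^{\rho}$ sequence supplied by Theorem \ref{lem:seqcg}. The only nitpick is the wording of the expansion-ratio step: you first replace $d^{(t)}$ in the denominator by its lower bound $D^{(t)}/r_n$ and only then lower-bound the remaining $D^{(t)}$ by $r_nd^{(t)}/(\sqrt{5}\sqrt{2})$; stated that way the computation is exactly the paper's.
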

\begin{proof}
Consider two nodes $u$ and $v$ at Euclidean distance
$q_2^{(t)}=||x_{u}-x_{v}||_2$ at time $t$. Let
$q_1^{(t)}=||x_{u}-x_{v}||_1=\sum_{m=1}^{2}|x_m(u)-x_m(v)|$. Further,
denote by $d^{(t)}=\oft{d}(u,v)$ their shortest path distance at time
$t$. One can see that $\frac{q_2^{(t)}}{r_n}\leq \oft{d} \leq
\frac{\sqrt{5}\sqrt{2}q_2^{(t)}}{r_n}$.  Indeed, the shortest possible
path will follow a straight line between $u$ and $v$. The length of
this line is $q_2^{(t)}$ and one hop can be of length at most
$r_n$. In the worst case, the shortest path from $u$ to $v$ will
follow the shortest path in the grid formed by the small squares of
side $\frac{r_n}{c}=\frac{r_n}{\sqrt{5}}$, which exists w.h.p. Recall
that we can only guarantee horizontal and vertical connectivity
between small squares. The number of small squares in this path will
be at most $\frac{\sqrt{5} q_1^{(t)}}{r_n}$. One can easily show that
$q_1^{(t)}\leq \sqrt{2} q_2^{(t)}$. Let
$x=\left(\begin{array}{c}x_1\\x_2\end{array}\right)
=\left(\begin{array}{c}x_1\\sx_1\end{array}\right)=(x_u-x_v)$. We have
\[
\begin{array}{ll}
q_2^{(t)}&=\sqrt{x_1^2+x_2^2}
=x_1\sqrt{1+s^2}\\
&=(1+s)x_1\frac{\sqrt{1+s^2}}{1+s}
=q_1^{(t)}\frac{\sqrt{1+s^2}}{1+s}.
\end{array}
\] 
Since, we have $\frac{q_2^{(t)}}{q_1^{(t)}}=\frac{\sqrt{1+s^2}}{1+s}$,
the term is maximized when $s=1$. In Figure \ref{fig:sqrt2}, we
illustrate this point.
\begin{figure}[htbp]
\centering
\includegraphics[width=0.9\textwidth,keepaspectratio]{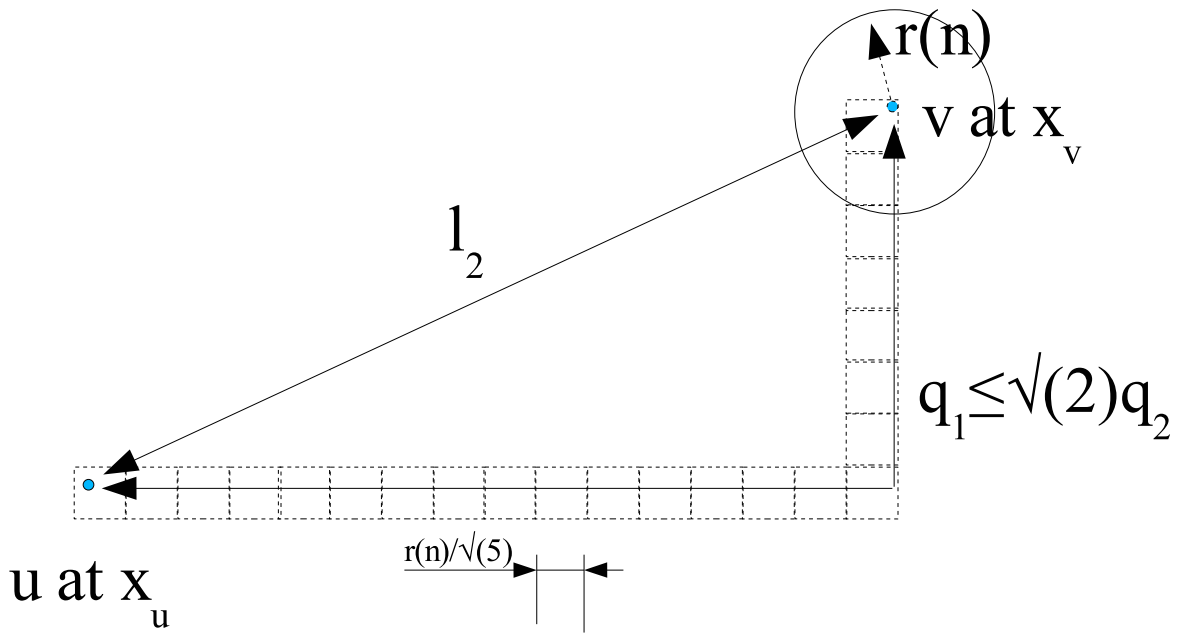}
\caption{Upper and lower bounds for the shortest path}
\label{fig:sqrt2}
\end{figure}  
Similarly, at time $t+\tau$, the shortest path distance will be
bounded by $\frac{q^{(t+\tau)}_2}{r_n}\leq
d^{(t+\tau)} \leq
\frac{\sqrt{5}\sqrt{2}q^{(t+\tau)}_2}{r_n}$. However, we know that the
Euclidean distance can change by at most $2\tau S$ in $\tau$ time
steps\footnote{One can show that this remains true even if the nodes
are reflected on the borders of the network}. Consequently,
\begin{equation}
\frac{q_2^{(t)}-2\tau S}{r_n}\leq d^{(t+\tau)} \leq
\frac{\sqrt{5}\sqrt{2}(q_2^{(t)}+2\tau S)}{r_n}
\end{equation}
We can now bound the multiplicative stretch as follows: 
\iffalse
\begin{equation}
\begin{array}{lll}
 \frac{q_2^{(t)}-2\tau S}{\sqrt{5}\sqrt{2}q_2^{(t)}}&\leq
\frac{d^{(t+\tau)}}{\oft{d}}&\leq \frac{\sqrt{5}\sqrt{2}(q_2^{(t)}+2\tau
S)}}{q_2^{(t)}\\ \rightarrow\ \frac{1}{\sqrt{5}\sqrt{2}}-\frac{2\tau
S}{\sqrt{5}\sqrt{2}q_2^{(t)}}&\leq \frac{d^{(t+\tau)}}{\oft{d}}& \leq
\sqrt{5}\sqrt{2}(1+\frac{2\tau S)}{q_2^{(t)}})\\
\end{array}
\end{equation}
\fi
Hence, 
\[
\begin{array}{l}
max\left\{(\frac{1}{\sqrt{5}\sqrt{2}}-\frac{2\tau
S}{\sqrt{5}\sqrt{2}q_2^{(t)}})^{-1},\sqrt{5}\sqrt{2}(1+\frac{2\tau
S)}{q_2^{(t)}})\right\}\\ = max\left\{\sqrt{5}\sqrt{2}\frac{q_2^{(t)}}{q_2^{(t)}-2\tau
S},\sqrt{5}\sqrt{2}(1+\frac{2\tau S}{q_2^{(t)}})\right\}\\
=max\left\{\frac{r_n\oft{d}}{\frac{r_n\oft{d}}{\sqrt{5}\sqrt{2}}-2\sqrt{5}\sqrt{2}\tau
S},\sqrt{5}\sqrt{2}(1+\frac{2\tau
S\sqrt{5}\sqrt{2}}{r_n\oft{d}})\right\} =\kappa(\tau,d)
\end{array}
\]
\end{proof}
One can now observe that the time it takes to multiply the shortest
path distance between two nodes at distance $d$ is proportional to
$d$. Note that the larger the communication radius $r_n$, the smaller
$\kappa$.  Hence, the distance grows at most linearly with time. In
particular, we have:
\begin{corollary}
\label{cor:kappa}
There exist constants $\nu$ and $\kappa$ defined in the proof such
that a sequence of $n^{\rho}$ connectivity graphs, under the USL
mobility model with maximum constant speed $S$, is
$\kappa$-smooth w.h.p.
\end{corollary}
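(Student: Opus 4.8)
The plan is to invoke Theorem~\ref{cor:rwk}, which already delivers the smoothness factor $\kappa(\tau,d)$ in closed form, and then simply to specialise the time horizon to $\tau=\nu d$ for a suitably chosen constant $\nu$. The entire content of the corollary is the observation that, under this substitution, the dependence on $d$ cancels in both terms of the maximum, leaving a quantity that depends only on $\nu$, $S$, and $r_n$. Since $\kappa(\nu d,d)$ would then be independent of $d$, it matches verbatim the defining requirement of a $\kappa$-smooth network from Definition~\ref{def:kappa}.

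Concretely, I would write out the two arguments of the maximum from Theorem~\ref{cor:rwk} and substitute $\tau=\nu d$ into each. In the first term the numerator is $r_n d$ and the denominator is $\frac{r_n d}{\sqrt{5}\sqrt{2}}-2\sqrt{5}\sqrt{2}\nu d S$; factoring $d$ out of the denominator and cancelling against the numerator gives
\[
\frac{r_n}{\frac{r_n}{\sqrt{5}\sqrt{2}}-2\sqrt{5}\sqrt{2}\nu S},
\]
which no longer contains $d$. The same cancellation applies to the second term, where $\sqrt{5}\sqrt{2}\bigl(1+\frac{2\nu d S\sqrt{5}\sqrt{2}}{r_n d}\bigr)=\sqrt{5}\sqrt{2}\bigl(1+\frac{2\nu S\sqrt{5}\sqrt{2}}{r_n}\bigr)$. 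Taking the maximum of these two $d$-independent expressions defines the constant $\kappa=\kappa(\nu)$ promised in the statement.

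The one point requiring care is the choice of $\nu$: the denominator of the first term must stay positive, i.e. $\frac{r_n}{\sqrt{5}\sqrt{2}}>2\sqrt{5}\sqrt{2}\nu S$, equivalently $\nu<\frac{r_n}{20S}$. Because $r_n=\sqrt{(1+\epsilon)\log n}\to\infty$, any fixed constant $\nu$ satisfies this for all sufficiently large $n$, so I would simply fix such a $\nu$ independent of both $d$ and $n$. I expect this positivity check to be the only subtle issue, and it is mild: rewriting the first term as $\frac{\sqrt{5}\sqrt{2}}{1-20\nu S/r_n}$ shows that as $n$ grows both arguments of the maximum tend to $\sqrt{5}\sqrt{2}$, so $\kappa$ is not merely finite for each $n$ but uniformly bounded across the whole length-$n^{\rho}$ sequence. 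Finally, the high-probability qualifier carries over directly from Theorem~\ref{cor:rwk}, whose bound itself rests on Theorem~\ref{lem:seqcg} guaranteeing that every small square remains populated throughout the sequence; no further probabilistic argument is needed.
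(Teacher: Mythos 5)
Your proposal is correct and follows essentially the same route as the paper: invoke Theorem~\ref{cor:rwk}, set $\tau=\nu d$, and observe that the resulting bound no longer depends on $d$. The only (cosmetic) difference is that the paper additionally bounds the worst case by setting $r_n=1$ and $d^{(t)}=1$ before substituting, whereas you cancel $d$ exactly and keep $r_n$, handling the positivity of the denominator via $r_n\to\infty$; both yield a valid constant $\kappa(\nu)$.
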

\begin{proof}
By theorem \ref{cor:rwk}, we know that the sequence is
\[max\left\{\frac{r_n
\oft{d}}{\frac{r_n\oft{d}}{\sqrt{5}\sqrt{2}}-2\sqrt{5}\sqrt{2}\tau
S},\sqrt{5}\sqrt{2}(1+\frac{2\tau
S\sqrt{5}\sqrt{2}}{r_n\oft{d}})\right\}\] -smooth w.h.p. Note
that both terms decrease as a function of the communication radius
$r_n$. Hence, we can set $r_n=1$ without decreasing
$\kappa(\tau,d)$. Similarly, both terms go down when the distance
$\oft{d}$ goes up. We can therefore also set $\oft{d}=1$, which is the
smallest possible distance in an unweighted graph. Consequently, if we
set $\tau=\nu \oft{d}=\nu$, we can now write
\[
\kappa(\tau,d)\leq
max\left\{\frac{1}{\frac{1}{\sqrt{5}\sqrt{2}}-2\sqrt{5}\sqrt{2}\nu
S},\sqrt{5}\sqrt{2}(1+2\nu S\sqrt{5}\sqrt{2})\right\}
\]
which is constant for $\nu$ constant.
\end{proof}

\section{Routing Algorithm}
\label{sec:RoutingAlgo}

We develop the routing algorithm and its performance analysis for a
general class of dynamic networks which produce a sequence of doubling
and smooth connectivity graphs. We have seen in Sections \ref{sec:mod}
and \ref{sec:npp} that this applies to a class of wireless
connectivity models with USL mobility. For notational convenience we illustrate the ideas for a sequence $\mathcal{G}^{(t)}(n,r_n)$
geometric random graphs with USL mobility.

We decompose a time step into two phases: a \emph{beaconing} phase and
a \emph{forwarding} phase. In the former phase, a set of routes are
established by letting all or a subset of nodes flood the network at
geometrically decreasing radii and nodes register with beacon
nodes. In the latter phase, this subset of routes is then utilized by
source nodes to efficiently search for the destination. Every node is
equipped with a routing table as shown in Table \ref{tab:RT}.
\begin{table}[htb]
\centering
\begin{tabular}{|c|c|c|c|}
\hline \textit{Node identifier} & \textit{distance}
$\left[hops\right]$ & \textit{level} & \textit{next hop}\\\hline
\vdots & \vdots & \vdots & \vdots\\\hline
\end{tabular}
\caption{Routing Table RT}
\label{tab:RT}
\end{table}
We first describe two procedures used in the beaconing and the routing phase.

\textbf{flood(R,level) procedure:}
When a node $u$ initiates the $flood(R,level)$ procedure, it
broadcasts a \emph{flood packet} as shown in Table \ref{tab:FP} to its
direct neighbors in \G{n}{r_n}. The \textit{hop count }field is
initialized to $0$ and the content of the \textit{Level} field is set to the level of the beacon. How the level of the beacon is determined will be
specified in the sequel. All nodes can compute the maximum hop count
given the value of the level field in the packet.
\begin{table}[htb]
\centering
\begin{tabular}{|c|c|c|c|}
\hline
\textit{Pkt. Type} & \textit{Node Id.} & \textit{Hop Count} & \textit{Level}  \\
$O(1)$ bits & $O(\log n)$ bits & $O(\log n)$ bits & $O(\log \Delta)$ bits \\
\hline
\end{tabular}
\caption{Flood Packet}
\label{tab:FP}
\end{table}
The neighbors which receive this packet, after increasing the hop
count by $1$, add an entry to their routing table for node $u$ if no
entry for the same node and level with lower or equal hop count is present
in the RT. The \textit{next hop} field is set to
the identifier of the node from which the packet was received. The
level field in the routing table is set to the level given in the
packet. In turn, the nodes which got the packet from $u$ broadcast this packet to their neighbors. The latter follow the same procedure and update the routing table if
necessary. The packet is discarded when the hop count reaches the
maximum hop count (which is a function of the level). Note that with
this procedure, every node forwards the packet at most once and the
distance added to the routing table is the shortest path distance in
\G{n}{r_n}. This procedure also allows us to establish a reverse path from all nodes that get the packet back to $u$. Indeed, it suffices for all these nodes to store the identifier of the node from which they received the packet\footnote{The packet for which they modified their routing table}. Further, for any node $v$, that reverse path a shortest path to $u$.

\textbf{probe(relay,destination) procedure:}
This procedure consists in sending a \textit{probe packet} (see Table \ref{tab:PP}) to a ``relay''
node for which the source has an entry in its routing table. The relay
node will set the success bit to $1$ if it has an entry for the
destination and $0$ otherwise. We will make sure that all nodes on the
path between the source and the relay node have an entry for the relay
node in their routing table. Additionally, nodes on the path add a
temporary entry for the source in the routing table. They set the
\emph{next hop }field to the identifier of the node from which they
received the packet and leave the \emph{level} and \emph{distance}
field empty.
\begin{table}[htb]
\centering
\begin{tabular}{|c|c|c|c|}
\hline
\textit{Pkt. Type} & \textit{Relay Id.} & \textit{Dest. Id.}& Success\\
$O(1)$ bits & $O(\log n)$ bits & $O(\log n)$ bits & $1$ bit\\
\hline
\end{tabular}
\caption{Probe Packet}
\label{tab:PP}
\end{table}
Upon receiving the packet, the relay node can either answer to the
source on the reverse path we just created if the answer is
negative. Alternatively, it can take action as explained in the sequel
if it has an entry for the destination.  \par We now separately detail
the beaconing and the routing algorithms underlying our routing
protocol
\subsection{Beaconing Algorithm}
In this subsection, we start by describing the first time step, when nodes have not yet
moved and no information has been exchanged. In a static network, the information exchanged in this first step would be sufficient to setup a complete routing infrastructure. On the other hand, in a mobile environment, we would need to cope with the dynamic topology and constantly update the routing tables. How we deal with a dynamic environment is explained at the end of the subsection. Let the \emph{cover radius} at level $i$, for
$i=1,...,\log\Delta$ ($\Delta$ being the diameter of the network), be
defined as $r_i=2^i$ and the \emph{flooding radius} at level $i$ be
defined as $f_i=\kappa(r_{i+1}+r_{i})$, where $\kappa$ is a constant
chosen such that $\kappa(\nu d,d)\leq \kappa d$, $\forall d$. In order for the routing algorithm to work properly, it is crucial that beacons at level $i$ are within the flooding radius of the beacons at level $i+1$, if they have common nodes inside their cover radii (see Fig. \ref{fig:floodradius}). 
\begin{figure}[htp]
\centering
\includegraphics[width=0.6\textwidth,keepaspectratio]{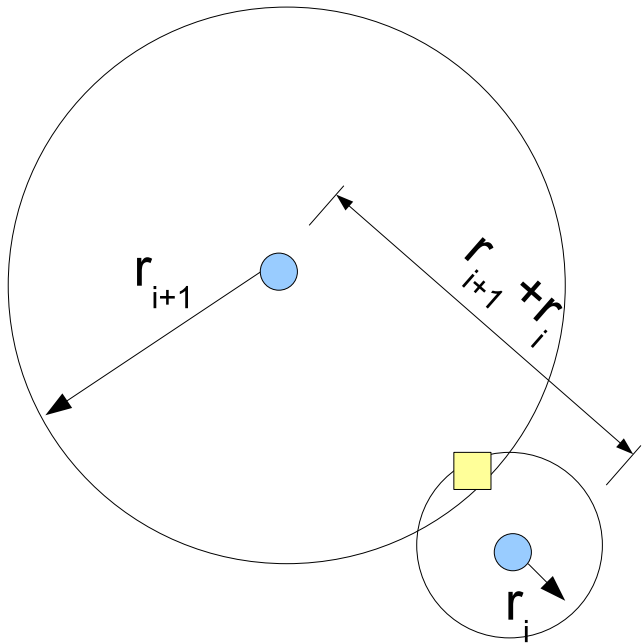}
\caption{The flooding radius is chosen in such a way that beacons at level $i$ hear the floods of beacons at level $i+1$. In static networks, this is $r_i+r_{i+1}$.}
\label{fig:floodradius}
\end{figure} 
This is why we define the flooding radius above. Note that if the network is static, we can set $\kappa$ to $1$. Else, the value of $\kappa$ depends on how mobile the nodes are (see Def. \ref{def:kappa}).
\par
The idea of the algorithm is to build a hierarchical cover of the network
\emph{i.e.,} we would like every node in the network to be within
$r_i$ hops of a beacon node at every level $i$. We say that when a node is
within $r_i$ of a beacon $b$ at level $i$, it is a member of $b$'s
\emph{cluster} at level $i$, but it can only be in one cluster at
every level. To achieve this, we let the nodes flood in a random order
which can change at every time step. Every node $u$ is a beacon at a
given level $\beta(u)$. The flooding radius, however, will depend on
the highest level at which a node is not covered. Let us denote by
$h(u)$ the highest level at which node $u$ is not covered. When node
$u$'s turn to flood comes, it will determine the value of $h(u)$ set
$\beta(u)=h(u)$ and call $flood(f_{h(u)},h(u))$. A node $v$ which
receives this flood will determine the lowest level at which it could
be a member of $u$'s cluster, say $l(v)$. That is, it will determine
the lowest value $j$ for $l(v)$ such that $d(u,v)\leq 2^j$. This
distance $d(u,v)$ is known since $v$ just received a flood packet from
$u$. It will then become a member of $u$'s cluster for all levels
above $l(v)$ for which it has no membership yet and are below
$\beta(u)$. If a node becomes a member of $u$'s cluster, it sends
a \emph{membership packet} (see Table \ref{tab:memp})
\begin{table}[htbp]
\centering
\begin{tabular}{|c|c|c|c|}
\hline
\textit{Pkt. Type} & \textit{Node Id.} & \textit{Beacon Id.} & \textit{Level}  \\
$O(1)$ bits & $O(\log n)$ bits & $O(\log n)$ bits & $O(\log \Delta)$ bits \\
\hline
\end{tabular}
\caption{Membership Packet}
\label{tab:memp}
\end{table}
 back to $u$. In this way, $u$ learns the identifier of all nodes in its cluster. Note that $u$ also applies this procedure to itself,
and consequently could be a beacon at level $i$ but not at level
$j<i$.  \par The control traffic will be dominated by the messages
sent back by nodes to beacons when they become members of a
cluster, so should be rare. Moreover, we do not want the distance between a node
and its beacon to grow by more than a constant
factor. Since we assume that the maximum speed of the node is
constant, the higher the level of a beacon, the more time it will take
for nodes to double their distance to this beacon. We want to
\emph{elect} new beacons and update memberships only for levels at
which the distances could have been multiplied by a constant factor. Recall
that the network is $\kappa$-constrained. Consequently, the distance
$\oft{d}(u,v)$ between two nodes $u$ and $v$ cannot change by a factor
$\kappa$ in less than $\nu d$ time steps (see Corollary
\ref{cor:kappa}). In particular, if a node is at distance $2^i$ of a
beacon at the time it becomes a member of its cluster, then we have
$d^{t+\nu 2^i}\leq \kappa 2^i$. Hence, we update the memberships
at level $i$ only every $\nu 2^i$ time steps (see Figure \ref{fig:freq}).
\begin{figure}[htp]
\centering
\includegraphics[width=1\textwidth,keepaspectratio]{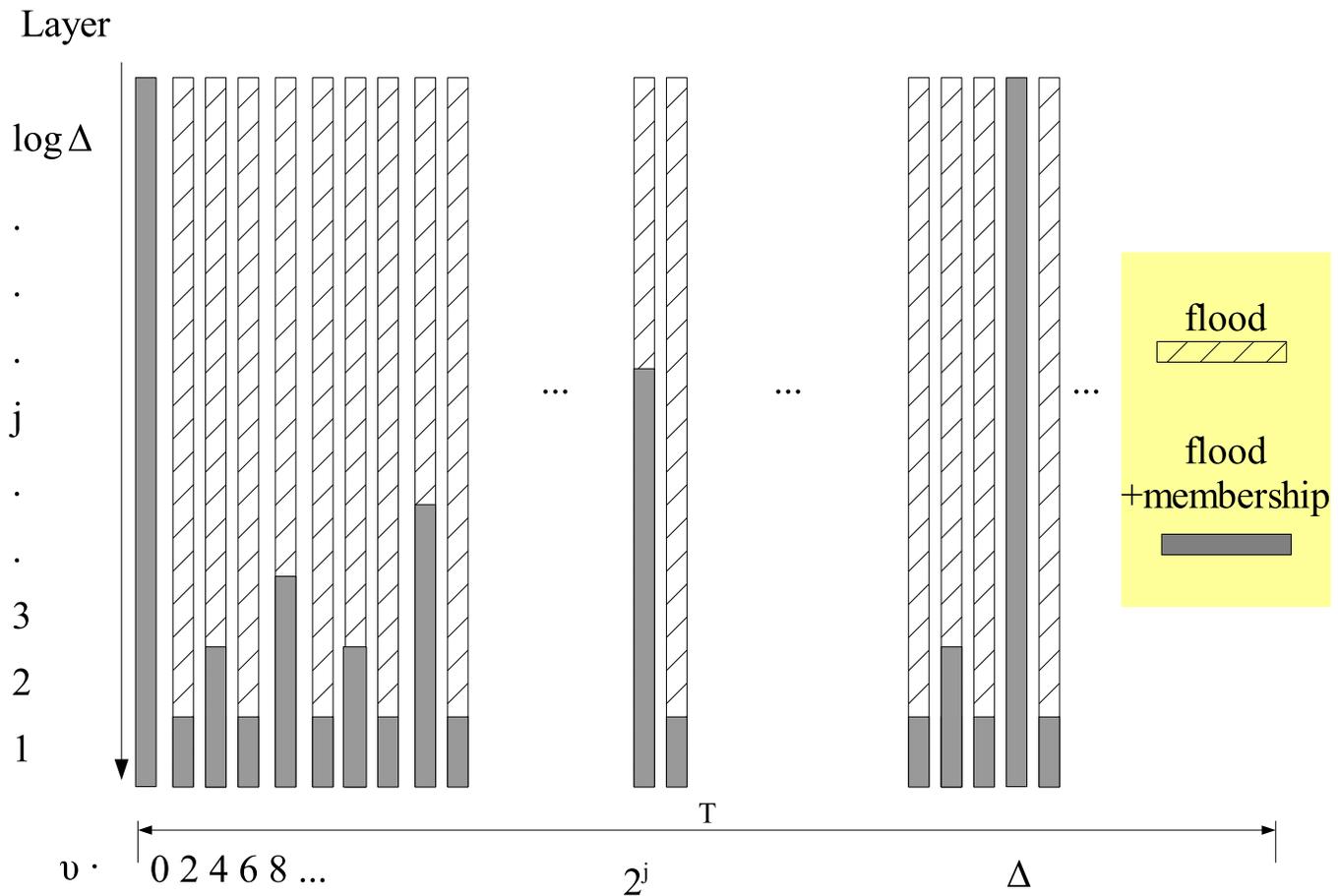}
\caption{The memberships up to level $i$ are updated every $\nu 2^i$
time steps. At the levels above, beacons elected at earlier time steps
simply flood again.}
\label{fig:freq}
\end{figure} 
This will lead to a routing scheme in which the distances can be
distorted by at most a constant factor to be calculated in the
sequel. Additionally, in a dynamic environment, routes can break. This
is why we let the beacons at all levels flood at every time
step. Levels at which no membership updates take place simply use the
floods of the beacons to update their routes toward theses
beacons. This will ensure that a route always exists for all pair of
nodes.
\begin{figure}[htp]
\centering
\includegraphics[width=1\textwidth,keepaspectratio]{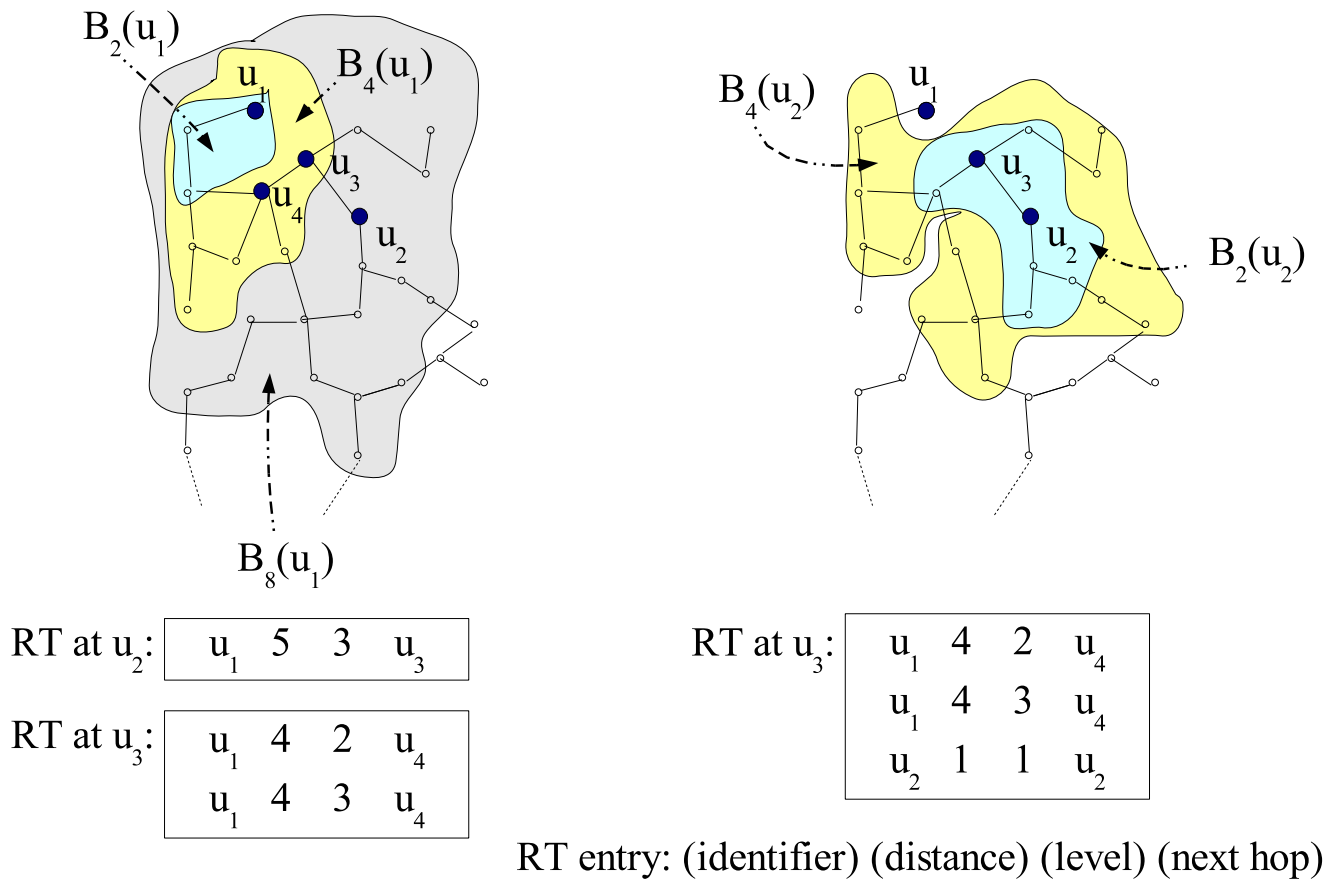}
\caption{The example start with empty routing tables. First, on the left,
node $u_1$ floods at level $3$. We focus on nodes $u_2$ and
$u_3$. Node $u_2$ is within $8$ hops from $u_1$ but further away than
$4$ hops. Consequently, it can only had an entry for node $u_1$ at
level $3$. At the same time, node $u_3$ can add an entry for node
$u_1$ at the levels $2$ and $3$, since it is at distance $4$ of
$u_1$. Next, on the right, $u_2$'s turn to flood comes
(right after $u_1$'s turn). This node is already covered at level
$3$. Consequently, it will flood at level $2$. The node $u_3$ could
potentially add an entry for this node at levels $1$ and $2$. However,
it is already covered at level $2$ and so adds only an entry for level
$1$. We do not show the entries beacons add for themselves.}
\label{fig:beac}
\end{figure}   

In Figure \ref{fig:beac}, we give a simple with three levels. The
beaconing algorithm is presented in Algorithm \ref{algo:beac}. It is
important to note that the routes are updated at every time step and
consequently routing toward a beacon will always be
successful. Further, when the membership at a given level $i$ is
updated, all the memberships at the levels $j<i$ will also be updated,
and all memberships at these levels canceled.  \small
\begin{algorithm}[htb]
\SetLine 
\KwData{Routing Table, Time t} 
\Begin{
Let $\Gamma=\max \left\{0\leq j\leq \log\Delta| t\ mod \nu 2^j=0 \right\}$\; 
Clear routing table entries with $level\leq \Gamma$\; 
Let $\beta(u)$ be the level at which $u$ is a beacon\;
\If{$\pi(\ell)=u$}
{
\If{$\beta(u)\leq \Gamma$}
{Let $h(u)$ be the highest level at which $u$ is not covered\; 
$\beta(u)=h(u)$;\
}
$flood(f_{h(u)},h(u))$\; 
}
}
\caption{Beaconing Algorithm at node $u$}
\label{algo:beac}
\end{algorithm}
\normalsize

\subsection{Forwarding Algorithm}
The forwarding algorithms works as follows: a source node $u$ with a message for a target node $v$ searches for $v$ by first
probing all the level 1 beacon it knows of. To do so, it looks at its routing table and selects all nodes it knows of at
level $1$. If all answers are negative, node $u$ probes all level $2$
beacons it knows of. The procedure is repeated as long as all beacons
answer negatively. A beacon at level $i$ with an entry for the
destination in its routing table does not answer directly to the
source. Rather, this node will search downwards in the hierarchy by probing all the level $i-1$ beacons it knows
of. We show in the next section that one of these beacons must
have an entry for the destination. That beacon in turn probes all the
beacons in knows of at level $i-2$. Meanwhile, the other beacons at
level $i-1$ will answer negatively to the beacon at level $i$. The
procedure is repeated recursively until the target itself is
reached. The target will then answer to the source on the reverse path
which will later be used for communication between the source and the
destination. To summarize, the forwarding algorithm starts with an ``upstream'' phase during which the source node probes beacons level by level until a beacon is found which has the destination in its cluster. That beacon then starte a ``downstream'' phase, during which we go down in the hierarchy. We illustrate the forwarding procedure conceptually in
Figure \ref{fig:probe}.
\begin{figure}[ht]
\centering
\includegraphics[width=1\textwidth,keepaspectratio]{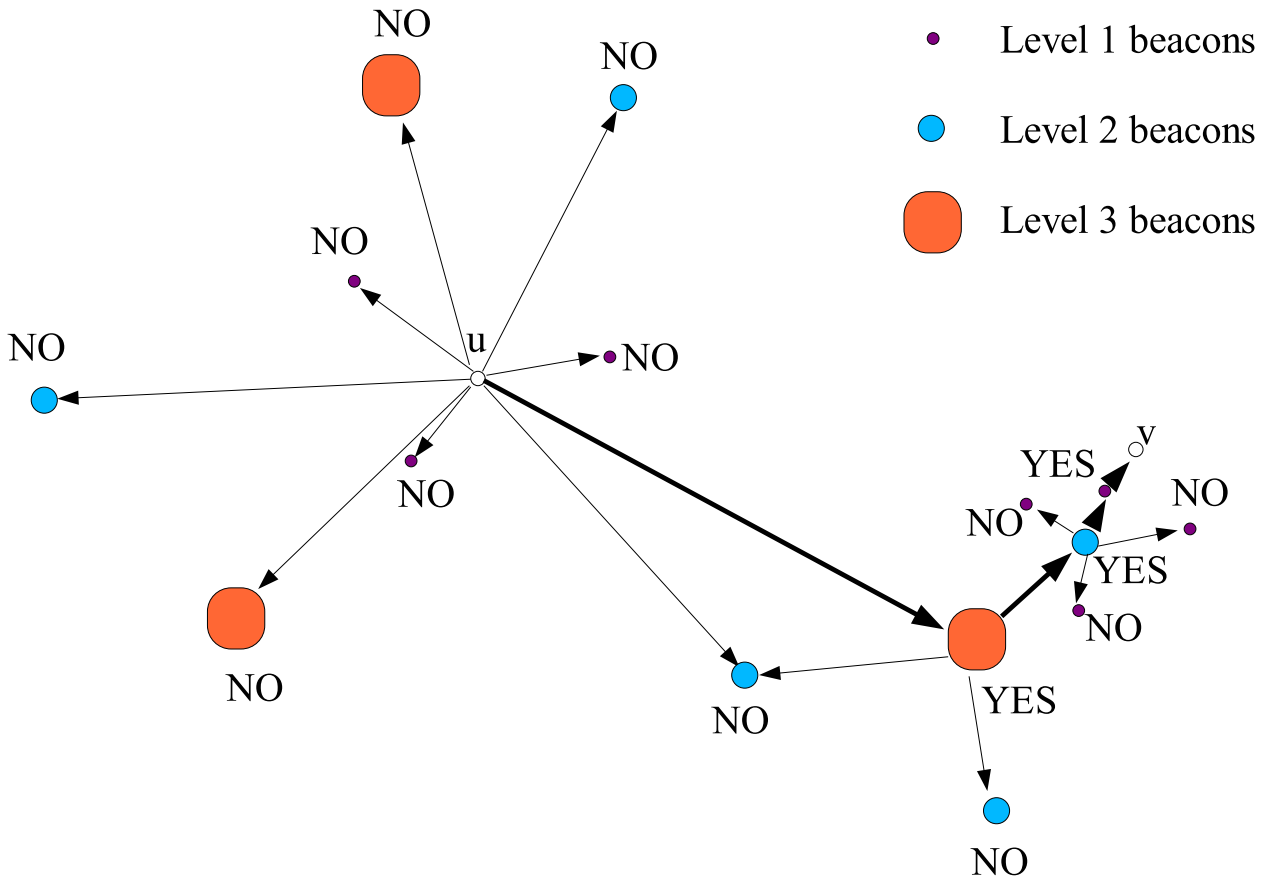}
\caption{Node $u$ has a packet for node $v$. It searches in its
routing table for all beacons it knows of at level $1$ and sends them
a probe packet containing $v$'s identifier. None of the beacons at
level $1$ has an entry for this node and consequently they all answer
negatively to node $u$. Next, node $u$ repeats the same procedure with
all the beacons it knows of at level $2$. Again, all beacons answer
negatively. On the third level, now, a beacon has an entry for node
$v$. This beacon will probe all the beacons it knows of at level $2$,
while the other beacons at level three will answer negatively to
$u$. A beacon at level $2$ must have an entry for $v$. This beacon
again probes all the beacons it knows of at level $1$ among which one
must have an entry for $v$ itself. Meanwhile, the other beacons reply
negatively as they do not have any entry for $v$.}
\label{fig:probe}
\end{figure}   
\subsection{Load-balancing}
This approach above guarantees a low network wide control traffic
overhead. Even though over a long period of time all nodes will get
approximately the same average overhead, beacons at the highest levels
might get overloaded by the membership packets of the nodes in their
cluster when a membership update takes place. These nodes will be hot
spots in the network for a short period of time. To work around this
problem, memberships can be distributed in the cluster instead of
stored at the beacon itself. First, we now set
$f_i'=\kappa(2r_{i+1}+r_i)$. Additionally, whenever a beacon floods at
level $i$, it includes its membership at level $i+1$ in the
packet. This information is stored by all nodes that receive this
flood packet. This will guarantee that all nodes that are members of a
cluster at level $i$, know how to reach all beacons at level $i-1$
inside that cluster. A node that becomes a member of the cluster of
beacon $b_i(u)$ at level $i$ will now send its membership packet
directly toward the beacon $\psi_{i-1}(u)$ at level $i-1$ inside this
cluster with the identifier closest to $u$'s. In turn, as soon as the
packet reaches a node which is a member of $\psi_{i-1}(u)$'s cluster
at level $i-1$, the membership packet is redirected toward the beacon
$\psi_{i-2}(u)$ which is a member of $\psi_{i-1}(u)$'s cluster at
level $i-1$ and has the identifier closest to node $u$'s. The process
is repeated until we reach a single node, which will store $u$'s
identifier on behalf of $b_i(u)$. Note that the membership can only be
registered at a single location in the cluster reachable through a
unique sequence of clusters. This remains true even when nodes
move. Indeed, the nodes in the cluster of $b_i(u)$ will only forward
the packet to beacons at level $i-1$ which were in the same cluster at
the time the membership for this level got updated. Of course,
whenever level $j<i$ is updated, we do now not only need to send
$u$'s identifier toward its new beacon at that level. Additionally,
the node that holds $u$'s identifier at level $j$ might not be
reachable anymore through a path of clusters with identifiers closest to
$u$'s. Consequently, this node will need to forward $u$'s identifier
toward the beacon at level $j$ with the identifier closest to
$u$'s. Again the process will be repeated recursively until a single
node is reached. As we will see, the cost of avoiding hot spots is a
factor $\log{n}$ in the total control traffic. Finally and most
importantly, with this procedure beacons no longer get
overloaded. Rather, the traffic is be distributed in its cluster.
\par The data forwarding process remains the same except that the
source node will not probe the beacon itself, but rather search for
the node in the beacon's cluster that should hold the destination's
identifier. If this node holds the identifier, it will then probe the
beacons one level below in the same way. Recall that the nodes which
potentially hold $u$'s membership can be reached at any given instant
in time through a unique sequence of clusters. The procedure is
repeated until the destination is reached.

\section{Performance Analysis}
\label{sec:PerfAnal}

In this section, we analyze the performance of our algorithm
analytically both in terms of control traffic and of route stretch. As
in Section \ref{sec:RoutingAlgo}, we do this for a sequence of
doubling and smooth connectivity graphs, and will use
$\mathcal{G}^{(t)}(n,r_n)$ with USL mobility for illustration.

The bounds derived in this section hold w.h.p. when we are in a
sequence of length
$n^{\rho}$ of $\alpha$-doubling connectivity graphs. In the sequel, $\alpha$, $\kappa$ and $\nu$ are the
constants derived in Section \ref{sec:npp}. Let us denote by
$\Delta=O(\sqrt(\frac{n}{\log(n)}))$ the diameter of the network. To
bound the control traffic necessary for beaconing, we will rely on the
$\alpha$-doubling property of the metric space to show that a node can
only hear a constant number of beacons at every layer. We will first
show that a ball of radius $2R$ around any node $u$ can only contain
at constant number of balls (clusters) of radius $R$, when we select
the centers of the balls of radius $R$ in an arbitrary order and
ensure that two centers cannot be closer than $R$. We will later use
this result to show that a node can hear at most a constant number of
beacons at any given level.
\begin{theorem}[Random Cover]
\label{TRC}
Let $\B{X}{u}{2R}$ be a ball of radius $2R$ centered at $u$ in a graph metric $(X,d)$
with doubling constant $\alpha$. Then, there exist
at most $k\leq \alpha^2$ nodes $v_i$, $(i=1,2,..,k)$ such that
$\B{X}{u}{2R}\subseteq \bigcup_{i}^{k}\B{X}{v_i}{R}$ and
$min_{(i,j)}d(v_i,v_k)>R$.
\end{theorem}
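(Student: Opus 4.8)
The plan is to realize the $v_i$ as a maximal $R$-separated subset of $\B{X}{u}{2R}$, constructed greedily, and then to establish the two asserted properties — that the $v_i$ cover $\B{X}{u}{2R}$, and that there are at most $\alpha^2$ of them — as separate arguments. First I would build the centers by processing the nodes of $\B{X}{u}{2R}$ in the arbitrary order referred to in the statement and adding a node to the set $\{v_i\}$ precisely when its distance to every previously selected center exceeds $R$. By construction this guarantees $\min_{i\neq j} d(v_i,v_j) > R$, which is the separation property claimed.

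Next, the covering claim $\B{X}{u}{2R} \subseteq \bigcup_{i} \B{X}{v_i}{R}$ follows from maximality of the greedy selection: if some $w \in \B{X}{u}{2R}$ were at distance strictly greater than $R$ from all the $v_i$, then the greedy rule would have added $w$ as a center, contradicting the fact that the construction terminated. Hence every node of the ball lies within $R$ of some $v_i$, giving the containment.

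The cardinality bound is where the doubling constant enters, and it is the crux of the argument. I would invoke the doubling property twice at consecutive scales: first cover $\B{X}{u}{2R}$ by at most $\alpha$ balls of radius $R$, and then cover each of those in turn by at most $\alpha$ balls of radius $R/2$, producing a cover of $\B{X}{u}{2R}$ by at most $\alpha^2$ balls of radius $R/2$. The key observation is that any single ball of radius $R/2$ can contain at most one of the separated centers: two centers lying in a common ball of radius $R/2$ would, by the triangle inequality, be at distance at most $R/2 + R/2 = R$ from each other, contradicting $\min_{i\neq j} d(v_i,v_j) > R$. Since each $v_i$ belongs to $\B{X}{u}{2R}$ and therefore to one of these $\alpha^2$ small balls, and since distinct $v_i$ occupy distinct small balls, we conclude $k \leq \alpha^2$.

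The main point requiring care is that the doubling definition of Section \ref{sec:mod} is only stated for radii at least $\min_{(u,v)} d(u,v)$, so applying it at the smaller scale $R/2$ needs $R/2 \geq \min_{(u,v)} d(u,v)$, i.e. $R \geq 2$ in the unweighted graph metric. This is harmless in the intended use of the theorem, where the relevant radii are the cover radii $r_i = 2^i \geq 2$, so the second application of doubling is always legitimate.
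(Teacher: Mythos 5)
Your proof is correct and follows essentially the same route as the paper's: both arguments apply the doubling property twice to cover $\B{X}{u}{2R}$ by at most $\alpha^2$ balls of radius $R/2$, and then use the triangle inequality to argue that each such small ball can contain at most one of the $R$-separated centers. You simply make explicit two points the paper leaves implicit — the greedy construction of a maximal $R$-separated set and the resulting covering property, plus the caveat about the minimum radius in the doubling definition — which is a welcome but not substantively different elaboration.
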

\begin{proof}
By definition of an $\alpha$-doubling metric space, there must exist a
cover of a ball of radius $2R$ consisting of at most $\alpha$ balls of
radius $R$. Recursively, there must also exist an $\frac{R}{2}$-cover
consisting of $\alpha^2$ points. One can select at most one center
$v_i$ in each ball of radius $R/2$, as any other point inside this
ball is within $R$ of $v_i$. Hence, one can select at most $\alpha^2$
such centers.
\end{proof}
%One can easily extend this result to the cover of larger balls.
\begin{corollary}
\label{LRC}
Let $B$ be a ball of radius $R>R'$ in an $\alpha$-doubling metric
space $(X,d)$. Then, one can select at most $k\leq
(\frac{R}{R'})^{2log(\alpha)}$ nodes $v_i$, $(i=1,2,..,k)$ such that
$\B{X}{u}{R}\subseteq \bigcup_{i}^{k}\B{X}{v_i}{R'}$ and
$min_{(i,j)}d(v_i,v_j)>R'$. In particular, if $R=\eta R'$ for some
constant $R$, then $k$ is at most a constant $(\eta)^{2log(\alpha)}$
independent of $n$.
\end{corollary}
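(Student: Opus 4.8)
The plan is to prove Corollary \ref{LRC} by iterating Theorem \ref{TRC} (the Random Cover lemma) across a chain of intermediate scales that halve the radius at each step. The key observation is that Theorem \ref{TRC} already handles a single factor-of-two gap: a ball of radius $2R$ can be covered by at most $\alpha^2$ balls of radius $R$ whose centers are pairwise more than $R$ apart. I would first reduce the general ratio $R/R'$ to a chain of doublings by defining intermediate radii $R_0=R, R_1=R/2, R_2=R/4,\ldots,R_\ell=R'$, where $\ell=\lceil\log(R/R')\rceil$ is the number of halving steps needed to shrink $R$ down to $R'$.

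First I would cover the ball $\B{X}{u}{R}$ by balls of radius $R/2$ using Theorem \ref{TRC}, obtaining at most $\alpha^2$ such balls. Then I would recursively cover each of those radius-$R/2$ balls by at most $\alpha^2$ balls of radius $R/4$, and continue this refinement down to scale $R'$. After $\ell$ levels of refinement, the total number of radius-$R'$ balls is at most $(\alpha^2)^\ell = \alpha^{2\ell}$. Substituting $\ell=\log(R/R')$ gives $\alpha^{2\log(R/R')}=(R/R')^{2\log\alpha}$, which is exactly the claimed bound. The separation condition $\min_{(i,j)}d(v_i,v_j)>R'$ is then obtained, as in Theorem \ref{TRC}, by keeping only one center per finest-level ball and discarding any candidate center lying within $R'$ of a previously selected one; this greedy pruning cannot increase the count and guarantees pairwise separation.

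The specialization to $R=\eta R'$ is immediate: setting $R/R'=\eta$ yields $k\leq \eta^{2\log\alpha}$, a constant independent of $n$ whenever $\eta$ and $\alpha$ are constants.

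The main obstacle I anticipate is bookkeeping the exponent cleanly so the telescoping of the per-step factor $\alpha^2$ over $\log(R/R')$ steps lands exactly on $(R/R')^{2\log\alpha}$ rather than off by a constant or a rounding term; care is needed because $\ell=\lceil\log(R/R')\rceil$ is a ceiling, so one should either assume $R/R'$ is a power of two (absorbing the discrepancy into the $O(\cdot)$ notation already present in the corollary's statement) or note that the final partial halving step still costs at most another factor of $\alpha^2$, which only affects the constant. A secondary subtlety is verifying that the separation property is preserved under composition across scales, but since pruning is performed at the single finest scale $R'$ this reduces to the argument already given in Theorem \ref{TRC} and poses no real difficulty.
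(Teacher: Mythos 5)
Your proposal is correct and follows essentially the same route as the paper: iterate Theorem \ref{TRC} over $\log(R/R')$ halving steps, accumulating a factor $\alpha^2$ per step to obtain $(\frac{R}{R'})^{2\log\alpha}$. The paper's own proof is just a terser version of this (it assumes $R=2^iR'$ and does not spell out the rounding or the preservation of the separation condition, both of which you handle explicitly).
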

\begin{proof}
Let
$R=2^iR'$. Hence, $R'$ is doubled $log\frac{R}{R'}$ times to obtain
$R$. By Theorem \ref{TRC}, $B$ can be covered by
$\alpha^{2log\frac{R}{R'}}=(\frac{R}{R'})^{2log(\alpha)}$ balls of
radius $R'$.
\end{proof}
Here, one can think of the radius $R$ of the large balls as the
flooding radius, and of the radius $R'$ of the small balls as the
cover radius. Indeed, we use this result to show that a node $u$ can
hear the floods of all beacons within a given radius $R$. Moreover,
this ball of radius $R$ can contain at most
$(\frac{R}{R'})^{2log(\alpha)}$ beacons, since beacons must be at
least $R'$ apart.
\subsection{Control Traffic}
%In this section we will use the intuition we developed above to bound the control traffic.
\begin{theorem}
\label{thm:cto}
The average control traffic overhead per time step for beaconing is at most 
$O(n\log^2n)$ bits.
\end{theorem}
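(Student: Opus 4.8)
The plan is to split the beaconing overhead into two contributions and bound each by $O(n\log^2 n)$ bits per time step: the cost of the periodic floods that every beacon issues (at every level, at every time step, to keep routes fresh), and the cost of the membership packets that newly covered nodes send back to their beacons. The guiding principle throughout is that the $\alpha$-doubling property lets a node interact with only a \emph{constant} number of beacons per level, and that the geometric update schedule keeps each of the $O(\log n)$ levels contributing equally.

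For the flooding cost, the key claim is that each node forwards only a constant number of floods at each level. Fix a level $i$. A node becomes a level-$i$ beacon only while it is uncovered at level $i$, so any two level-$i$ beacons are more than $r_i=2^i$ apart. A node $v$ forwards the flood of a level-$i$ beacon $b$ only if $d(b,v)\le f_i=\kappa(r_{i+1}+r_i)=3\kappa\,r_i$. Hence the level-$i$ beacons whose flood reaches $v$ form an $r_i$-separated set inside $\B{}{v}{f_i}$, a ball of radius $3\kappa\,r_i$; by Corollary \ref{LRC} their number is at most $(3\kappa)^{2\log\alpha}=O(1)$. Since each node forwards each flood at most once, summing over the $n$ nodes gives $O(n)$ transmissions per level. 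Each flood packet carries $O(\log n)$ bits (Table \ref{tab:FP}, using $\log\Delta=O(\log n)$ since $\Delta=O(\sqrt{n/\log n})$), and there are $\log\Delta=O(\log n)$ levels, so the total flooding cost is $O(n\log^2 n)$ bits per time step.

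For the membership cost I would invoke the lazy-update schedule together with the smoothness constants of Corollary \ref{cor:kappa}. At each level $i$ every node is a member of exactly one cluster, whose beacon lies within $r_i=2^i$ hops; its membership packet travels back along the reverse path built during the flood, using at most $2^i$ transmissions of $O(\log n)$ bits each (Table \ref{tab:memp}). A single level-$i$ update therefore costs $\sum_v d(b(v),v)\cdot O(\log n)\le n\cdot 2^i\cdot O(\log n)$ bits. Because level-$i$ memberships are refreshed only once every $\nu\,2^i$ time steps, the amortized cost per time step is $O(n\,2^i\log n)/(\nu\,2^i)=O(n\log n)$, independent of $i$. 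Summing over the $O(\log n)$ levels again gives $O(n\log^2 n)$ bits per time step, and adding the two contributions yields the claim. All probabilistic inputs — the doubling constant $\alpha$, the non-emptiness of squarelets, and the smoothness constants $\kappa,\nu$ — hold w.h.p. over the full sequence of $n^\rho$ graphs by Section \ref{sec:npp}, so the bound holds w.h.p.

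The step I expect to be most delicate is the membership accounting. One must justify that the reverse path used for a level-$i$ registration has length $\le r_i$ rather than the flooding radius $f_i$ (this holds because a node joins at level $i$ only when it is within $r_i$ of the beacon, even if that beacon flooded at a larger radius), that each node contributes to at most one cluster per level, and — the real crux — that the geometric update period $\nu\,2^i$ exactly cancels the $2^i$ growth of the per-update path length. It is this cancellation that makes every level contribute the same $O(n\log n)$, so that the $O(\log n)$ levels combine into $O(n\log^2 n)$ rather than something growing with $\Delta$.
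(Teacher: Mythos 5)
Your overall structure matches the paper's: the flooding cost is bounded by showing each node relays only a constant number of level-$i$ floods per time step (via the packing bound of Corollary \ref{LRC}), and the membership cost is amortized because the $2^i$ path length of a level-$i$ registration is cancelled by the $\nu 2^i$ update period. (The paper actually splits these into separate theorems --- Theorem \ref{thm:cto} covers only the floods, and the membership accounting is the subsequent ``Membership Update Overhead'' theorem --- but folding both into one argument is harmless and your membership analysis coincides with the paper's.)

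There is one step in your flooding count that does not hold as written in the dynamic setting. You assert that the level-$i$ beacons whose floods reach $v$ form an $r_i$-separated set inside a ball of radius $f_i$ around $v$ \emph{at the current time step}. But the $r_i$-separation of level-$i$ beacons is guaranteed only at the moment level $i$ was last updated (that is when a node declines to elect itself if it is already covered); in the up to $\nu 2^i$ intervening time steps the beacons move, and two beacons that were just over $r_i$ apart at election time may now be as close as roughly $r_i/\kappa$. So the set you are packing is separated at one time and contained in a ball at another, and Corollary \ref{LRC} cannot be applied directly. The paper repairs this by pulling everything back to the election time: by the smoothness bound (Corollary \ref{cor:rwk}), any beacon currently within $f_i$ hops of $v$ must have been within $\kappa f_i$ hops of $v$ at the last level-$i$ update, and \emph{at that time} the beacons were $r_i$-separated, so their number is at most $(\kappa f_i/r_i)^{2\log\alpha}=(3\kappa^2)^{2\log\alpha}$ rather than your $(3\kappa)^{2\log\alpha}$. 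Both are constants, so your final $O(n\log^2 n)$ bound survives, but the smoothness property is genuinely needed in the flooding part of the argument, not only in the membership part where you invoke it.
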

\begin{proof}
We will analyze the control traffic at level $i$. Recall that a beacon
at level $i$ floods a distance $f_i=\kappa (2^{i+1}+2^{i})$ at every
time step. Further, at the time the memberships are updated at level
$i$, a beacon node at this level \emph{cannot} be within $r_i=2^i$ of
another beacon at that level. If it were the case, this node would not
elect itself as a beacon at this level. Level $i$ is updated every
$\nu 2^i$ time steps. Consider a node $u$. By Corollary \ref{cor:rwk},
no nodes that are further away than $\kappa f_i$ hops at the time the memberships
are updated at level $i$ could move within $f_i$ of $u$ in less than
$\nu 2^i$ time steps. However, that is before this level is updated
again. Consequently, the number of beacons whose flood can reach $u$
at any given time step is at most the number of level $i$ beacons in a
ball of radius $\kappa f_i$ at the time the membership is updated. In
turn, node $u$ will broadcast\footnote{recall that when a node
broadcasts a packet it is received by all direct neighbors in the
connectivity graph. Consequently, there is one packet transmission per
beacon of which a flood packet is received.} the flood packets of at
most that many beacons for this level $i$. By Corollary \ref{LRC},
this number is a constant\footnote{In the load-balanced scheme, this
constant is $(5\kappa^2)^{2\log\alpha}$.} given by $(\frac{\kappa
f_i}{2^i})^{2log(\alpha)}=(3\kappa^2)^{2\log\alpha}$. Given that there
are $O(\log n)$ levels, that there are $n$ nodes and that a flood
packet has size $O(\log n)$ bits, the average control traffic overhead
per time step for beaconing is at most $O(n\log^2n)$ bits.
\end{proof}
We now compute the control traffic overhead necessary for nodes to
update their memberships with beacons. Recall that level $i$ and all
levels below are updated every $\nu 2^i$ times steps and that a node
can only be a member of one cluster at every level. Furthermore, a
node only becomes a member of a cluster if it is within $2^i$ of the
corresponding beacon.
\begin{theorem}[Membership Update Overhead]
The average control traffic overhead per time step to update
memberships without load-balancing is at most \[\frac{n\log\Delta\log
n}{\nu}=O(n\log^2n)\] bits.
\end{theorem}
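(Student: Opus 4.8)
The plan is to bound the overhead level by level and then sum over the $O(\log\Delta)$ levels, exploiting a cancellation between the cost of reaching a beacon and the frequency at which that beacon's memberships are refreshed. Fix a level $i\in\{1,\dots,\log\Delta\}$. From the beaconing algorithm (the rule $\Gamma=\max\{0\le j\le\log\Delta \mid t \bmod \nu 2^j =0\}$), the memberships at level $i$ are recomputed exactly once every $\nu 2^i$ time steps, and between two consecutive such updates no level-$i$ membership packet is sent. Hence it suffices to bound the cost of a single level-$i$ update and amortize it over the $\nu 2^i$ intervening steps.

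First I would count the cost of one level-$i$ update. A node can belong to at most one cluster per level, so each of the $n$ nodes emits at most one membership packet toward its level-$i$ beacon. By Table \ref{tab:memp}, a membership packet has size $O(\log n)$ bits. The packet is forwarded along the reverse path installed by the flood, and since a node $v$ joins the cluster of a beacon $b$ at level $i$ only when $d(u,v)\le r_i=2^i$, and the flood records shortest-path distances, this reverse path has length at most $2^i$ hops. Charging one transmission per node on the path, a single membership packet costs $O(2^i\log n)$ bits, so one level-$i$ update costs at most $n\cdot O(2^i\log n)$ bits in aggregate.

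Amortizing over the $\nu 2^i$ steps between consecutive level-$i$ updates gives $\frac{n\,O(2^i\log n)}{\nu 2^i}=\frac{O(n\log n)}{\nu}$ bits per time step at level $i$; the factor $2^i$ cancels. Summing over the $\log\Delta$ levels yields $\frac{O(n\log n\log\Delta)}{\nu}$, which is the claimed $\frac{n\log\Delta\log n}{\nu}$, and since $\Delta=O(\sqrt{n/\log n})$ forces $\log\Delta=O(\log n)$, this equals $O(n\log^2 n)$.

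The crux---and essentially the only place demanding care---is the path-length bound that each level-$i$ membership packet traverses at most $r_i=2^i$ hops. This is precisely what makes the factor $2^i$ cancel against the update period $\nu 2^i$, and it is the structural reason the scheme scales: higher-level beacons are costlier to reach but, by $\kappa$-smoothness (Corollary \ref{cor:kappa}), their memberships may safely be refreshed proportionally less often. I would justify the bound from two facts established earlier: the flood procedure installs a reverse path whose length equals the shortest-path distance $d(u,b)\le 2^i$ (so the path is not inadvertently longer), and membership is by definition only claimed within the cover radius $r_i$. Once this cancellation is in place the remainder is an elementary summation over the $O(\log n)$ levels, requiring no probabilistic input beyond the doubling and smoothness guarantees of Section \ref{sec:npp}.
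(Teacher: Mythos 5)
Your proposal is correct and follows essentially the same argument as the paper: level $i$ memberships are refreshed every $\nu 2^i$ steps, each refresh costs at most $2^i$ hops times $O(\log n)$ bits per node, the $2^i$ factors cancel under amortization, and summing over the $\log\Delta$ levels and $n$ nodes gives the claim. The only difference is presentational — the paper amortizes over a window of $T$ steps while you amortize per level directly — and your added justification of the $2^i$ hop bound via the flood's reverse path is a detail the paper leaves implicit.
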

\begin{proof}
Consider a sequence of $T$ time steps. The memberships
will be updated up to level $i$ every $\nu 2^i$ time steps, so
$\frac{T}{\nu 2^i}$ times in a sequence of length $T$. At the time of
the update, a node can be at distance at most $2^i$ from a beacon at
level $i$. Consequently, the overhead in bits generated by a node in a
sequence of $T$ time steps is upper bounded by
$\sum_{i=1}^{\log\Delta}\frac{T}{\nu 2^i}2^i\log
n=\frac{\log\Delta}{\nu}\log n$.
\end{proof}
Finally, we will show that the average control traffic overhead when
load-balancing is used is increased by at most a factor $\log n$.
\begin{theorem}[Membership Update Overhead]
The average control traffic overhead per time step to update
memberships with load-balancing is at most
\[
\frac{n\log^2\Delta\log n}{\nu}=O(n\log^3n)
\]
bits.
\end{theorem}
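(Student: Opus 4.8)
The plan is to reuse the amortized per-node, per-level bookkeeping of the previous (non load-balanced) membership theorem, but to charge for the extra work that load-balancing creates: a level $i$ membership of a node $u$ is no longer stored at the beacon $b_i(u)$ itself, but at a single node reached by descending the cluster hierarchy through $\psi_{i-1}(u),\psi_{i-2}(u),\dots$ down to level $1$. The central new quantities to control are therefore the cost of (re)installing one such membership and the frequency with which it must be repaired. First I would isolate the cost of a single descent. Because the cover radius at level $k$ is $r_k=2^k$ and the load-balanced flooding radius $f_k'=\kappa(2r_{k+1}+r_k)=O(2^k)$ lets a member of a level-$k$ cluster route to any level-$(k-1)$ beacon inside it, the descent from level $i$ travels a distance $O(2^k)$ at stage $k$; the geometric sum telescopes to a total path length $\sum_{k=1}^{i}O(2^k)=O(2^i)$. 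Multiplying by the $O(\log n)$-bit membership packet, one full (re)installation of a level $i$ membership costs $O(2^i\log n)$ bits.

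The key step is to observe that a level $i$ membership need not be rebuilt from the top each time: its chain breaks only below the highest level whose clusters have just changed. When level $j\le i$ is refreshed, the chain is intact from $b_i(u)$ down to the (unchanged) level-$(j+1)$ beacon, and only the suffix from level $j$ downward must be re-descended, at cost $O(2^j\log n)$ bits by the same telescoping argument. Since the clusters at level $j$ are refreshed every $\nu 2^j$ time steps (Algorithm \ref{algo:beac}, with $\Gamma$ reaching $j$), the amortized per-step cost charged to $u$'s level $i$ membership by level-$j$ repairs is
\[
\frac{O(2^j\log n)}{\nu 2^j}=O\!\left(\frac{\log n}{\nu}\right),
\]
independent of $j$. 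I would attribute each time step to the single highest level $\Gamma(t)$ that updates, so as not to double count, which still leaves one $O(\log n/\nu)$ term for each $j\le i$.

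Summing these contributions gives, per node, a double sum over the membership level $i$ and the triggering level $j\le i$,
\[
\sum_{i=1}^{\log\Delta}\sum_{j=1}^{i}O\!\left(\frac{\log n}{\nu}\right)=O\!\left(\frac{\log^2\Delta\,\log n}{\nu}\right),
\]
and multiplying by the $n$ nodes yields the claimed $\frac{n\log^2\Delta\log n}{\nu}=O(n\log^3 n)$ bits per time step, a factor $\log\Delta=O(\log n)$ above the non load-balanced bound, exactly as anticipated.

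The main obstacle is justifying the locality of the repair: that a deeply stored membership is disturbed only in the suffix below the most recently updated level, and that repairing this suffix costs only $O(2^j)$ rather than $O(2^i)$. This rests on the frozen-chain property (a packet is forwarded only to beacons that were co-clustered at the last update of their level) together with $\kappa$-smoothness, which guarantees that between two updates at level $k$ the relevant distances, and hence the level-$k$ clusters and the descent routes through them, remain stable up to the constant factor absorbed into $f_k'$. The remaining care is purely in the bookkeeping of update frequencies, so that the $(i,j)$ double sum, rather than a single sum over $i$, is what appears.
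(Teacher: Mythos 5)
Your proposal is correct and follows essentially the same route as the paper: a telescoping geometric descent cost of $O(2^i\log n)$ bits per installation, update frequency $1/(\nu 2^i)$, and an extra $\log\Delta$ factor because every stored membership of $u$ must be repaired when a lower level is refreshed. Your $(i,j)$ double sum is just a transposed (and slightly more carefully localized) version of the paper's accounting, which instead sums over the triggering level and multiplies by the $\log\Delta$ holders of $u$'s identity, each charged the same re-descent cost.
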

\begin{proof}
Consider a sequence of $T$ time steps. The memberships will be updated
up to level $i$ every $\nu 2^i$ time steps, so $\frac{T}{\nu 2^i}$
times in a sequence of length $T$. At the time of the update, a node
can be at distance at most $2^{i+1}$ from a beacon at level $i-1$
inside its cluster at level $i$. Similarly, a node can be at distance
at most $2^{i}$ from a beacon at level $i-2$ inside its cluster at
level $i-1$. In the load balanced scheme, we have to count the
overhead to go down the hierarchy of beacons. For a beacon at level
$i$, this is at most $2^i\times 2$. Consequently, the overhead in bits
generated by a node in a sequence of $T$ time steps is upper bounded
by $4\sum_{i=1}^{\log\Delta}\frac{T}{\nu
2^i}2^i\log n=4\frac{\log\Delta}{\nu}\log n$. However, node $u$ is a member
of a cluster at all $\log{\Delta}$ levels. Recall that the node that
holds $u$'s identifier must always be reachable through a path by
choosing the beacon (cluster) with the identifier closest to
$u$'s. Hence, whenever level $i$ gets updated, all $\log{\Delta}$
nodes that hold $u$'s identity must follow the same procedure as $u$
itself. We conclude that the overhead is upper bounded by
$\log{\Delta}4\frac{\log\Delta}{\nu}\log n$ bits.
\end{proof}

\subsection{Route Stretch}
In this section we will show that the route found with the forwarding
algorithm is only a constant factor longer than the shortest
path. Additionally we show that the destination location discovery
takes a negligible fraction of a flow throughput.

\begin{theorem}[Routing Stretch]
The worst case multiplicative routing stretch is $O(1)$. 
\end{theorem}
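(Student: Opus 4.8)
The plan is to fix a source $u$ and destination $v$, write $d=d(u,v)$ for their shortest-path distance, and bound the length of the data route actually constructed by the upstream/downstream search against $d$. The route decomposes into three pieces: the successful upstream probe from $u$ to the first beacon $b_{i^*}$ (at some level $i^*$) that holds $v$ in its cluster; the downstream chain of beacons $b_{i^*}\to b_{i^*-1}\to\cdots\to b_1$; and the final hop $b_1\to v$. I would bound each piece by a geometric term in $2^{i^*}$ and show $2^{i^*}=O(d)$, so that the whole route has length $O(d)$ and the multiplicative stretch is $O(1)$. Note that only this successful path contributes to stretch; the failed probes at intermediate levels are charged to control traffic, not to route length.

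First I would control the level $i^*$ at which the upstream phase succeeds. Since the beacons form a hierarchical cover, at every level $i$ there is a beacon $b$ with $d(v,b)\le r_i=2^i$, and by the triangle inequality $d(u,b)\le d+2^i$. The source $u$ knows every level-$i$ beacon whose flood reaches it, i.e. every $b$ with $d(u,b)\le f_i=\kappa(r_{i+1}+r_i)=3\kappa 2^i$. Hence as soon as $d+2^i\le 3\kappa 2^i$, that is $2^i\ge d/(3\kappa-1)$, the source already knows a level-$i$ beacon that holds $v$, so the search terminates with $2^{i^*}=O(d)$.

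Next I would verify that the downstream search never stalls: a level-$j$ beacon $b_j$ holding $v$ always knows a level-$(j-1)$ beacon $b_{j-1}$ that also holds $v$. By the cover property there is such a $b_{j-1}$ with $d(v,b_{j-1})\le 2^{j-1}$, and since $d(v,b_j)\le 2^j$ we get $d(b_j,b_{j-1})\le 3\cdot 2^{j-1}\le f_{j-1}$, so $b_{j-1}$'s flood reaches $b_j$. This is exactly the nesting property that the definition $f_i=\kappa(r_{i+1}+r_i)$ was engineered to guarantee. I would then sum the segment lengths: $d(u,b_{i^*})\le d+2^{i^*}$, $d(b_j,b_{j-1})\le 3\cdot 2^{j-1}$ along the chain, and $d(b_1,v)\le 2$, so the total is $O(d)+O(2^{i^*})=O(d)$, giving constant stretch.

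The main obstacle is mobility: memberships at level $i$ are refreshed only every $\nu 2^i$ steps, so the registrations used in the search may be stale, and the distances $d(v,b_j)$ above must be read at the last update time rather than at the current instant. The point is that $\kappa$-smoothness (Corollary \ref{cor:kappa}) guarantees these distances can have grown by at most the factor $\kappa$ before the next refresh, and the extra $\kappa$ in $f_i$ absorbs exactly this drift, so that both the termination bound $2^{i^*}=O(d)$ and the nesting property survive with constants inflated only by a factor of $\kappa$. I expect tracking these constants through the stale-membership argument---rather than the geometric summation, which is routine---to be the delicate part.
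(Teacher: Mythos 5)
Your proposal is correct and follows essentially the same route as the paper's proof: both bound the successful route by the chain $u\to b_k(v)\to b_{k-1}(v)\to\cdots\to v$, use the cover property plus the triangle inequality to get $d(b_i(v),b_{i-1}(v))\le f_{i-1}$, sum the geometric series to obtain roughly $6\kappa\, d(u,v)$, and absorb the staleness of memberships (refreshed only every $\nu 2^i$ steps) into the extra factor of $\kappa$ built into the flooding radius, yielding a final stretch of $6\kappa^2$. Your explicit derivation of the termination level $i^*$ via $d+2^i\le 3\kappa 2^i$ is a slightly more careful rendering of the paper's choice of $k$ with $2^k\le d(u,v)\le 2^{k+1}$, but it is the same argument.
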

\begin{proof}
We first analyze the stretch without load balancing. Consider that we
want to route from a node $u$ to a node $v$, and that we had $2^k\leq
d(u,v)\leq 2^{k+1}$, the last time level $k$ was updated before the
route search takes place. Let us denote by $b_i(v)$ the beacon to
which node $v$ had registered the last time level $i\leq k$ was
updated before the route search takes place. Clearly, we have
$d(u,b_v(k))\leq\kappa (2^{k+1}+2^{k})$, and $d(b_i(v),b_{i-1}(v))\leq
\kappa (2^i+2^{i-1})$. This is true since the membership of node $v$
at level $i$ must have been updated at most $\nu 2^{i}$ time steps
before the routing takes place, and that at the time the time level
$i$ gets updated, we have $d(b_i(v),b_{i-1}(v)))\leq
d(b_i(v),v)+d(v,b_{i-1}(v))$ by triangle inequality. Note that
$d(b_i(v),b_{i-1}(v))\leq f_{i-1}$ and that $d(u,b_v(k))\leq
f_k$. Hence, a route must exist between $u$ and $v$ and the length
$r^{(t+\tau)}(u,v)$ of the route at time $t$ is at most:
\[
\begin{array}{ll}
r^{(t)}(u,v)&\leq \sum_{i=1}^{k}f_k
=\kappa\sum_{i=1}^{k}(2^{i+1}+2^i)\\
&=3\kappa\sum_{i=1}^{k}2^i=3\kappa \frac{2^{k+1}-1}{2-1}
\leq 6\kappa d^{(t)}(u,v) 
\end{array}
\] 
In the worst cast, nodes $u$ and $v$ have moved closer together (by a
factor $\kappa$) while the beacons have moved further apart. Indeed,
we have $d^{(t+\tau)}(u,v)\leq \kappa d^{(t)}(u,v)$ for $\tau\leq\nu
2^k$ as our network is $\kappa$-constrained. Note that if we waited
longer that $\nu 2^k$, memberships would be updated again at level $k$
and we could find another beacon at distance $2^k$ at most from $v$ at
level $k$. Hence, the worst case stretch is $\frac{r^{(t+\nu
2^k)}(u,v)}{d^{(t+\nu 2^k)}}\leq 6\kappa^2=O(1)$.  
\end{proof}

Every node can only hear floods from a constant number
($\mu=(3\kappa^2)^{2\log\alpha}$, see Theorem \ref{thm:cto}) of
beacons at every level. Recall that the source will first probe all
beacons at level $1$, then all beacons at level $2$ and so on. The
procedure is repeated up to level $k$ at which the source $u$ will
send a packet to $b_k(v)$. Note that the distance from $u$ to this
beacon can be at most $\kappa 2^{k+1}+2^{k}=f_k$ and so it must hear
its floods. In turn, when routing down the hierarchy, beacon $b_j(v)$
will probe at most a constant number ($(3\kappa^2)^{2\log\alpha}$ of
beacons at level $j-1$. Finally, the distance between a node $u$ and a
beacon at level $i$ can be at most $f_i$ and a probe packet will
traverse at most $2f_i$ packets when a beacon at level $i$ is probed
(back and forth). This means that for discovery of the location of the
destination, we need a probe overhead of at most $\mu 6\kappa d(u,v)$
packet transmissions. Therefore, this is a negligible part of the
throughput of a flow since it consumes roughly the equivalent of a few
packet headers of a flow from source to destination.  A similar
statement can be made for the load-balanced case.

\iffalse
\[
\begin{array}{ll}
\mbox{overhead}&\leq 2\mu\sum_{i=1}^{k}f_k\\
&\leq 2\mu\sum_{i=1}^{k}2^i(3)\\
&\leq \mu 6\kappa d(u,v) \\
\end{array}
\]

Finally, in the load-balanced case, the
route can be at most 2 times longer than without
load-balancing. Indeed, a node $\psi_{k-1}(u)$ (in this case a beacon
at level $k-1$) inside $b_v(k)$'s cluster at the time the membership
gets updated at this level can be at distance at most
$d(u,\psi_{k-1}(u))\leq\kappa (2^{k+1}+2*2^{k})\leq \kappa
2((2^{k+1}+2^{k}))$ from $u$ at any time instant before the membership
is updated again. The same is true at the levels below.
\fi

\section{Implementation Issues}
\label{sec:impl}

In Section \ref{sec:PerfAnal}, we have computed worst case bounds
which may be conservative in terms of constants.  In this section, we
explore this aspect by looking at simulation results for the
control traffic and for the stretch. Recall that we had computed that
for each of the $O(\log n)$ levels, a node has to retransmit a packet
of at most $(3\kappa^2)^{2\log\alpha}$ beacons. Even if we set the
maximum speed as well as the parameter $\nu$ to $1$, this is still
$\sqrt{10}+20$ and consequently the constant in the bound on the
overhead at least as high as $(3(\sqrt{10}+20)^2)^2\approx 2.5\cdot
10^6$!  In Figure \ref{fig:simoh}, however, we show that in practice
this constant is approximately 30. This simulation was run with $50$
up to $10000$ nodes moving at a maximum speed of $1$. One can observe
that the experimental scaling behavior corresponds extremely well to
the theoretical behavior. To stress this fact, we also plot $100\log
n$ as a benchmark. Note that the overhead is expressed in number of
packets rather than bits (a packet being of size $O(\log n)$).
\begin{figure}[htbp]
\centering
\includegraphics[width=0.9\textwidth,keepaspectratio]{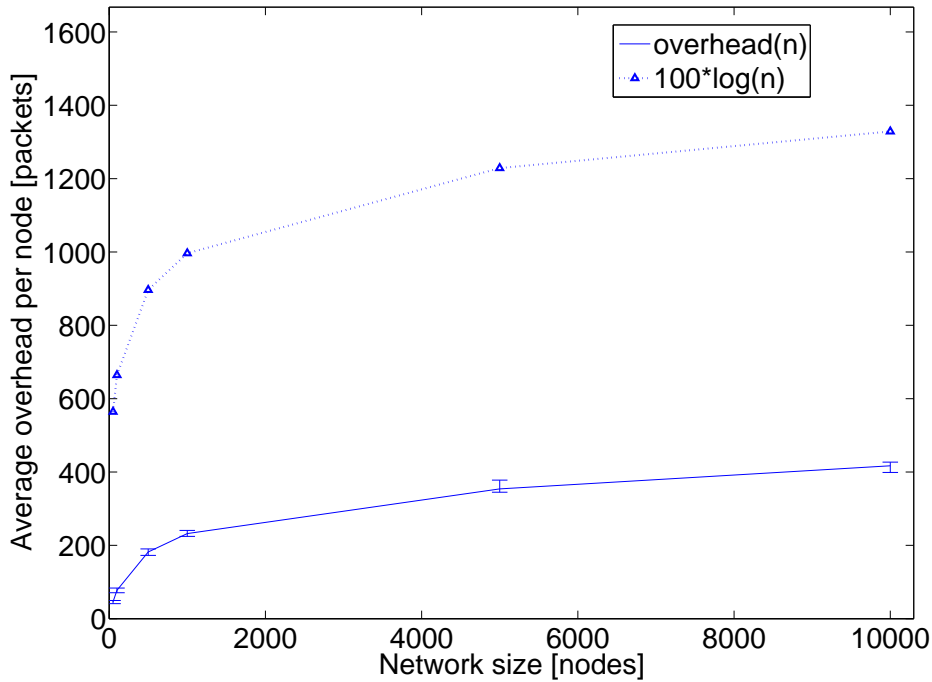}
\caption{Average control traffic overhead per node in packets as a
function of the network size. Nodes move at a speed of maximum speed
of $1$. The confidence interval is given by the 95\% and 5\%
percentiles. The size of a packet is $O(\log n)$ bits. We also plot
$100\log{n}$ to show that our analytical predictions match the
simulation results.}
\label{fig:simoh}
\end{figure}  

Similarly, in Figure \ref{fig:simstr} we show that for a network of
$1000$ nodes, the stretch is at most $1.5$ for all node pairs. If we
compute the maximum theoretical stretch, we can show that it is again
considerably larger and hence a pessimistic bound.  These small
constants could make a practical implementation realistic.
\begin{figure}[htbp]
\centering
\includegraphics[width=0.9\textwidth,keepaspectratio]{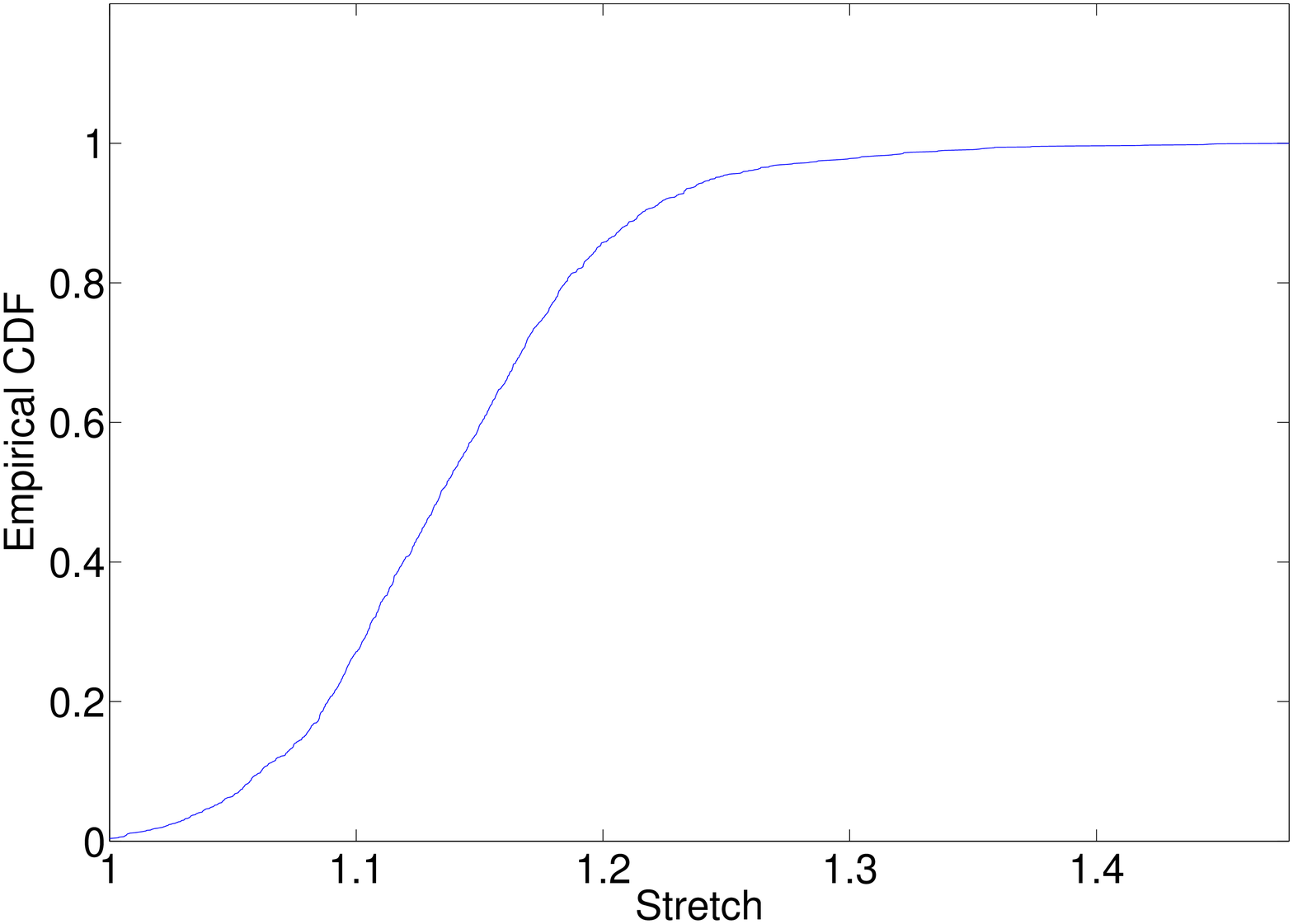}
\caption{Empirical cumulative distribution of stretch (route
length/shortest path) for a network with $1000$ nodes moving at a
maximum speed of $1$.}
\label{fig:simstr}
\end{figure}

We have made a certain number of assumptions in our models, which we
now clarify. In practice, the random permutations on the nodes,
which determines the order in which the flooding occurs, could be
implemented by using random timers; more precisely, by letting all
nodes draw a random delay independently of each other every $\Delta T$
seconds. Obviously, the interval from which nodes draw this delay
should be made sufficiently large so that we can avoid
collisions. However, a level in the hierarchy will be rapidly covered,
and in a practical implementation the covers at different levels could
be built in parallel. Further, different parts of the network are
independent except at the highest level, and we could exploit this
spatial diversity to parallelize the beaconing process. Hence, we speculate
that it is possible to reduce the length of the beaconing phase to a
small constant times the maximum round-trip time. Note that one could
apply the algorithms to underlying networks that are not doubling. In
this case, we would not be able to give provable bounds on the control
overhead and the stretch as we did for doubling networks.

\section{Conclusions}
In this paper, we show that a large class of wireless network models
belong to a larger class of networks, the \emph{doubling} networks, in
which efficient routing can be achieved. To design an efficient
routing scheme, one can hierarchically decompose the network by
relying on the doubling property to prove that the control traffic
overhead and the stretch will remain low, even for dynamic doubling
networks.  This holds for a fairly broad class of uniform
speed-limited (USL) mobility models.  One advantage of the proposed
routing algorithm is that it is robust, in that it works well in certain situations in which
other existing algorithms cannot work well. This was illustrated in
Section \ref{subsec:GnR} for an example network with obstacles.  We
believe that many more such examples can be created where the use of
the doubling rather than geographic properties would be crucial.  To
the best of our knowledge, our results are the first provable bounds
for routing quality and costs for dynamic wireless networks.  These
techniques might give us insight into algorithm design for more
sophisticated wireless network models.

\bibliography{doubling}  % sigproc.bib is the name of the Bibliography in this case
\pagebreak
%\clearpage
\appendix
%Appendix A
\paragraph{Unit Disc Graphs}
\label{apx:A}
Another common model used in studies on wireless networks are \textsl{Unit Disk Graphs} (UDG), which are the deterministic variants of the random geometric graphs. The randomness of the positions of the nodes is removed and they can be placed arbitrarily on a finite of infinite area. The channel model is completely deterministic as before and nodes are connected if their Euclidean distance is below a threshold distance $r$, called the communication radius. In mathematical terms, two nodes $u$ and $v$ with positions $x(u),x(v) \in [0,R]^2$ are connected if and only if $||x(u)-x(v)||<r$. We will now show that there exist UDG which are not $\alpha$-doubling (see Section \ref{sec:mod} for a definition of an $\alpha$-doubling metric).
\begin{theorem}
\label{thm:udg}
There exists an infinite UDG for which is no constant that upper bounds the doubling dimension \emph{i.e.,} UDG are not doubling.
\end{theorem}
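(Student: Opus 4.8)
The plan is to exhibit, inside one infinite UDG, a family of ``bad balls'' whose covering number grows without bound, so that no constant $\alpha$ can fulfil the doubling definition of Section~\ref{sec:mod}. I would take the graph to be a disjoint union of \emph{sunflower} gadgets $\{G_k\}_{k\ge 1}$, placed in widely separated regions of the plane so that distinct gadgets share no edges; the infinite UDG is then $\bigcup_k G_k$ and it suffices to exhibit one bad ball in each $G_k$ with covering number $\to\infty$. Rescaling so that the communication radius is $r=1$, gadget $G_k$ consists of a densely filled \emph{hub} disc $C_k$ of radius $\rho_k=k$ (realized by a $\tfrac12$-grid, so $C_k$ is connected with graph-diameter $O(k)$), together with $k$ vertex-disjoint \emph{legs} $P_1,\dots,P_k$, where $P_j$ is a path of nodes spaced in $[\tfrac12,1)$ running radially outward from the boundary of $C_k$ in direction $2\pi j/k$, each of Euclidean length $\mathcal{L}_k=3k$.

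The geometric heart of the argument is that the legs fan apart fast enough to be mutually non-adjacent. Two radial rays separated by angle $2\pi/k$ are, at radius $\rho$, at Euclidean distance $2\rho\sin(\pi/k)\ge 4\rho/k$ (Jordan's inequality), which exceeds $r=1$ for $\rho\ge k/4$; since $\rho_k=k$, beyond the hub no edge joins two distinct legs, and \emph{every} edge of $G_k$ lies inside $C_k$, inside a single leg, or joins a leg's base to $C_k$. Hence any path between the tips $t_i,t_j$ of two distinct legs must run $t_i\rightsquigarrow C_k\rightsquigarrow t_j$, forcing $d(t_i,t_j)\ge 2\mathcal{L}_k=6k$ (each leg traversal costs at least $\mathcal{L}_k/r=3k$ hops), while for the hub centre $c$ one has $d(t_i,c)\le 8k$. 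Setting $R=6k-1$, the $k$ tips then form an $R$-separated set (pairwise distance $\ge 6k>R$) contained in the ball $\mathcal{B}_{2R}(c)$, since $d(t_i,c)\le 8k\le 12k-2=2R$, and $R\ge 1$.

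The final step is the standard packing-to-covering contradiction. Suppose the graph metric had doubling dimension bounded by a constant $\alpha$. Since $R=\Theta(k)\ge 1$ and hence $R/2\ge 1=\min_{u,v}d(u,v)$, the definition applies twice: $\mathcal{B}_{2R}(c)$ is covered by $\alpha$ balls of radius $R$, and each of these by $\alpha$ balls of radius $R/2$, giving a cover of $\mathcal{B}_{2R}(c)$ by $\alpha^2$ balls of radius $R/2$. A ball of radius $R/2$ has diameter at most $R$ and so contains at most one of the $R$-separated tips, whence $k\le\alpha^2$. As $k$ is arbitrary this is impossible, so no constant bounds the doubling dimension and the UDG is not doubling.

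The hard part will be verifying that each gadget is a \emph{faithful} UDG, i.e. that the only edges present are the intended ones, so that tip-to-tip distances really are $\Theta(k)$ rather than being short-circuited through a spurious inter-leg edge. This reduces entirely to the angular-separation estimate above (the choice $\rho_k=k$ makes adjacent legs separate immediately past the hub) together with checking that the hub contributes only an $O(k)$ additive term, which is dominated by the $2\mathcal{L}_k=6k$ leg traversal; everything else is bookkeeping of constants, and the specific values $\rho_k=k$, $\mathcal{L}_k=3k$, $R=6k-1$ are chosen only to make these inequalities transparent.
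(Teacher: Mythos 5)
Your proposal is correct and rests on the same key idea as the paper's proof: embed unboundedly many long, mutually non-adjacent branches in the plane (your radial legs, the paper's grid columns spaced $2r$ apart) so that their far ends form an $R$-separated set of unbounded size inside a ball of radius $O(R)$, which no constant number of radius-$R/2$ balls can cover. The remaining differences are cosmetic --- a disjoint union of sunflower gadgets versus a single comb-shaped infinite grid, and a two-step packing/doubling argument versus the paper's direct count of one covering ball per column.
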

\begin{proof}
Consider the graph shown in Figure \ref{UDG}. 
\begin{figure}[htbp]
	\centering
	\includegraphics[width=1\textwidth,keepaspectratio]{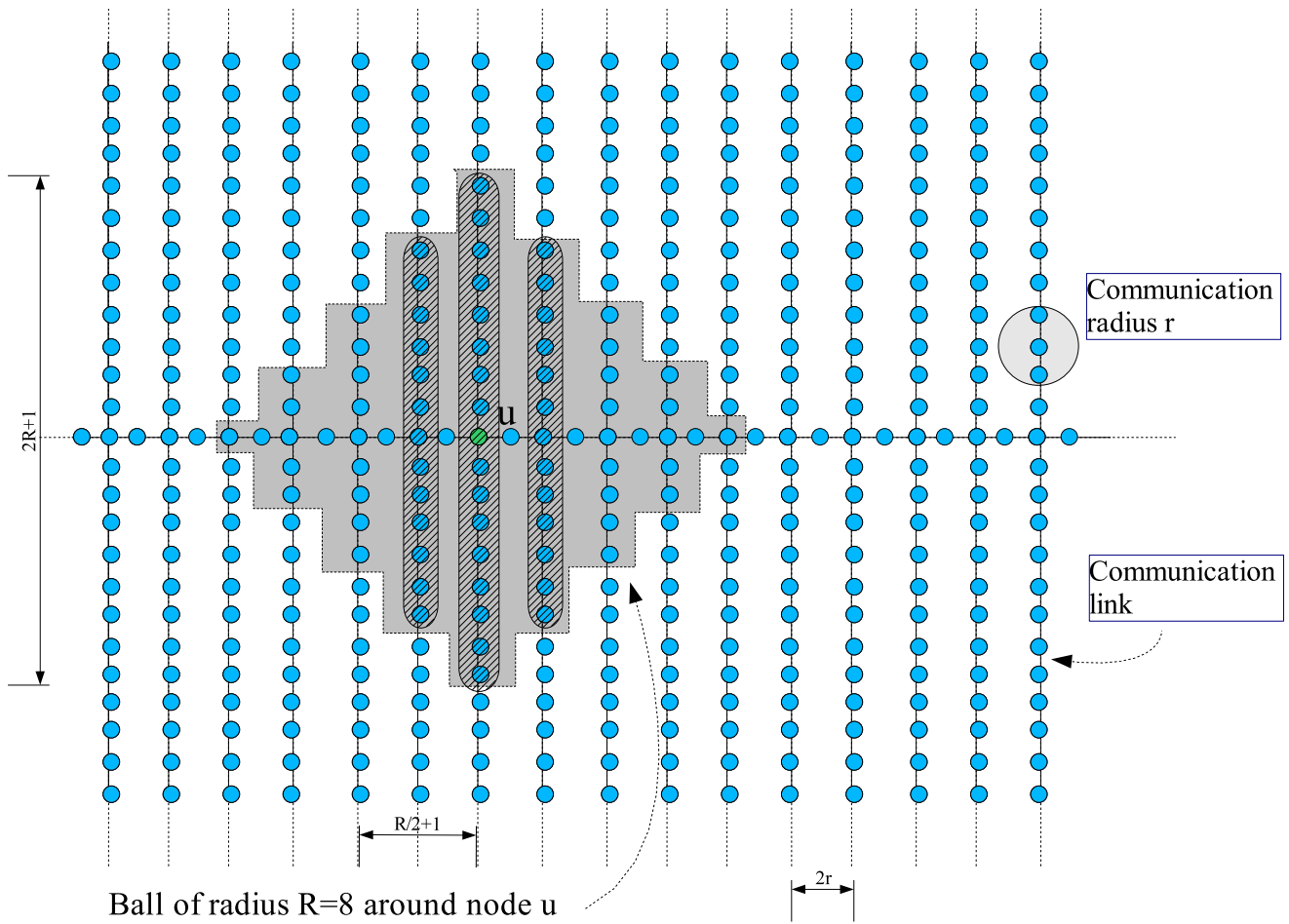}
	\caption{An infinite UDG obtained by deleting all the nodes in every second column of a grid, except for the nodes on the the middle row. Consequently, ``columns'' are $2r$ apart.}
	\label{UDG}
\end{figure}
To show that this graph is not $\alpha$-doubling, we must show that there exists no constant such that all balls of radius $R$ can be covered a constant $\alpha$ number of balls of radius $R/2$, for all $R$. Consider the ball centered around $u$ in the figure. One can see that there are $R/4+1$ ``columns'' which cross the middle row at a distance less than $R/2$ from $u$ (that is, the intersection of the column and the row is less than $R/2$ hops away from $u$). The intersection of each of these columns with $B_u(R)$ is of length more than $R$ (see hatched zones on Figure \ref{UDG}). Consequently, for each of these columns there is at least one node at distance more than $R/2$ from the middle row. To cover these nodes, we need to place at least one ball of radius $R/2$ on each of these columns. Hence, the doubling dimension is lower bounded by $R/4$ and tends to infinity as $R$ goes to infinity.  
\end{proof}
One can notice that in the non-doubling UDG in the proof of Theorem \ref{thm:udg} results from a careful construction. In Appendix \ref{apx:B}, we show however that such a structure will occur with high probability when $r_n<\sqrt{\log n}$ in random geometric graphs.

\paragraph{Random Geometric Graphs with $r_n<\sqrt{\log n}$}
\label{apx:B}
We first consider the case in which the communication radius $r$ is such that $r_n=r=(\log{n})^{\frac{1}{2}-\frac{\theta}{2}}<\log^{1/2}n$ and $\theta\in \left.\right]\zeta,1\left.\right]$. $\zeta$ is a constant such that $0<\zeta<1$. 
\begin{lemma}
\label{lemma:wg}
For any constant $\beta$, there exists constants $\gamma>0$ and $b>0$ such that a small square area of side $\gamma r$ with $b$ nodes contains a subgraph of doubling dimension $\beta+1$ with probability $p>0$.
\end{lemma}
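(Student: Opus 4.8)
The plan is to transplant the deterministic non-doubling pattern of Theorem~\ref{thm:udg} into a random window of the geometric graph and to show that it appears there with a probability that does not depend on $n$. Concretely, I would first isolate a \emph{finite} sub-pattern of the infinite UDG of Figure~\ref{UDG} whose induced shortest-path metric already has doubling dimension at least $\beta+1$, and then argue that $b$ nodes thrown uniformly into a $\gamma r \times \gamma r$ window reproduce an isomorphic copy of this pattern with constant probability.

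\textbf{Step 1 (finite bad pattern).} From the proof of Theorem~\ref{thm:udg}, the ball $B_u(R)$ around the centre node on the middle row already requires at least $R/4+1$ balls of radius $R/2$ to be covered, so its doubling dimension is lower bounded by $R/4$. Taking $R=4(\beta+1)$ therefore isolates a configuration of doubling dimension at least $\beta+1$. This configuration is a truncation of the grid to a ball of constant radius; it uses $b=O(R^2)=O(\beta^2)$ points, all lying in a square of side $\gamma r$ with $\gamma=O(R)=O(\beta)$ a constant. To ensure the truncation does not corrupt the covering argument, I would keep a constant margin (retaining the pattern out to radius $2R$) so that every shortest path of length $\le R$ inside the window coincides with the one in the infinite graph; the counting of Theorem~\ref{thm:udg} then applies verbatim to the induced subgraph.

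\textbf{Steps 2--3 (robustness and probability).} Choose the grid spacing $s\in(r/\sqrt2,\,r)$ of the pattern (e.g.\ $s=0.8\,r$) so that every edge distance ($s$) lies below $r$ while the nearest non-edge distances ($s\sqrt2$ and $2s$) lie above $r$, each with a constant margin. Around each pattern point $p_j$ place a disjoint slot $S_j$, a disk of radius $\delta r$ with $\delta$ a small constant; the margins guarantee that if exactly one node $q_j$ falls in each $S_j$ then $\|q_i-q_j\|<r \iff \|p_i-p_j\|<r$, so the induced graph on $\{q_j\}$ is isomorphic to the pattern and inherits its doubling dimension. Conditioning on the window $Q$ containing exactly $b$ nodes, these are i.i.d.\ uniform on $Q$, and the event that they occupy the $b$ slots one-to-one has probability
\[
p \;=\; b!\,\prod_{j=1}^{b}\frac{\mathrm{area}(S_j)}{\mathrm{area}(Q)}
\;=\; b!\left(\frac{\pi\delta^2 r^2}{\gamma^2 r^2}\right)^{b}
\;=\; b!\left(\frac{\pi\delta^2}{\gamma^2}\right)^{b}\,>\,0,
\]
where the $r^2$ factors cancel, leaving a constant independent of $n$.

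\textbf{Main obstacle.} The delicate part is the robustness argument of Step~2 together with the truncation control of Step~1: I must certify that slot membership alone fixes the entire adjacency pattern (all required edges present \emph{and} all forbidden edges absent), and that passing from the infinite construction to a constant-size induced subgraph does not shorten any of the shortest paths used in the covering lower bound. Both are handled by choosing $s$ bounded away from $r$ and by retaining a constant margin of extra rows and columns; once this is in place, the probability computation is routine. This lemma supplies only the positive-probability seed; the constant-probability event is later amplified over the $\Theta(n/r^2)$ disjoint windows to conclude that \G{n}{r_n} fails to be doubling w.h.p.\ when $r_n<\sqrt{\log n}$.
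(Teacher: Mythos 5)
Your proposal is correct and follows essentially the same route as the paper's proof: both transplant the comb pattern underlying Theorem~\ref{thm:udg} into a constant-size $\gamma r\times\gamma r$ window, make adjacency robust by discretizing positions into constant-size cells (your disk slots of radius $\delta r$ versus the paper's mini-squares of side $r/c$ with the condition $\sqrt{2}(k-2)>c\geq\sqrt{k^2+1}$), and observe that the probability of the $b$ nodes landing one per designated cell is a constant independent of $n$ (your $b!(\pi\delta^2/\gamma^2)^b$ versus the paper's $1/m^b$ with $m=\gamma^2c^2$). The only differences are cosmetic, e.g.\ you take $R=4(\beta+1)$ and reuse the $R/4$ column count directly while the paper re-derives a $\lfloor R/2\rfloor$ branch count with $R=2\beta+2$.
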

\begin{proof}
\begin{figure}[htbp]
	\centering
	\includegraphics[width=1\textwidth,keepaspectratio]{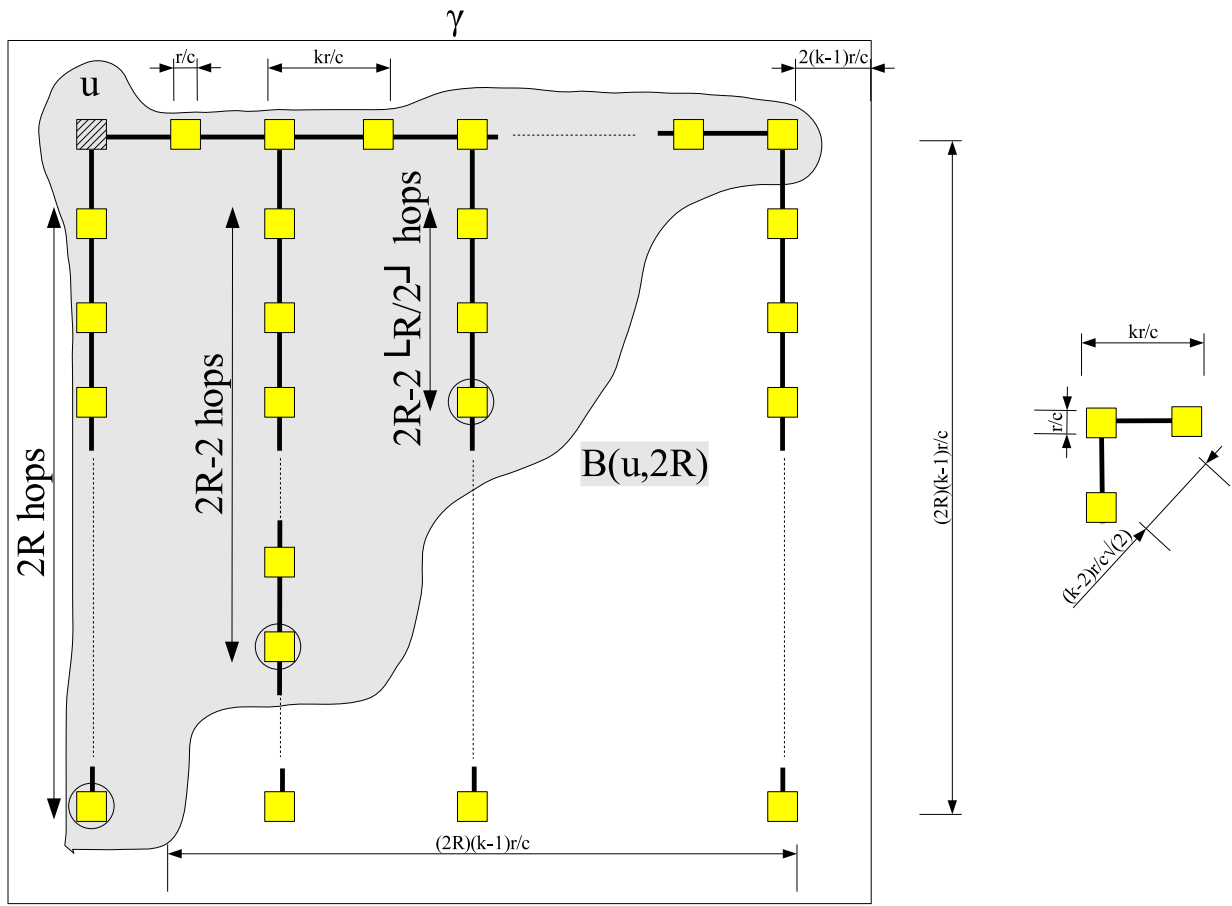}
	\caption{Graph for the proof of lemma \ref{lemma:wg}}
	\label{fig:noDoubleRand}
\end{figure}   
Consider the small square shown in Fig. \ref{fig:noDoubleRand} of side $\gamma r$, where $\gamma$ is a constant independent of $n$ to be specified later. Subdivide the small square further into mini-squares of side $r/c$. Choose the constant $c$ such that there exists a constant $k$ satisfying $\sqrt{2}(k-2)>c\geq\sqrt{k^2+1}$. Under these conditions, two nodes in mini-squares separated by $(k-2)$ other mini-square will be connected, but not mini-squares $r(k-2)\sqrt{2}/c$ apart (see right hand side of Fig. \ref{fig:noDoubleRand}). Consider now the graph on the left hand side of Fig. \ref{fig:noDoubleRand}. Assume that each full (colored) mini-square contains exactly one node. We now focus on the ball $\B{\mathcal{G}}{u}{2R}$ and will lower bound the number of balls of radius $R$ necessary to cover it. On the $\floor{R/2}$ first vertical branches from the left, the last node of the branch inside that ball (circled) must be covered by a ball of radius $R$ centered on the same branch. This is clear since the length of the branch is larger than $R$. Consequently, the doubling dimension of this graph is at least $\floor{R/2}\geq\frac{R-1}{2}$. We want the doubling dimension to be larger than $\beta$, which can be easily achieved by choosing $R$ such that $\frac{R-1}{2}>\beta$. Let $R=2\beta+2>2\beta+1$. Further, we can now set $\gamma=(2R+5)(k-1)/c=(4\beta+9)(k-1)/c$ and $b=(2R+1)\ceil{\frac{2R+1}{2}+1}=(4\beta+5)\ceil{\frac{4\beta+5}{2}+1}$. This ensures that the doubling dimension is strictly larger than $\beta$.
\par
It remains to be shown that when such a small square contains $b$ nodes, the graph constructed above occurs with probability $p>0$. The number $m$ of mini-squares contained in a small square of side $\gamma r$ is $m=\frac{\gamma^2 r^2}{r^2/c^2}=\gamma^2 c^2$ which is constant. Each node can fall in any of the $m$ squares with equal probability. Hence, all $m^b$ configurations are equiprobable and $p=\frac{1}{m^b}>0$. 
\end{proof}
We number the small squares from $1$ to $m=\frac{n}{(\gamma r)^2}=\frac{n}{\gamma^2 \log^{1-\theta}{n}}$ and denote by $X_i^{b}$ the indicator variable that takes value $1$ when small square $i$ contains exactly $b$ nodes. 
\begin{lemma}
\label{lemma:los}
There are at least $n^{1/2}$ squares containing $b$ nodes with probability at least $(1-O(\frac{1}{e^{n^{0.25}}}))$ for n sufficiently large
\end{lemma}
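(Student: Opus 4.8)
The plan is to treat this as a balls-into-bins concentration statement. Write $f=\sum_{i=1}^{m}X_i^{b}$ for the number of small squares holding exactly $b$ nodes, and set $\mu=\E{f}$; recall that $m=\frac{n}{\gamma^{2}(\log n)^{1-\theta}}$ and that $\gamma$ and $b$ are the \emph{constants} fixed in Lemma~\ref{lemma:wg}. Since the $n$ node positions are independent and uniform, the count of nodes in any fixed square is $\mathrm{Binomial}(n,p_s)$ with $p_s=\frac{(\gamma r)^{2}}{n}=\frac{\gamma^{2}(\log n)^{1-\theta}}{n}$, so $\pr{X_i^{b}=1}={n\choose b}p_s^{b}(1-p_s)^{n-b}$ for every $i$, and $\mu=m\,\pr{X_i^{b}=1}$.

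First I would lower-bound $\mu$. With the Poisson-type estimates ${n\choose b}p_s^{b}=\frac{(\gamma^{2}(\log n)^{1-\theta})^{b}}{b!}(1+o(1))$ and $(1-p_s)^{n-b}=e^{-\gamma^{2}(\log n)^{1-\theta}}(1+o(1))$, together with the identity $e^{-\gamma^{2}(\log n)^{1-\theta}}=n^{-\gamma^{2}(\log n)^{-\theta}}$, this yields
\[
\mu=\Theta\!\left((\log n)^{(1-\theta)(b-1)}\;n^{\,1-\gamma^{2}(\log n)^{-\theta}}\right).
\]
Because $\theta>0$ we have $(\log n)^{-\theta}\to 0$, so the exponent of $n$ tends to $1$ and $\mu=n^{1-o(1)}$; in particular $\mu\ge n^{3/4}\ge 2n^{1/2}$ for all sufficiently large $n$. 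This estimate is the step I expect to be the main obstacle: one must verify that the $e^{-\gamma^{2}(\log n)^{1-\theta}}$ factor costs only a sub-polynomial $n^{-o(1)}$ and not an actual power of $n$, and this is precisely where the hypothesis $r=(\log n)^{(1-\theta)/2}<\sqrt{\log n}$, i.e.\ $\theta>0$, is used.

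For the concentration I would invoke the bounded-difference (McDiarmid) inequality with the $n$ node positions as the independent inputs. Displacing a single node decreases the count of its old square and increases that of its new square, each by one, so $f$ changes by at most $2$; hence every bounded-difference constant equals $c_j=2$ and $\sum_j c_j^{2}=4n$. McDiarmid then gives, for any $t>0$,
\[
\pr{f\le \mu-t}\le \exp\!\left(-\frac{2t^{2}}{4n}\right)=\exp\!\left(-\frac{t^{2}}{2n}\right).
\]
The advantage of this route is that it sidesteps the negative dependence among the indicators $X_i^{b}$ entirely, relying only on independence of the node \emph{positions}.

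Finally I would take $t=\mu-n^{1/2}$. Since $\mu\ge 2n^{1/2}$ we get $t\ge \mu/2\ge n^{3/4}/2$, whence $\frac{t^{2}}{2n}\ge \frac{n^{3/2}/4}{2n}=\frac{n^{1/2}}{8}\ge n^{1/4}$ for $n$ large. Therefore $\pr{f<n^{1/2}}\le e^{-n^{1/2}/8}=O(e^{-n^{0.25}})$, which is the asserted bound. An alternative to the bounded-difference step is to Poissonize (replace the deterministic node count by a $\mathrm{Poisson}(n)$ number of points, so that the per-square counts become independent), apply a Chernoff bound to the resulting sum of independent Bernoulli indicators, and then de-Poissonize; this reaches the same conclusion but adds a de-Poissonization step, so I would prefer the McDiarmid route.
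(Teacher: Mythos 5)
Your proof is correct and follows essentially the same route as the paper: a lower bound on $\E{F}$ of the form $n^{1-o(1)}$ using $\theta>0$, followed by a bounded-difference concentration argument over the $n$ independent node placements (the paper phrases it as Azuma--Hoeffding on the Doob martingale $Z_i=\E{F\mid S_1,\dots,S_i}$, which is exactly McDiarmid). Your Lipschitz constant of $2$ is in fact the more careful choice --- the paper claims $1$, but relocating a node can flip the indicators of both its old and new squares --- and this only changes the constant in the exponent, so the stated bound still holds.
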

\begin{proof}
\[
\begin{array}{ll}
\E{X}&=\E{\sum_{i=1}^{m}X_i^b}\\
&=\sum_{i=1}^{m}\pr{X_i^b}\\
&=\sum_{i=1}^{m}{n\choose b}(\frac{1}{m})^b(1-\frac{1}{m})^{n-b}\\
&\geq m (\frac{n}{b})^b(\frac{1}{m})^b(1-\frac{1}{m})^n\\
&\geq \frac{n}{b^b}(\gamma^2\log^{1-\theta}{n})^{b-1}(1-\frac{1}{m})^{m\gamma^2\log^{1-\theta}{n}}\\
&\geq \frac{n}{b^b}(\gamma^2\log^{1-\theta}{n})^{b-1}\frac{1}{2^{2\gamma^2\log_2^{1-\theta}{n}/\log_2^{1-\theta}{e}}}\\
&\geq O(n^{1-O(\frac{1}{\log^{\theta}{n}})})\\
&\geq O(n^{\delta})
\end{array}
\]
where $\delta\geq\frac{7}{8}$ for $n$ sufficiently large, since $\theta>\zeta$.\par
Let $S_i$ be the random variable representing the small square into which the $i^{th}$ node falls. Let $F$ be the number of small squares containing exactly $b$ nodes after all nodes have been placed. Then the sequence $Z_i=\E{F|S_1,...,S_i}$ is a Doob Martingale. One can show that $F=f(S_1,S_2,...,S_n)$ satisfies the Lipschitz condition with bound $1$. Indeed, changing the placement of the $i^{th}$ ball can only modify the value of $F$ by at most $1$. We therefore obtain:
\[
\pr{|F-\E{F}|\geq n^{5/8}}\leq 2e^{-2n^{10/8-1}}=2\frac{1}{e^{2n^{1/4}}}
\]
by the Azuma-Hoeffding inequality. Consequently,
\[
\begin{array}{ll}
\pr{F<n^{1/2}}&<\pr{F<\underbrace{\E{F}-n^{5/8}}_{=n^{7/8}-n^{5/8}>n^{1/2}}}\\
&\leq2\frac{1}{e^{2n^{1/4}}}\leq 2\frac{1}{e^{n^{1/4}}}
\end{array}
\]
and 
\[
\pr{F\geq n^{1/2}}\geq (1-2\frac{1}{e^{n^{1/4}}})
\]
\end{proof}
It now remains to show that in this regime, \G{n}{r} are not doubling with high probability.
\begin{theorem}
\G{n}{(\log{n})^{\frac{1}{2}-\frac{\theta}{2}}}, where $\theta\in \left.\right]\zeta,1\left[\right.$ and $\zeta$ is a constant such that $0<\zeta<1$, are not \emph{doubling} with high probability. 
\end{theorem}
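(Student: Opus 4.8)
The plan is to \emph{amplify} the two preceding lemmas. Lemma \ref{lemma:wg} shows that any single small square of side $\gamma r$ holding exactly $b$ nodes realizes the bad configuration (a local structure forcing doubling dimension at least $\beta+1$) with some constant probability $p>0$, while Lemma \ref{lemma:los} shows that at least $n^{1/2}$ such squares exist w.h.p. Because the squares are disjoint and the node positions are independent and uniform, the bad-configuration events become independent once we condition on the occupancy counts, so among the many candidate squares at least one will almost surely realize the bad configuration. This is the whole idea; the rest is a second-moment-free union/product estimate.

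Concretely, I would first fix an \emph{arbitrary} target constant $\beta$ and let $\gamma,b,p$ be the constants furnished by Lemma \ref{lemma:wg} for this $\beta$ (these depend only on $\beta$, not on $n$). Next I would condition on the count vector recording how many of the $n$ nodes fall into each of the $m=n/(\gamma r)^2$ squares; given this vector, the positions inside each square are i.i.d.\ uniform and independent across distinct squares. Restricting attention to the squares that contain exactly $b$ nodes (of which there are at least $n^{1/2}$ by Lemma \ref{lemma:los}), each one independently realizes the bad configuration with probability at least $p$, so the probability that \emph{none} of them does is at most $(1-p)^{n^{1/2}}$. Adding the failure probability $O(e^{-n^{1/4}})$ of Lemma \ref{lemma:los}, the chance that no square realizes the bad configuration is at most $O(e^{-n^{1/4}})+(1-p)^{n^{1/2}}$, which decays faster than any inverse polynomial in $n$. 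Hence w.h.p.\ some square realizes the configuration, so $\mathcal{G}(n,r)$ contains a ball $\B{\mathcal{G}}{u}{2R}$ that cannot be covered by $\beta$ balls of radius $R$, i.e.\ its doubling dimension exceeds $\beta$.

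To finish, I would observe that since $\beta$ was an arbitrary constant, the argument shows that for \emph{every} fixed constant $M$ the doubling dimension of $\mathcal{G}(n,(\log n)^{(1-\theta)/2})$ exceeds $M$ w.h.p. Equivalently, no constant independent of $n$ can bound the doubling dimension, which is exactly the assertion that the graph is not \emph{doubling} w.h.p. Note that one must keep $\beta$ fixed while sending $n\to\infty$ (so that $\gamma,b,p$ stay constant); one does not let $\beta$ grow with $n$.

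The step I expect to be most delicate is ensuring that the high doubling dimension of the \emph{local} subgraph genuinely transfers to the full graph $\mathcal{G}$, rather than being an artifact of viewing the square in isolation: a priori, nodes lying outside the bad square could create shortcuts that shrink the pairwise distances among the ``branch-tip'' nodes and let a single radius-$R$ ball cover several of them at once. This is precisely why Lemma \ref{lemma:wg} separates the branches by empty columns wide enough that no edge can bridge them; the point requiring care is to confirm that this separation keeps the relevant branch tips more than $R$ apart in the metric of $\mathcal{G}$ itself, so that the covering lower bound of Lemma \ref{lemma:wg} — already stated for the ball $\B{\mathcal{G}}{u}{2R}$ in $\mathcal{G}$ — is a statement about the full graph and not merely about the isolated square. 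By comparison, the conditioning step that yields independence of the bad-configuration events across squares is routine.
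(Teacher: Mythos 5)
Your proposal matches the paper's argument essentially step for step: fix an arbitrary constant $\beta$, invoke Lemma \ref{lemma:wg} for the constant success probability $p$ per $b$-occupied square, invoke Lemma \ref{lemma:los} for the existence of at least $n^{1/2}$ such squares w.h.p., and use conditional independence across disjoint squares to bound the failure probability by $(1-p)^{n^{1/2}}$ plus the failure probability of Lemma \ref{lemma:los}, concluding that no constant bounds the doubling dimension. This is the same decomposition and the same union/product estimate as in the paper, so the proposal is correct and not a different route.
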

\begin{proof}
By Lemma \ref{lemma:wg}, for any constant $\beta$, a small square area of side $\gamma r$ with $b$ nodes contains a graph of doubling dimension $>\beta$ with probability $p>0$. By Lemma \ref{lemma:los}, there are $n^{1/2}$ such small squares containing $b$ nodes w.h.p. Let $F$ denote the number of small squares containing exactly $b$ nodes. Consequently, the probability that at least one of this squares contains a graph of doubling dimension $>\beta$ is given by:
\[
\begin{array}{ll}
\pr{\mbox{not doubling}}&=\sum_{j=1}^{m}\pr{\mbox{not doubling}|F=j}\pr{F=j}\\
&\geq (1-O(\frac{1}{e^{n^{0.25}}})) \sum_{j=n^{1/2}}^{m} (1-(1-p)^{j})\\
&\geq (1-(1-p)^{n^{1/2}})(1-O(\frac{1}{e^{n^{0.25}}}))\\
&\geq (1-(1-p)^{n^{1/2}})^2 \\
&\geq (1-\frac{2}{x^{O(n)}})
\end{array}
\] 
where $x=(\frac{1}{1-p})>1$. Consequently, with probability at least $(1-\frac{2}{x^{O(n)}})$, there exists no constant which bounds the doubling dimension of $\G{n}{(\log{n})^{\frac{1}{2}-\frac{\theta}{2}}}$.
\end{proof}

\end{document}